\newif\ifmapx
\edef\jobnametmp{\expandafter\string\csname isit14_paper_apx\endcsname}
\edef\jobnameapx{\expandafter\mkillslash\jobnametmp}
\edef\jobnameexpand{\jobname}
\newtheorem{theorem}{Theorem}
\newtheorem{proposition}[theorem]{Proposition}
\newtheorem{lemma}[theorem]{Lemma}
\theoremstyle{definition}
\newtheorem{defn}{Definition}
\newtheorem{remark}{Remark}
\newcommand{\Var}{\mathrm{Var}}
\newcommand{\E}{\mathbb{E}}
\newcommand{\Cov}{\mathrm{Cov}}
\newcommand{\tr}{\mathrm{tr}}
\newcommand{\inprod}[2]{{\langle #1, #2 \rangle}}
\newcommand{\matc}{\ensuremath{\mathcal{C}}}
\newcommand{\matx}{\ensuremath{\mathcal{X}}}
\newcommand{\maty}{\ensuremath{\mathcal{Y}}}
\newcommand{\matn}{\ensuremath{\mathcal{N}}}
\newcommand{\matp}{\ensuremath{\mathcal{P}}}
\newcommand{\mreals}{\ensuremath{\mathbb{R}}}
    \newcommand{\eqref}[1]{~(\ref{#1})}
    \def\mod{\mathop{\rm mod}}
\newcommand{\mat}[1]{#1}
\newcommand{\vecc}[1]{#1}
\def\exp{\mathop{\rm exp}}
\def\tr{\mathop{\rm tr}}
\def\rank{\mathop{\rm rank}}
\def\cov{\mathop{\rm cov}}
\def\EE{\mathbb{E}\,}
\def\Var{\mathrm{Var}}
\def\Cov{\mathrm{Cov}}
\def\PP{\mathbb{P}}
\def\eqdef{\stackrel{\triangle}{=}}
\def\eqdist{\stackrel{d}{=}}
\def\dperp{\perp\!\!\!\perp}
\def\unifto{\mathop{{\mskip 3mu plus 2mu minus 1mu%
    \setbox0=\hbox{$\mathchar"3221$}%
    \raise.6ex\copy0\kern-\wd0%
    \lower0.5ex\hbox{$\mathchar"3221$}}\mskip 3mu plus 2mu minus 1mu}}
\def\simleq{{{\mskip 3mu plus 2mu minus 1mu%
    \setbox0=\hbox{$\mathchar"013C$}%
    \raise.2ex\copy0\kern-\wd0%
    \lower0.9ex\hbox{$\mathchar"0218$}}\mskip 3mu plus 2mu minus 1mu}}
\def\simleq{\lesssim}
\def\simgeq{{{\mskip 3mu plus 2mu minus 1mu%
    \setbox0=\hbox{$\mathchar"013E$}%
    \raise.2ex\copy0\kern-\wd0%
    \lower0.9ex\hbox{$\mathchar"0218$}}\mskip 3mu plus 2mu minus 1mu}}
\def\simgeq{\gtrsim}
\begin{document}


\title{Coherent multiple-antenna block-fading channels at finite blocklength}
\author{Austin Collins and Yury Polyanskiy
\thanks{Authors are with the Department
    of Electrical Engineering and Computer Science, MIT, Cambridge, MA 02139 USA.
    \mbox{e-mail:~{\ttfamily \{austinc,yp\}@mit.edu}.}}%
\thanks{
This material is based upon work supported by the National Science
    Foundation CAREER award under grant agreement CCF-12-53205, by the NSF grant CCF-17-17842
and by the Center for Science of Information (CSoI),
an NSF Science and Technology Center, under grant agreement CCF-09-39370.
}%
}

\maketitle

\thispagestyle{plain}
\pagestyle{plain}

\begin{abstract}
In this paper we consider a channel model that is often used to describe the mobile wireless scenario: multiple-antenna
additive white Gaussian noise channels subject to random (fading) gain with full channel state information at the
receiver.  Dynamics of the fading process are approximated by a piecewise-constant process (frequency non-selective
isotropic block fading).  This work addresses the finite blocklength fundamental limits of this channel model.
Specifically, we give a formula for the channel dispersion -- a quantity governing the delay required to achieve
capacity. The multiplicative nature of the fading disturbance leads to a number of interesting technical difficulties that
required us to enhance traditional methods for finding the channel dispersion. Alas, one difficulty remains: 
the converse (impossibility) part of our result holds under an extra constraint on the
growth of the peak-power with blocklength. 

Our results demonstrate, for example, that while capacities of $n_t\times n_r$ and $n_r \times
n_t$ antenna configurations coincide (under fixed received power), the coding delay can be sensitive to this
switch. For example, at the received SNR of $20$ dB the $16\times 100$ system achieves capacity with codes of length (delay)
which is only $60\%$ of the length required for the $100\times 16$ system. Another
interesting implication is that for the MISO channel, the dispersion-optimal coding schemes require employing orthogonal
designs such as Alamouti's scheme -- a surprising observation considering the fact that Alamouti's scheme was designed
for reducing demodulation errors, not improving coding rate. Finding these dispersion-optimal coding schemes naturally gives 
a criteria for producing orthogonal design-like inputs in dimensions where orthogonal designs do not exist.
\end{abstract}


\section{Introduction}

Given a noisy communication channel, the maximal cardinality of a codebook of
blocklength $n$ which can be decoded with block error probability  no greater than
$\epsilon$ is denoted as $M^*(n,\epsilon)$. The evaluation of this function -- the fundamental performance limit 
of block coding -- is alas computationally impossible for most channels of interest. As a resolution of this
difficulty~\cite{PPV08} proposed a closed-form normal approximation, based on the asymptotic expansion:
\begin{equation}\label{eq:dispersion}
    \log M^*(n,\epsilon) = nC -\sqrt{nV} Q^{-1}(\epsilon) + O(\log n)\,,
\end{equation}
where the capacity $C$ and dispersion $V$ are two intrinsic characteristics of the channel and $Q^{-1}(\epsilon)$ is the inverse of the
$Q$-function\footnote{As usual, 
$ Q(x) = \int_x^\infty {1\over \sqrt{2\pi}} e^{-t^2/2} \,dt\,.$}. 
One immediate consequence of the normal approximation is an estimate for
the minimal blocklength (delay) required to achieve a given fraction $\eta$ of the channel capacity:
\begin{equation}\label{eq:minblock}
     n \simgeq \left({Q^{-1}(\epsilon)\over 1-\eta}\right)^2 {V\over C^2}\,.
\end{equation}
Asymptotic expansions such as~\eqref{eq:dispersion} are
rooted in the central-limit theorem and have been known classically for discrete memoryless
channels~\cite{RD61,VS64} and later extended in a wide variety of directions; see the surveys in~\cite{PV13-isit,tan2014asymptotic}. 

The fading channel is the centerpiece of the theory and practice of wireless communication, and hence there are 
many slightly different variations of the model: differing assumptions on the dynamics and distribution of the
fading process, antenna configurations, and channel state knowledge.  
The capacity of the fading channel was found independently by Telatar~\cite{ET99} and Foschini and
Gans~\cite{foschini1998limits} for the case of Rayleigh fading and channel state information available at the receiver
only (CSIR) and at both the transmitter and receiver (CSIRT).  Motivated by the linear gains promised by capacity
results, space time codes were introduced to exploit multiple antennas, most notable amongst them is Alamouti's
ingenious orthogonal scheme~\cite{alamouti1998simple} along with a generalization of Tarokh, Jafarkhani and Calderbank~\cite{tarokh1999space}.
Motivated by a recent surge of orthogonal frequency division (OFDM) technology, this paper focuses on an isotropic channel
gain distribution, which is piecewise independent (``block-fading'') and assume full channel state information available
at the receiver (CSIR). This work describes finite blocklength effects incurred by the fading on the fundamental
communication limits. 

Some of the prior work on similar questions is as follows. Single antenna channel dispersion was computed in~\cite{polyanskiy11-08a} for a more
general stationary channel gain process with memory. In~\cite{yang12-09} finite-blocklength effects  are explored for the
non-coherent block fading setup. Quasi-static fading  channels in the general MIMO setting have been thoroughly
investigated  in~\cite{YDKP13-qsmimo}, showing that the expansion~\eqref{eq:dispersion}  changes dramatically (in
particular the channel dispersion term becomes    zero); see also~\cite{molavianJazi13-10} for evaluation of the bounds.
Coherent
quasi-static channel has been studied in the limit of infinitely  many antennas in~\cite{hoydis13} appealing to
concentration properties of  random matrices. Dispersion for lattices (infinite constellations) in      fading channels
has been investigated in a sequence of works,              see~\cite{vituri2013dispersion} and references. Note also
that there are   some very fine differences between stationary and block-fading channel     models, cf.~\cite[Section
4]{AL03}.  The minimum energy to send $k$ bits over a MIMO channel for both the coherent and non-coherent case was
studied  in~\cite{yang2015minimum}, showing the latter requires orders of magnitude larger latencies. \cite{yang2014optimum} investigates the problem of power control with an average
power constraint on the codebook in the    quasi-static fading channel with perfect CSIRT. A novel achievability bound
was found and evaluated for the fading channel with CSIR in~\cite{YAGP16-bb}. Parts of this work have previously appeared
in~\cite{collins2014miso,CP16-mimobf-isit}.

The paper is organized as follows. In
Section~\ref{sec:main_res} we describe the channel model and state all our main results formally. 
Section~\ref{sec:prelims} characterizes 
capacity achieving input/output distributions (caid/caod, resp.) and evaluates moments of the information density. Then
in Sections~\ref{sec:ach} and~\ref{sec:conv} we prove the achievability and converse parts of our (non rank-1) results,
respectively. Section~\ref{sec:MISO} focuses on the special case of when the matrix of channel gains has rank 1.
Finally, Section~\ref{sec:discussion} contains a discussion of numerical results and the behavior of channel dispersion as a function of the number of antennas.

The numerical software used to compute the achievability bounds, dispersion and normal approximation in this work can be found online under the Spectre project~\cite{Spectre-github}.


\section{Main Results}\label{sec:main_res}

\subsection{Channel Model}

The channel model considered in this paper is the frequency-nonselective coherent real block fading (BF) discrete-time
channel with multiple transmit and receive antennas (MIMO) (See~\cite[Section II]{BPS98} for background on this model).
We will simply refer to it as the \textit{MIMO-BF channel}, which we formally define here.
 Given parameters $n_t,n_r,P,T$ as follows: let $n_t\ge 1$ be the number of transmit antennas, $n_r\ge 1$ be the number of receive antennas, and $T\ge 1$ be the coherence time of the channel. The input-output relation at block $j$ (spanning time instants $(j-1)T+1$ to $jT$) with $j=1,\ldots,n$ is given by
\begin{equation}\label{eq:channel}
        \mat{Y}_j = \mat{H}_j \mat{X}_j + \mat{Z}_j\,,
\end{equation}
where $\{\mat{H}_j, j=1,\ldots\}$ is a $n_r \times n_t$ matrix-valued random fading process, $\mat{X}_j$ is a $n_t
\times T$ matrix channel input, $\mat{Z}_j$ is a $n_r \times T$ Gaussian random real-valued matrix with independent
entries of zero mean and unit variance, and $\mat{Y}_j$ is the $n_r \times T$ matrix-valued channel output. The process
$\mat{H}_j$ is assumed to be i.i.d. with isotropic distribution $P_{\mat{H}}$, i.e. for any orthogonal matrices $U \in
\mathbb{R}^{n_r\times n_r}$ and $V \in \mathbb{R}^{n_t\times n_t}$, both $U\mat{H}$ and $\mat{H}V$ are equal in
distribution to $\mat{H}$.  We also assume 
\begin{equation}\label{eq:hnorm}
        \PP[H \neq 0] > 0
\end{equation}
to avoid trivialities.  Note that due to merging channel inputs at time instants $1,\ldots, T$ into one matrix-input, the block-fading channel becomes memoryless. We assume
coherent demodulation so that the channel state information (CSI) $\mat{H}_j$ is fully known to the receiver (CSIR).

An $(nT, M, \epsilon, P)_{CSIR}$ code of blocklength $nT$, probability of error $\epsilon$ and power-constraint $P$ is a pair of maps: the
encoder $f:[M]\to (\mreals^{n_t\times T})^n$ and the decoder $g:(\mreals^{n_r \times T})^n \times (\mreals^{n_r \times n_t})^n \to [M]$ satisfying the probability of error constraint
\begin{equation}\label{eq:perr}
        \PP[W \neq \hat W] \le \epsilon\,. 
\end{equation}  
on the probability space 
$$ W \to \mat{X}^n \to (\mat{Y}^n, \mat{H}^n) \to \hat W\,,$$
where the message $W$ is uniformly distributed on $[M]$, $\mat{X}^n = f(W)$, $\mat{X}^n\to(\mat{Y}^n, \mat{H}^n)$ is as
described in~\eqref{eq:channel}, and $\hat W=g(\mat{Y}^n,\mat{H}^n)$.  In addition the input sequences are required to
satisfy the power constraint:
$$ \sum_{j=1}^{n} \|\mat{X}_j\|_F^2 \le n TP \qquad \PP\mbox{-a.s.}\,,$$
where $\|M\|_F^2 \eqdef \sum_{i,j} M_{i,j}^2$ is the Frobenius norm of the matrix $M$.

Under the isotropy assumption on $P_H$, the capacity $C$ appearing in~\eqref{eq:dispersion} of this channel is given
by~\cite{ET99}
\begin{align}
C(P) &= \frac{1}{2}\E\left[ \log\det\left(I_{n_r} +                        \frac{P}{n_t}\mat{H}\mat{H}^T\right)\right] \label{eq:mimo_capacity}\\
	&= \sum_{i=1}^{n_{\min}} \E\left[C_{AWGN}\left(\frac{P}{n_t}\Lambda_i^2\right) \right]\,,\label{eq:capacity}
\end{align}
where $C_{AWGN}(P) = {1\over 2} \log (1 + P)$ is the capacity of the additive white Gaussian noise (AWGN) channel with SNR
$P$, $n_{\min} = \min(n_r,n_t)$ is the minimum of the transmit and receive antennas, and $\{\Lambda_i^2, i=1,\ldots,
n_{\min}\}$ are eigenvalues of $\mat H \mat H^T$. Note that it is common to think that as $P\to \infty$ the
expression~\eqref{eq:capacity} scales as $n_{\min} \log P$, but this is only true if $\PP[\rank \mat H = n_{\min}]=1$.

The goal of this line of work is to characterize the dispersion of the present channel. Since the channel is
memoryless it is natural to expect, given the results in~\cite{PPV08,polyanskiy11-08a}, that dispersion (for $\epsilon < 1/2$) is given by
\begin{equation}\label{eq:vmindef}
V(P) \eqdef \inf_{P_{\mat X}: I(\mat{X};\mat{Y}|\mat{H})=C} {1\over T}   \E\left[\Var(i(\mat{X};
\mat{Y},\mat{H}) | \mat{X})\right]
\end{equation}
where we denoted (single $T$-block) information density by
\begin{equation}\label{eq:infodens}
	i(x; y,h) \eqdef \log {dP_{\mat{Y},\mat{H}|\mat{X}=x} \over dP^*_{\mat Y,\mat H}}(y,h)  
\end{equation}and $P_{\mat Y,\mat H}^*$
is the capacity achieving output distribution (caod). Justification of~\eqref{eq:vmindef} as the actual (operational)
dispersion, appearing in the expansion of $\log M^*(n, \epsilon)$ is by no means trivial and is the subject of this
work.


\subsection{Statement of Main Theorems}

Here we formally state the main results, then go into more detail in the following sections.  Our first result is an achievability and partial converse bound for the MIMO-BF fading channel for fixed parameters $n_t,n_r,T,P$.

\begin{theorem}\label{thm:mimo_dispersion}
For the MIMO-BF channel, there exists an $(nT,M,\epsilon,P)_{CSIR}$        maximal probability of error code with $0 < \epsilon < 1/2$ satisfying
\begin{align}
\log M \geq nTC(P) - \sqrt{nTV(P)}Q^{-1}(\epsilon) +         o(\sqrt{n})\ .
\end{align}
Furthermore, for any $\delta_n\to0$ there exists $\delta'_n\to 0$ so that  every $(nT, M, \epsilon, P)_{CSIR}$ code with
extra constraint  that $\max_j\|x^j\|_F \leq \delta_n n^{1/4}$, must satisfy
\begin{align}\label{eq:thm1e1}
\log M \leq nTC(P) - \sqrt{nTV(P)}Q^{-1}(\epsilon) + \delta'_n \sqrt{n}
\end{align}
where the capacity $C(P)$ is given by~\eqref{eq:mimo_capacity} and dispersion $V(P)$ by~\eqref{eq:vmindef}.\footnote{For the
explicit expression for $i(x; y, h)$ see~\eqref{eq:mimo_info_density} below.}
\end{theorem}
\begin{proof} This follows from Theorem~\ref{thm:mainach} and Theorem~\ref{thm:mainconv} below. \end{proof}

\begin{remark}
Note that the converse has an extra constraint $\max_j \|x^j\|_F \leq \delta_n n^{1/4}$. Mathematically,  this constraint is needed so that the $n$-fold information information density $i(x^n;Y^n,H^n)$ behaves Gaussian-like, via the Berry-Esseen theorem.  For example, if $x^n$ had $x_{11} = \sqrt{nTP}$ and zeroes in all other coordinates, then one term in the information density would be $O(n)$ while the rest would be $O(1)$, and hence no asymptotic structure would emerge.  All known bounds to obtain the channel dispersion rely on approximating the information density by a Gaussian, and hence a fundamentally different method of analysis is needed to handle the situation where $\max_j \|x^j\|_F \geq \delta_n n^{1/4}$.

Note that to violate this constraint, a significant portion of the power budget must be poured into a single coherent
block, which 1) creates a very large peak-to-average power ratio (PAPR) -- an illegal (for regulating bodies) or
impractical (for power amplifiers) situation, and 2) does a poor job of exploiting the
diversity gain from coding over multiple independent coherent blocks.  Therefore, our converse results are sufficient
from the point of view of any practical system.

In addition, the random codebook used for the achievability (uniform on the power sphere) can be expurgated with
a rate loss of $-\delta_n^2 n^{-\tfrac{1}{2}}$ so that it entirely consists of codewords satisfying $\max_j \|x_j\|_F
\leq \delta_n n^{1/4}$. This is easiest to see by noticing that a standard Gaussian vector $Z^n$ satisfies
$\PP[\|Z^n\|_\infty > \delta_n n^{1/4}] \le e^{-O(\delta_n^2 \sqrt{n})}$. This observation shows that our analysis of the
random coding bound (with spherical codebook) is tight in terms of the dispersion term.
\end{remark}

\begin{remark}
The remainder term $o(\sqrt{n})$ in~\eqref{eq:thm1e1} depends on the
system parameters $(n_t,n_r, T, P_H)$ in a complicated way, which we do not attempt to study here.
\end{remark}

The behavior of dispersion found in Theorem \ref{thm:mimo_dispersion} turns out to depend crucially on whether $\rank(H) \leq 1$ a.s. or not.  When $\rank(H) > 1$, all capacity achieving input distributions (caids) yield the same conditional variance \eqref{eq:vmindef}, yet when $\rank(H) \leq 1$, the conditional variance varies over the set of caids.  The following theorem discusses the case where $\PP[\rank \mat H > 1]>0$. In this case, the dispersion~\eqref{eq:vmindef} can be calculated for the simplest Telatar caid (i.i.d. Gaussian matrix $\mat X$). The following theorem gives full details.

\begin{theorem}\label{thm:mimo_disp_expression}
Assume that $\PP[\rank \mat H > 1]>0$, then $V(P)=V_{iid}(P)$, where
\begin{align}
V_{iid}(P) =\ &T\Var\left(\sum_{i=1}^{n_{\min}}                                  C_{AWGN}\left(\frac{P}{n_t} \Lambda_i^2\right)\right) \nonumber\\
&+ \sum_{i=1}^{n_{\min}}\E\left[V_{AWGN}\left(\frac{P}{n_t}\Lambda_i^2\right)\right]\nonumber\\
&+ \left(\frac{P}{n_t}\right)^2 \left(\eta_1 - \frac{\eta_2}{n_t}\right)
\label{eq:mimo_disp_expression}
\end{align}
where $\{\Lambda_i^2, i=1,\ldots,n_{\min}\}$ are eigenvalues of $\mat{H}\mat{H}^T$, $V_{AWGN}(P) = \frac{\log^2
e}{2}\left(1 - \frac{1}{\left( 1 + P \right)^2}\right)$, and 
\begin{align}
c(\sigma) &  \triangleq {\sigma\over 1 + \frac{P}{n_t}\sigma}\label{eq:cfdef}\\
\eta_1 &\eqdef \frac{\log^2e}{2} \sum_{i=1}^{n_{\min}} 
		\E\left[c^2(\Lambda_i^2)\right]\label{eq:eta1def}\\
\eta_2 &\eqdef \frac{\log^2 e}{2} \left( \sum_{i=1}^{n_{\min}} \E\left[c(\Lambda_i^2)\right] \right)^2\label{eq:eta2def}
\end{align}
\end{theorem}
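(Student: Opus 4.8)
The plan is to establish the two matching inequalities. The bound $V(P)\le V_{iid}(P)$, together with the explicit formula~\eqref{eq:mimo_disp_expression}, will come from evaluating the conditional information variance $\tfrac1T\E_{P_{\mat X}}\!\big[\Var[i(\mat X;\mat Y,\mat H)\mid\mat X]\big]$ at Telatar's i.i.d.\ Gaussian caid (this does not use the rank hypothesis). The reverse bound $V(P)\ge V_{iid}(P)$ will follow by showing that, once $\PP[\rank\mat H>1]>0$, this functional takes the \emph{same} value at every caid, so the infimum in~\eqref{eq:vmindef} is attained at Telatar's.

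For the computation I would start from the explicit density~\eqref{eq:mimo_info_density}: the caod factors as $P^*_{\mat Y,\mat H}=P^*_{\mat Y\mid\mat H}P_{\mat H}$ with $P^*_{\mat Y\mid\mat H=h}$ Gaussian having i.i.d.\ columns of covariance $\Sigma_h\eqdef I_{n_r}+\tfrac P{n_t}hh^T$, and $P_{\mat Y\mid\mat X=x,\mat H=h}$ Gaussian with mean $hx$ and identity column covariance, so $i(x;hx+\mat Z,h)$ is a constant plus linear and quadratic forms in $\mat Z$ that decouple over the eigenmodes of $hh^T$. Writing $\E_{P_{\mat X}}\Var[i\mid\mat X]=\Var[i(\mat X;\mat Y,\mat H)]-\Var_{\mat X}\!\big[\E[i\mid\mat X]\big]$, I would evaluate the first term by the law of total variance, peeling off $\mat H$ first: isotropy of $P_{\mat H}$ gives $\E[i\mid\mat H]=T\sum_i\C(\tfrac P{n_t}\Lambda_i^2)$ (the remaining terms collapse via $\tfrac1{1+\frac P{n_t}\Lambda_i^2}+\tfrac P{n_t}c(\Lambda_i^2)=1$), which contributes the $T\,\Var(\sum_i\C(\tfrac P{n_t}\Lambda_i^2))$ term; then $\E_{\mat H}\Var[i\mid\mat H]$ splits into a pure-noise part and an $\mat X$-randomness part, where combining the averaged linear-in-$\mat Z$ variance with the quadratic-in-$\mat Z$ variance $2\tr((\Sigma_h^{-1}-I)^2)$ through the identity $\tfrac\sigma{(1+\frac P{n_t}\sigma)^2}=c(\sigma)-\tfrac P{n_t}c^2(\sigma)$ yields $\sum_i\E[\V(\tfrac P{n_t}\Lambda_i^2)]$, and the $\mat X$-randomness part $\sum_\ell\Var_{\mat X_\ell}\!\big[(h\mat X_\ell)^T\Sigma_h^{-1}(h\mat X_\ell)\big]$ yields $(\tfrac P{n_t})^2\eta_1$ via $\Var[W^T A W]=2\tr((A\,\Cov(W))^2)$ for Gaussian $W=h\mat X_\ell$. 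The second term is the correction: isotropy of $P_{\mat H}$ forces $\E[i\mid\mat X=x]$ to be affine in $\|x\|_F^2$ alone, whence $\Var_{\mat X}[\E[i\mid\mat X]]$ is proportional to $\Var[\|\mat X\|_F^2]=2TP^2/n_t$ and equals $\tfrac T{n_t}(\tfrac P{n_t})^2\eta_2$. Dividing by $T$ and collecting the three pieces gives~\eqref{eq:mimo_disp_expression}.

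The reverse inequality is where $\PP[\rank\mat H>1]>0$ enters. Two observations. First, every caid shares the caod $P^*_{\mat Y,\mat H}$, so $i(x;y,h)$ is a fixed function which, in $x$, is a degree-two polynomial whose quadratic part involves only one column $x_\ell$ at a time; hence $\Var[i(\mat X;\mat Y,\mat H)]=\sum_{\ell,m}\Cov(i_\ell,i_m)$ and $\E_{\mat X}[(\E[i\mid\mat X])^2]$ — and therefore $\E_{P_{\mat X}}\Var[i\mid\mat X]$ — depend on $P_{\mat X}$ only through the joint laws of \emph{pairs} of columns $(\mat X_\ell,\mat X_m)$. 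Second, the caid characterization of Section~\ref{sec:prelims} shows that under $\PP[\rank\mat H>1]>0$ and isotropy of $P_{\mat H}$, the condition $I(\mat X;\mat Y\mid\mat H)=C$ forces (via Cram\'er's decomposition theorem applied to $h\mat X+\mat Z$ for $\rank h\ge2$, together with the second-moment constraint) each column $\mat X_\ell$ to be $\mathcal N(0,\tfrac P{n_t}I_{n_t})$ and each pair $(\mat X_\ell,\mat X_m)$ to be jointly Gaussian and uncorrelated. Consequently all pairwise moments, and in particular $\Var[\|\mat X\|_F^2]$, agree with their Telatar values, so every caid gives $\tfrac1T\E_{P_{\mat X}}\Var[i\mid\mat X]=V_{iid}(P)$, which proves $V(P)=V_{iid}(P)$.

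The main obstacle is the second observation above: turning ``$I(\mat X;\mat Y\mid\mat H)=C$'' into pairwise Gaussianity of the columns requires an equality analysis of ``Gaussian input maximizes mutual information for fixed covariance'' over the isotropic fading ensemble, plus a measure-theoretic argument (continuity of characteristic functions, Fubini over the Grassmannian of two-planes in $\mathbb{R}^{n_t}$) to lift Gaussianity of two-dimensional projections to Gaussianity of the columns. This is precisely where $\rank\mat H\ge2$ on a positive-probability set is indispensable: when $\mat H$ is rank-one almost surely (e.g.\ the MISO case), caids such as Alamouti-type orthogonal designs escape the pairwise-Gaussianity conclusion and achieve a strictly smaller conditional variance, which is why that regime is treated separately in Section~\ref{sec:MISO}. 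The remaining variance bookkeeping in the second paragraph is lengthy but routine once the eigenmode decoupling and the two $c(\sigma)$-identities are in hand.
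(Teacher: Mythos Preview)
Your proposal is correct and reaches the same conclusion as the paper, but the organization differs in two places worth noting.

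For the computation of $V_{iid}(P)$, you use the decomposition $\E[\Var[i\mid \mat X]]=\Var[i]-\Var[\E[i\mid \mat X]]$ and evaluate each piece at Telatar's input. The paper instead computes $V_1(x)\eqdef\tfrac1T\Var[i(x;\mat Y,\mat H)]$ as an explicit function of $x$ (Proposition~\ref{prop:cond_var_x}), obtaining a constant plus terms in $\|x\|_F^2$, $\|x\|_F^4$, and $\|xx^T\|_F^2$, and only then averages over $\mat X$ (Proposition~\ref{prop:telatar_disp}). Both routes give~\eqref{eq:mimo_disp_expression}; the paper's has the side benefit that the pointwise formula for $V_1(x)$ is reused verbatim in the achievability and converse proofs.

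For the ``all caids give the same value'' step, you argue that $\E[V_1(\mat X)]$ depends only on pairwise \emph{column} laws and then claim these are pinned down (jointly Gaussian, independent) under $\PP[\rank\mat H>1]>0$. The paper takes the more direct route: its caid characterization (Theorem~\ref{thm:caid_conds}, Part~4) states that under the rank hypothesis any caid has pairwise independent \emph{rows} $R_i\sim\mathcal N(0,\tfrac{P}{n_t}I_T)$ and pairwise independent entries, and these two facts immediately fix $\E[\|\mat X\|_F^4]=\sum_{i,j,i',j'}\E[X_{ij}^2X_{i'j'}^2]$ and $\E[\|\mat X\mat X^T\|_F^2]=\sum_{i,j}\E[\langle R_i,R_j\rangle^2]$ to their Telatar values (Proposition~\ref{prop:vminrank2}). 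Your column claim is also true---for any $a,b\in\mathbb R^{n_t}$ one can, via isotropy, pick a rank-$2$ $h$ whose row space contains $\mathrm{span}(a,b)$, so $a^TC_\ell+b^TC_m$ is a linear functional of the Gaussian matrix $h\mat X$---but this is an extra step beyond what the paper actually records, and is unnecessary once the row-based moments are in hand. A minor point: Cram\'er's decomposition theorem is not needed; the paper uses the simpler fact that the Gaussian characteristic function is nowhere zero, so $h\mat X+\mat Z\stackrel d=h\mat G+\mat Z$ already yields $h\mat X\stackrel d=h\mat G$.
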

\begin{proof} This is proved in Proposition~\ref{prop:vminrank2} below.
\end{proof}

\begin{remark}
Each of the three terms in~\eqref{eq:mimo_disp_expression} is non-negative, see Remark~\ref{rmk:v1positive} below for more details.
\end{remark}

In the case where the fading process has rank 1 (e.g. for MISO systems), there are a multitude of caids, and the minimization problem
in~\eqref{eq:vmindef} is non-trivial.  Quite surprisingly, for some values of $n_t,T$, we show that the (essentially
unique) minimizer is a full-rate orthogonal design. The latter were introduced into the field of communications by
Alamouti~\cite{alamouti1998simple} and Tarokh et al~\cite{tarokh1999space}. This shows a somewhat unexpected connection
between schemes optimal from modulation-theoretic and information-theoretic points of view. The precise results are as
follows.

\begin{theorem}\label{thm:rank1_disp}
When $\PP[\text{rank}(H) \le 1] = 1$, we have
\begin{align}
V(P) &= T\Var\left( C_{AWGN}\left(\frac{P}{n_t}\Lambda^2\right)\right)
+ \E\left[ V_{AWGN}\left(\frac{P}{n_t}\Lambda^2\right)\right]\\
&+ \left(\frac{P}{n_t}\right)^2 \left( \eta_1 -
\frac{\eta_2}{n_t^2T} v^*(n_t, T)\right)
\end{align}
where $\Lambda^2$ is the non-zero eigenvalues of $HH^T$, and
\begin{align}\label{eq:v_star_minimization}
v^*(n_t,T) = \frac{n_t^2}{2P^2}\max_{P_X : I(\mat{X};\mat{Y},\vecc{H}) = C}
\Var(\|\mat{X}\|_F^2)
\end{align}
\end{theorem}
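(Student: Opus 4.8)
\emph{Step 1 (structure of caids).} The plan is to reduce the infimum in~\eqref{eq:vmindef} to a one-dimensional optimization over the single scalar functional $\Var(\norm{\mat{X}}_F^2)$, by exploiting that in the rank-$1$ regime the capacity-achieving constraint already pins down every other moment of $\mat{X}$ that feeds into the conditional variance of the information density. I would start from the caid/caod characterization of Section~\ref{sec:prelims}. When $\PP[\rank\mat{H}\le 1]=1$, isotropy of $P_{\mat{H}}$ together with $\PP[\mat{H}\neq 0]>0$ lets one write $\mat{H}=\Lambda\,\mat{U}\mat{V}^T$ with $\Lambda\ge 0$ and $\mat{U},\mat{V}$ uniform on the unit spheres of $\mreals^{n_r}$, $\mreals^{n_t}$, the three mutually independent; conditioned on $\mat{H}$ the caod is $\matn\big(0,(I_{n_r}+\tfrac{P}{n_t}\mat{H}\mat{H}^T)\otimes I_T\big)$, and matching it shows that $P_{\mat{X}}$ is a caid exactly when $v^T\mat{X}\sim\matn\big(0,\tfrac{P}{n_t}I_T\big)$ for (Lebesgue-a.e., hence every) unit vector $v\in\mreals^{n_t}$ (in particular $\E\norm{\mat{X}}_F^2=PT$ is automatic). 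From this I only need two consequences: $\E[\mat{X}\mat{X}^T]=\tfrac{PT}{n_t}I_{n_t}$, and $v^T\mat{X}\mat{X}^Tv\sim\tfrac{P}{n_t}\chi^2_T$ for every unit $v$, whence $\E[(v^T\mat{X}\mat{X}^Tv)^2]=\tfrac{P^2T(T+2)}{n_t^2}$. Integrating the latter over the sphere against $\int(v^TAv)^2\,dv=\tfrac{(\tr A)^2+2\tr(A^2)}{n_t(n_t+2)}$ yields the key relation
\[
\E\norm{\mat{X}}_F^4+2\,\E\norm{\mat{X}\mat{X}^T}_F^2=\tfrac{(n_t+2)P^2T(T+2)}{n_t}\,,
\]
valid for every caid. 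Thus, among moments of $\mat{X}$ up to fourth order, only $\Var(\norm{\mat{X}}_F^2)=\E\norm{\mat{X}}_F^4-P^2T^2$ remains free.

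\emph{Step 2 (conditional variance of $i$).} Substituting $\mat{H}=\Lambda\mat{U}\mat{V}^T$ into the explicit information density~\eqref{eq:mimo_info_density} and abbreviating $c=c(\Lambda^2)$ as in~\eqref{eq:cfdef}, $\beta\eqdef(1+\tfrac{P}{n_t}\Lambda^2)^{-1}=1-\tfrac{P}{n_t}c$, $q\eqdef\norm{v^Tx}^2$, and $W\eqdef u^Tz$ (which is $\matn(0,I_T)$ given $\mat{H}$), a rearrangement gives, up to the base-of-logarithm normalization,
\[
i(x;y,h)=T\,\C\!\big(\tfrac{P}{n_t}\Lambda^2\big)+\tfrac{c}{2}\big(q-\tfrac{PT}{n_t}\big)+\Lambda\beta\,\inprod{v^Tx}{W}-\tfrac{P}{2n_t}c\,\big(\norm{W}^2-T\big)\,,
\]
the last two terms being conditionally centered given $\mat{H}$. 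I would then apply the law of total variance conditioning on $\mat{H}$: the inner (Gaussian-in-$W$) variance equals $\Lambda^2\beta^2q+\tfrac{T}{2}(1-\beta)^2$ because $\inprod{v^Tx}{W}$ and $\norm{W}^2$ are uncorrelated, and the outer variance of $T\,\C(\tfrac{P}{n_t}\Lambda^2)+\tfrac{c}{2}(q-\tfrac{PT}{n_t})$ is evaluated using $\Lambda\perp\mat{V}$. Averaging over a caid $P_{\mat{X}}$, every fourth moment of $\mat{X}$ enters either through the universal combination of Step~1 or through a single residual term proportional to $\E\norm{\mat{X}}_F^4$, while all remaining $\mat{X}$-dependence involves only $\E\norm{\mat{X}}_F^2=PT$. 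After routine collapsing (for instance $\tfrac{P}{n_t}\E[\Lambda^2\beta^2]=\tfrac{P}{n_t}\E[c\beta]$ cancels the cross term produced by $\tfrac{T}{2}(1-\beta)^2$, and the $c^2$-terms assemble into $(\tfrac{P}{n_t})^2\eta_1$), I expect
\begin{align*}
\tfrac{1}{T}\,\E\big[\Var[i(\mat{X};\mat{Y},\mat{H})\mid\mat{X}]\big]&=T\Var\big(\C(\tfrac{P}{n_t}\Lambda^2)\big)+\E\big[\V(\tfrac{P}{n_t}\Lambda^2)\big]\\
&\quad+\big(\tfrac{P}{n_t}\big)^2\eta_1-\tfrac{\eta_2}{2Tn_t^2}\,\Var(\norm{\mat{X}}_F^2)\,,
\end{align*}
where $\eta_1,\eta_2$ are~\eqref{eq:eta1def},~\eqref{eq:eta2def} with the sums reduced to the single non-zero eigenvalue $\Lambda^2$, and the first three (caid-independent) terms coincide with~\eqref{eq:mimo_disp_expression} specialized to that case.

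\emph{Step 3 (optimization) and the main obstacle.} Since $\PP[\mat{H}\neq 0]>0$ makes $\eta_2>0$, the coefficient of $\Var(\norm{\mat{X}}_F^2)$ above is strictly negative, so the infimum over caids is attained by \emph{maximizing} $\Var(\norm{\mat{X}}_F^2)$; feeding in $\max_{P_{\mat{X}}:\,\mathrm{caid}}\Var(\norm{\mat{X}}_F^2)=\tfrac{2P^2}{n_t^2}\,v^*(n_t,T)$ from~\eqref{eq:v_star_minimization} converts the last term into $-\big(\tfrac{P}{n_t}\big)^2\tfrac{\eta_2}{n_t^2T}\,v^*(n_t,T)$, which is exactly the claimed formula. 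I expect the heart of the argument to be Step~1 --- identifying precisely which moments a rank-$1$ caid must match, and establishing the cancellation that renders $\E\norm{\mat{X}}_F^4+2\,\E\norm{\mat{X}\mat{X}^T}_F^2$ caid-independent --- since it is this collapse that turns an a priori infinite-dimensional minimization into the single scalar $v^*(n_t,T)$. A secondary point, taken up in Section~\ref{sec:MISO}, is that the supremum defining $v^*(n_t,T)$ in~\eqref{eq:v_star_minimization} is finite and attained (and, per the theorem, by a full-rate orthogonal design for suitable $n_t,T$).
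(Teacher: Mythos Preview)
Your proposal is correct and follows essentially the same route as the paper: the theorem is proved there as Proposition~\ref{prop:disp_calc}, and the pivot in both arguments is the caid characterization $v^T\mat{X}\sim\matn(0,\tfrac{P}{n_t}I_T)$ for all unit $v$ (Theorem~\ref{thm:caid_conds}, Part~1), which fixes all the $\mat{X}$-moments entering $\E[V_1(\mat{X})]$ except $\Var(\|\mat{X}\|_F^2)$.

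There is one organizational difference worth noting. You make the collapse explicit by deriving the moment identity $\E\|\mat{X}\|_F^4+2\E\|\mat{X}\mat{X}^T\|_F^2=\tfrac{(n_t+2)P^2T(T+2)}{n_t}$ for every caid and then feeding it into the $\eta_3,\eta_4$ terms of $V_1(x)$; the paper instead sidesteps $\E\|\mat{X}\mat{X}^T\|_F^2$ altogether by retreating to the intermediate expression~\eqref{eq:eqe95} (which in rank~1 reduces to a single term) and evaluating $\E[(\|V_1^T\mat{X}\|^2-\tfrac{TP}{n_t})^2\mid V_1]=\tfrac{2TP^2}{n_t^2}$ directly. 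Your identity is a clean standalone statement and makes transparent \emph{why} only one scalar degree of freedom remains; the paper's shortcut is slightly quicker but less illuminating on that point. Either way the algebra lands on $\eta_1(\tfrac{P}{n_t})^2-\tfrac{\eta_2}{2n_t^2T}\Var(\|\mat{X}\|_F^2)$, and Step~3 is identical.
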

\begin{proof} This is the content of Proposition~\ref{prop:disp_calc} below.\end{proof}


The quantity $v^*(n_t,T)$ is defined separately in Theorem \ref{thm:rank1_disp} because it isolates how the dispersion depends on the input distribution.  Unfortunately, $v^*(n_t,T)$ is generally unknown, since the maximization in~\eqref{eq:v_star_minimization}
is over a manifold of matrix-valued random variables.  However, for many dimensions, the maximum can be found by
invoking the Hurwitz-Radon theorem~\cite{JR22}. We state this below to introduce the notation, and expand on it in
Section~\ref{sec:MISO}.

\begin{theorem}[Hurwitz-Radon]\label{thm:radon-hurwitz}
There exists a family of $n\times n$ real matrices $V_1,\hdots, V_k$ satisfying
\begin{align}\label{eq:hr1}
&V_i^T V_i = I_n  \qquad\qquad i = 1,\hdots, k\\
&V_i^TV_j + V_j^T V_i = 0 \quad i\not= j \label{eq:hr2}
\end{align}
if and only if $k \leq \rho(n)$, where
\begin{align}\label{eq:rho_funct}
\rho(2^ab) = 8 \left\lfloor {a\over 4} \right\rfloor + 2^{a\, \mathrm{mod}\, 4},\qquad  a,b\in\mathbb{Z}, \mbox{b--odd}\,.
\end{align}
In particular, $\rho(n)\le n$ and $\rho(n)=n$ only for $n=1,2,4,8$.
\end{theorem}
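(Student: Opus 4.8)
The plan is to strip the family down to a normalized form, recognize it as a module over a real Clifford algebra, and then extract the exact count from the classification of those algebras. First I would normalize: given $V_1,\dots,V_k$ obeying~\eqref{eq:hr1}--\eqref{eq:hr2}, set $A_i\eqdef V_k^TV_i$, so $A_k=I_n$, each $A_i$ with $i<k$ is orthogonal, and~\eqref{eq:hr2} with $j=k$ forces $A_i^T=-A_i$, whence $A_i^2=-A_i^TA_i=-I_n$, while~\eqref{eq:hr2} for $i\ne j<k$ reads $A_iA_j+A_jA_i=0$; conversely any such $A_1,\dots,A_{k-1}$ extends to a Hurwitz--Radon family by appending $I_n$. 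Thus the maximal size of a family in dimension $n$ equals $1+\sigma(n)$, where $\sigma(n)$ is the largest number of pairwise anticommuting orthogonal skew-symmetric $n\times n$ matrices squaring to $-I_n$, and it remains to prove $\sigma(n)=\rho(n)-1$. The key observation is that such a system $A_1,\dots,A_s$ is exactly a real representation on $\mathbb{R}^n$ of the Clifford algebra $\mathcal{C}_s$ generated by $e_1,\dots,e_s$ with $e_i^2=-1$ and $e_ie_j=-e_je_i$ ($i\ne j$); orthogonality of the $A_i$ is not an extra requirement, since one can always average an inner product over the finite group the $A_i$ generate. As $\mathcal{C}_s$ is semisimple, $\mathbb{R}^n$ decomposes into irreducible $\mathcal{C}_s$-modules, and the two irreducibles occurring in the split cases have equal real dimension, so such a representation exists if and only if $d_s\mid n$, where $d_s$ is the real dimension of the (smallest) irreducible $\mathcal{C}_s$-module; hence $\sigma(n)=\max\{s:d_s\mid n\}$.

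Next I would settle the inequality $\sigma(n)\ge\rho(n)-1$ by an explicit construction. For $n=2^a$, iterated Kronecker products of the $2\times2$ blocks $\left(\begin{smallmatrix}0&-1\\1&0\end{smallmatrix}\right)$, $\left(\begin{smallmatrix}0&1\\1&0\end{smallmatrix}\right)$, $\left(\begin{smallmatrix}1&0\\0&-1\end{smallmatrix}\right)$ (the first squaring to $-I$, the other two to $I$, and all three pairwise anticommuting) can be assembled into $\rho(2^a)-1$ anticommuting orthogonal complex structures on $\mathbb{R}^{2^a}$, the periodicity $\rho(16m)=\rho(m)+8$ mirroring the eight extra generators supported on the $16$-dimensional factor $\mathcal{C}_8\cong M_{16}(\mathbb{R})$. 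For general $n=2^ab$ with $b$ odd, tensoring such a family with $I_b$ preserves all relations and $\rho(n)=\rho(2^a)$, giving $\sigma(n)\ge\rho(n)-1$.

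The hard part, and the only place a nontrivial input is needed, is the converse $\sigma(n)\le\rho(n)-1$, which reduces to computing $d_s$. Here I would invoke the standard classification $\mathcal{C}_0\cong\mathbb{R}$, $\mathcal{C}_1\cong\mathbb{C}$, $\mathcal{C}_2\cong\mathbb{H}$, $\mathcal{C}_3\cong\mathbb{H}\oplus\mathbb{H}$, $\mathcal{C}_4\cong M_2(\mathbb{H})$, $\mathcal{C}_5\cong M_4(\mathbb{C})$, $\mathcal{C}_6\cong M_8(\mathbb{R})$, $\mathcal{C}_7\cong M_8(\mathbb{R})\oplus M_8(\mathbb{R})$, $\mathcal{C}_8\cong M_{16}(\mathbb{R})$, together with Bott periodicity $\mathcal{C}_{s+8}\cong\mathcal{C}_s\otimes_{\mathbb{R}}M_{16}(\mathbb{R})$. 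Reading off minimal real module dimensions yields $(d_0,\dots,d_7)=(1,2,4,4,8,8,8,8)$ and $d_{s+8}=16d_s$; in particular each $d_s$ is a power of $2$, with $2$-adic valuations $0,1,2,2,3,3,3,3$ across a block of eight consecutive $s$ and $v_2(d_{s+8})=v_2(d_s)+4$. Writing $n=2^ab$ with $b$ odd, $d_s\mid n\iff v_2(d_s)\le a$, and within the top admissible block $q=\lfloor a/4\rfloor$ the largest usable index is $8\lfloor a/4\rfloor+2^{a\bmod 4}-1$, i.e. $\max\{s:v_2(d_s)\le a\}=\rho(2^a)-1$; hence $\sigma(n)=\rho(n)-1$, proving the equivalence. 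The trailing claims $\rho(n)\le n$ and $\rho(n)=n$ only for $n\in\{1,2,4,8\}$ then follow by inspection of~\eqref{eq:rho_funct}. An alternative to quoting the Clifford classification for the converse is Eckmann's group-theoretic route: the matrices $A_1,\dots,A_s$ together with $-I_n$ generate a finite $2$-group of order $2^{s+1}$, and one computes directly from its real character theory that its minimal faithful real representation has dimension $d_s$; either way, this step is the genuine obstacle while everything else is bookkeeping.
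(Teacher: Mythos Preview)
Your proof outline is correct and follows the standard modern route to the Radon--Hurwitz theorem via Clifford algebras: the normalization $A_i=V_k^TV_i$ reducing the question to anticommuting complex structures, the identification of such a family as a $\mathcal{C}_s$-module on $\mathbb{R}^n$, and the appeal to the classification of real Clifford algebras together with Bott periodicity to pin down the minimal module dimensions $d_s$. The bookkeeping that turns $\max\{s:d_s\mid n\}$ into $\rho(n)-1$ is carried out cleanly.

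However, there is nothing to compare against: the paper does not prove Theorem~\ref{thm:radon-hurwitz}. It is quoted as a classical result (due to Hurwitz and Radon, later reinterpreted by Eckmann and by Atiyah--Bott--Shapiro in the Clifford-algebra language you use), introduced only so that the function $\rho(n)$ is available for the statement of Theorem~\ref{th:vstar}. The paper's contribution in this circle of ideas is not the Radon--Hurwitz count itself but the connection between Hurwitz--Radon families, capacity-achieving input distributions, and the dispersion-optimality of full-rate orthogonal designs (Proposition~\ref{prop:od_to_caid} and the proof of Theorem~\ref{th:vstar}). So your proposal is a valid self-contained proof of a theorem the paper simply cites.
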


For a concrete example, note that Alamouti's scheme is created from a Hurwitz-Radon family for $n = k = 2$.  Indeed, take the matrices
\begin{align*}
V_1 = I_2, \quad
V_2 = 
\left[ \begin{array}{cc}
0 & 1\\
-1 & 0\\
\end{array} \right],
\end{align*}
then Alamouti's orthogonal design can be formed by taking $a V_1 + b V_2$. It turns out that ``maximal'' Hurwitz-Radon families give capacity achieving input distributions for the MIMO-BF channel, see Proposition \ref{prop:od_to_caid} for the details.

The following theorem summarizes our current knowledge of $v^*(n_t,T)$.
\begin{theorem}\label{th:vstar} For any pair of positive integers $n_t, T$ we have
\begin{equation}\label{eq:collins_sum}
    v^*(T, n_t) = v^*(n_t,T) \leq n_tT\min(n_t,T)\,.
\end{equation}
If $n_t \leq \rho(T)$ or $T\le \rho(n_t)$ then a full-rate orthogonal design is dispersion-optimal and
\begin{equation}\label{eq:collins_tight}
		v^*(n_t,T) = n_tT\min(n_t,T)\,.
\end{equation}	
If instead $n_t > \rho(T)$ and $T > \rho(n_t)$ then for a jointly-Gaussian capacity-achieving input $X$ we
have\footnote{So that in these cases the bound~\eqref{eq:collins_sum} is either
non-tight, or is achieved by a non-jointly-Gaussian caid.}
	\begin{equation}\label{eq:collins_nontight}
		{n_t^2\over 2P^2} \Var(\|X\|_F^2) < n_tT\min(n_t,T)\,. 
\end{equation}	
Finally, if $n_t \le T$ and~\eqref{eq:collins_tight} holds, then $v^*(n_t',T)=n_t'^2 T$ for any $n_t' \le n_t$ (and
similarly with the roles of $n_t$ and $T$ switched).  
\end{theorem}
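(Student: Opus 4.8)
The plan is to work with the characterization of rank‑$1$ capacity‑achieving inputs established in Section~\ref{sec:MISO} (cf.\ Section~\ref{sec:prelims}): every caid $\mat X\in\mreals^{n_t\times T}$ can be taken zero‑mean, with each row distributed as $\matn(0,\tfrac{P}{n_t}I_T)$, each column as $\matn(0,\tfrac{P}{n_t}I_{n_t})$, all cross‑covariances between distinct rows (equivalently, between distinct columns) skew‑symmetric, and $w^T\mat X$, $\mat X t$ Gaussian for all $w\in\mreals^{n_t}$, $t\in\mreals^T$. First I would record that $v^*$ is $P$‑free: scaling $\mat X\mapsto c\mat X$ carries a caid at power $P$ to one at power $c^2P$ and scales $\Var\|\mat X\|_F^2$ by $c^4$, leaving $\tfrac{n_t^2}{2P^2}\Var\|\mat X\|_F^2$ fixed, so I may normalize $\tfrac{P}{n_t}=1$. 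Under this normalization $\mat X\mapsto\mat X^T$ is a Frobenius‑isometric bijection from the caids of the $(n_t,T)$‑channel onto those of the $(T,n_t)$‑channel, since the transposed covariance and Gaussianity conditions are exactly the defining ones for the swapped channel; this gives $v^*(n_t,T)=v^*(T,n_t)$.

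For the upper bound~\eqref{eq:collins_sum}, write $\|\mat X\|_F^2=\sum_{i=1}^{n_t}\|r_i\|^2$ over the rows $r_i$. Each $\|r_i\|^2$ is $\tfrac{P}{n_t}\chi^2_T$‑distributed, so $\Var\|r_i\|^2=\tfrac{2TP^2}{n_t^2}$; Cauchy--Schwarz for covariances then gives $\Cov(\|r_i\|^2,\|r_j\|^2)\le\tfrac{2TP^2}{n_t^2}$, and summing over $i,j$ yields $\Var\|\mat X\|_F^2\le 2TP^2$, i.e.\ $v^*(n_t,T)\le n_t^2T$. Decomposing instead over the $T$ columns gives $v^*(n_t,T)\le n_tT^2$, and together $v^*(n_t,T)\le n_tT\min(n_t,T)$. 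This step is elementary and uses only that individual rows and columns of a caid are Gaussian.

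For~\eqref{eq:collins_tight}, by the symmetry I may assume $n_t\le\rho(T)$, so $n_t\le T$ and $\min(n_t,T)=n_t$. Choose a Radon--Hurwitz family $V_1,\dots,V_{n_t}$ of $T\times T$ matrices (Theorem~\ref{thm:radon-hurwitz}), let $g\sim\matn(0,\tfrac{P}{n_t}I_T)$, and let $\mat X$ have $i$‑th row $g^TV_i$. Orthogonality of the $V_i$ gives each row covariance $\tfrac{P}{n_t}I_T$, relation~\eqref{eq:hr2} makes the cross‑covariances $\tfrac{P}{n_t}V_i^TV_j$ skew, the columns then carry the prescribed law, and $\mat X$ is jointly Gaussian; hence $\mat X$ is a caid -- indeed a full‑rate orthogonal design. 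Since $\|\mat X\|_F^2=\sum_i g^TV_iV_i^Tg=n_t\|g\|^2$, one gets $\Var\|\mat X\|_F^2=n_t^2\Var\|g\|^2=2TP^2$, which meets the upper bound, so $v^*(n_t,T)=n_t^2T=n_tT\min(n_t,T)$ and dispersion‑optimality follows from Theorem~\ref{thm:rank1_disp}. The complementary statement~\eqref{eq:collins_nontight} comes from analyzing when the upper bound is met. For a jointly Gaussian caid, $\Var\|\mat X\|_F^2=2\|\Sigma_{\mat X}\|_F^2$ with the diagonal $T\times T$ blocks of $\Sigma_{\mat X}$ equal to $\tfrac{P}{n_t}I_T$; attaining $v^*=n_t^2T$ forces every off‑diagonal block $N_{ij}$ to have all $T$ singular values equal to $\tfrac{P}{n_t}$ (so $T$ is even and $N_{ij}$ is skew‑orthogonal), using the Schur‑complement bound $\|N_{ij}\|_{\mathrm{op}}\le\tfrac{P}{n_t}$ forced by $\Sigma_{\mat X}\succeq0$. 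Writing $\Sigma_{\mat X}=WW^T$ and observing that the blocks $W_i$ must then share one $T$‑dimensional row space produces a Radon--Hurwitz family of size $n_t$ in dimension $T$, whence $n_t\le\rho(T)$ by Theorem~\ref{thm:radon-hurwitz}; symmetrically, tightness of the column bound $v^*\le n_tT^2$ requires $T\le\rho(n_t)$. Therefore, when $n_t>\rho(T)$ and $T>\rho(n_t)$, neither can be tight and~\eqref{eq:collins_nontight} is strict.

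For the last assertion, assume $n_t\le T$ and~\eqref{eq:collins_tight}. Take a caid $\mat X$ with $\Var\|\mat X\|_F^2$ arbitrarily close to $2TP^2$; since $\Var\|\mat X\|_F^2=\sum_{i,j}\Cov(\|r_i\|^2,\|r_j\|^2)$ has $n_t^2$ terms each at most $\tfrac{2TP^2}{n_t^2}$, every such covariance is arbitrarily close to $\tfrac{2TP^2}{n_t^2}$. Let $\mat X'$ consist of the first $n_t'$ rows of $\mat X$: the row laws and the skewness of cross‑covariances are inherited, and $(w')^T\mat X'=(w',0)^T\mat X$ is Gaussian for all $w'$, so $\mat X'$ is a caid for the $(n_t',T)$‑channel at power $\tfrac{n_t'}{n_t}P$, with $\Var\|\mat X'\|_F^2$ arbitrarily close to $(n_t')^2\cdot\tfrac{2TP^2}{n_t^2}$. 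Hence $v^*(n_t',T)\ge (n_t')^2T$, which with~\eqref{eq:collins_sum} is equality; the $n_t\leftrightarrow T$ version follows from the symmetry of the first paragraph. The step I expect to be most delicate is the equality argument above: extracting a genuine Radon--Hurwitz family from ``$\Sigma_{\mat X}\succeq0$ with extremal off‑diagonal blocks,'' and checking that dropping joint Gaussianity produces no further extremizers beyond those already described by this analysis together with the Cauchy--Schwarz equality condition. A routine but necessary side point is that these idealized Gaussian caids violate the almost‑sure power constraint, so the minimization defining $v^*$ must be interpreted in the sense compatible with the truncation/shell arguments underlying Theorems~\ref{thm:mainach}--\ref{thm:mainconv}.
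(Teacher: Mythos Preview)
Your proposal is correct and largely parallels the paper's proof: the symmetry $v^*(n_t,T)=v^*(T,n_t)$ via transposition, the Cauchy--Schwarz upper bound on $\Cov(\|r_i\|^2,\|r_j\|^2)$ (the paper packages this as a separate lemma on i.i.d.\ vectors, but the content is identical), the achievability via the full-rate orthogonal design, and the row-restriction argument for the final clause all match. Your approximation version of the last clause is in fact slightly more careful than the paper's, which tacitly assumes a maximizer exists.

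The one genuinely different step is the equality analysis leading to~\eqref{eq:collins_nontight}. The paper argues probabilistically: equality in Cauchy--Schwarz forces $\|R_i\|^2=\|R_1\|^2$ a.s.\ for every $i$, and for jointly Gaussian rows this implies each $R_i$ is a deterministic linear image of $R_1$; writing $R_i=\xi V_i$ and reading off the caid covariance conditions~\eqref{eq:row1}--\eqref{eq:row2} then yields the Hurwitz--Radon relations~\eqref{eq:hr1}--\eqref{eq:hr2} directly. Your route is linear-algebraic: use $\Var\|X\|_F^2=2\|\Sigma_X\|_F^2$, bound each off-diagonal block by the $2\times2$ Schur complement, deduce at equality that every $N_{ij}$ is (up to scale) skew-orthogonal, and then use $N_{ij}N_{ij}^T=(P/n_t)^2I_T$ together with $W_iW_i^T=(P/n_t)I_T$ in a factorization $\Sigma_X=WW^T$ to force all block rows $W_i$ to share a common $T$-dimensional row space, producing the Hurwitz--Radon family. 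Both arguments are valid; yours avoids the distributional step ``$\|R_i\|^2=\|R_1\|^2$ a.s.\ $\Rightarrow$ $R_i$ linear in $R_1$'' at the cost of the row-space coincidence lemma, while the paper's is shorter once that distributional implication is granted. Your closing worries (non-Gaussian extremizers, power-shell issues) are not needed for the theorem as stated, since~\eqref{eq:collins_nontight} is asserted only for jointly Gaussian caids and $v^*$ is defined purely as an optimization over caids.
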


Note that the $\rho(n)$ function is monotonic in even values of $n$ (and
is $1$ for $n$ odd), and $\rho(n) \to \infty$ along even $n$. Therefore, for any number of transmit antennas $n_t$, there is a large enough $T$ such that $n_t \leq
\rho(T)$, in which case an $n_t \times T$ full rate orthogonal design achieves the optimal
$v^*(n_t,T)$.\\

\section{Preliminary results}\label{sec:prelims}

The section gives some results that will be useful for the achievability and converse proofs (Theorem \ref{thm:mainach} and Theorem \ref{thm:mainconv}, respectively), along with generally aiding our understanding of the MIMO-BF channel at finite blocklength.  The results in this section and where they are used is summarized as follows:
\begin{itemize}
\item Theorem \ref{thm:caid_conds} gives a characterization of the caids for MIMO-BF channel. While all caids give the
same capacity (by definition), when the channel matrix is rank 1, they do not all yield the same dispersion.  This
characterization is needed to reason about the minimizers in \eqref{eq:vmindef}, especially in the rank 1 case.
\item Proposition \ref{prop:cond_var_x} computes variance $V_n(x^n)$ of information density conditioned on the channel input $x^n$.  A key characteristic of the fading channel is that $V_n(x^n)$ varies as $x^n$ moves around the input space, which does not happen in DMC's or the AWGN channel.  This variation in $V_n(x^n)$ poses additional challenges in the converse proof, where we partition the codebook based on thresholding $V_n(x^n)$ (see the proof of Theorem \ref{thm:mainconv} for details).  Knowledge of $V_n(x^n)$ will also allow us to understand when the information density can be well approximated by a Gaussian (see Lemma \ref{lem:thirdmom}).
\item Propositions \ref{prop:vminrank2} and \ref{prop:disp_calc} explicitly give the expression for the dispersion found
from the achievability and converse proofs for the $\rank(H) > 1$ and $\rank(H) \le 1$ case, respectively.  These expressions show how the dispersion depends on $n_t, n_r, T, P$, and are the contents of Theorems \ref{thm:mimo_disp_expression} and \ref{thm:rank1_disp} above.
\end{itemize}

\subsection{Known results: capacity and capacity achieving output distribution}
First we review a few known results on the MIMO-BF channel.  Since the channel is memoryless, the capacity is given by
\begin{align}\label{eq:mimo_capacity_opt}
C &= \frac{1}{T}\max_{P_{\mat X}: \E[\|\mat{X}\|_F^2] \leq TP} I(\mat{X};\mat{Y},\mat{H})\,.
\end{align}
It was shown by Telatar~\cite{ET99} that whenever distribution of $\mat{H}$ is isotropic, the input $\mat{X} \in \mathbb{R}^{n_t \times T}$ with entry $i,j$ given by
\begin{equation}\label{eq:telatar}
	X_{i,j} \stackrel{iid}{\sim} \matn\left(0, {P\over n_t}\right)\,,
\end{equation}
is a maximizer, resulting in the capacity
formula~\eqref{eq:mimo_capacity}. The distribution induced by a caid at the channel output $(\mat{Y}, \mat{H})$ is
called the capacity achieving output distribution (caod). A classical fact is that, while
there may be many caids, the caod is unique, e.g.~\cite[Section 4.4]{PW2016notes}. Thus, from \eqref{eq:telatar} we infer that the caod is given by 
\begin{align} P_{\mat{Y},\mat{H}}^* &\eqdef P_{\mat{H}} P^*_{\mat{Y}|\mat{H}}\,,\label{eq:mimo_caod}\\
   P^*_{\mat{Y}|\mat{H}} &\eqdef \prod_{j=1}^T P_{Y^{(j)}|\mat{H}}^*\,,\\
   P_{Y^{(j)}|\mat{H}=h}^* &\eqdef \mathcal{N}\left( 0,I_{n_r} + \frac{P}{n_t}hh^T\right)\,,
\end{align}

$Y=[Y^{(1)}, \ldots, Y^{(T)}]$, where $Y^{(j)}$ is $j$-th column of $Y$, which, as we specified in (3), is a $n_r \times T$ matrix.


\subsection{Capacity achieving input distributions}\label{sec:caids}

A key feature of the MIMO-BF channel is that it has many caids, whereas many commonly studied channels (e.g. BSC, BEC, AWGN) have a unique caid.  Understanding the set of distributions that achieve capacity is essential for reasoning about the minimizer of the condition variance in~\eqref{eq:vmindef}.  The following theorem characterizes the set of caids for the MIMO-BF channel. Somewhat surprisingly, for the case of rank-1 $\mat{H}$ (e.g. for MISO) there are multiple non-trivial jointly Gaussian caids with different correlation structures. For example, space-time block codes can achieve the capacity in the rank 1 case, but do not achieve capacity when the rank is 2 or greater e.g.~\cite{SSPA00}.

\begin{theorem} \label{thm:caid_conds}\
\begin{enumerate}
\item Every caid $\mat X$ satisfies 
	\begin{equation}\label{eq:pcc_1}
		\forall a\in \mreals^{n_t}, b\in \mreals^T: \qquad \sum_{i=1}^{n_t} \sum_{j=1}^T a_i b_j X_{i,j} \sim
		\matn\left(0, {P\over n_t}\|a\|_2^2 \|b\|_2^2\right)\,.
	\end{equation}	
	If $\PP[\rank \mat{H}\le 1]=1$ then condition~\eqref{eq:pcc_1} is also sufficient for $\mat{X}$ to be caid.
\item Let $\mat{X} = \begin{pmatrix}R_1\cr \cdots\cr R_{n_t}\end{pmatrix}$ be decomposed into rows $R_i$. If $\mat{X}$ is a caid, then each $R_i\sim
\matn(0,{P\over n_t} I_T)$ (i.i.d. Gaussian) and 
\begin{align} \E[R_i^T R_i] &= \frac{P}{n_t}I_T, &i=1,\hdots,n_t\label{eq:row1}\\
	    \E[R_i^T R_j] &= - \E[R_j^T R_i], & i \not= j\,.  \label{eq:row2}
\end{align}
If $\mat{X}$ is jointly zero-mean Gaussian and $\PP[\rank \mat{H}\le 1]=1$, then~\eqref{eq:row1}-\eqref{eq:row2} are sufficient for $\mat{X}$ to be caid.
\item Let $\mat{X} = (C_1 \ldots C_T)$ be decomposed into columns $C_j$. If $\mat{X}$ is a caid, then each $C_j\sim
\matn(0,{P\over n_t} I_{n_t})$ (i.i.d. Gaussian) and 
\begin{align}
    \E[C_iC_i^T] &= \frac{P}{n_t}I_{n_t}, &i=1,\hdots,T\label{eq:col1}\\
    \E[C_i C_j^T] &= - \E[C_j C_i^T], & i \not= j\,. \label{eq:col2}
\end{align}
If $\mat{X}$ is jointly zero-mean Gaussian and $\PP[\rank \mat{H}\le 1]=1$, then~\eqref{eq:col1}-\eqref{eq:col2} are sufficient for $\mat{X}$ to be caid.

\item When $\PP[\rank \mat{H}>1]>0$, any caid has pairwise independent rows:
\begin{equation}\label{eq:rk2_caidrows}
	R_i \dperp R_j \sim \matn\left(0, {P\over n_t} I_{T}\right) \qquad \forall i\neq j
\end{equation}
and in particular
	\begin{equation}\label{eq:rk2_caid}
		X_{i,j} \dperp X_{k,l} \qquad \forall (i,j) \neq (k,l)\,. 
	\end{equation}	
	Therefore, among jointly Gaussian $\mat X$ the i.i.d. $X_{i,j}$ is the unique caid. 
\item There exist non-Gaussian caids if and only if $\PP[\rank \mat{H} \ge \min(n_t,T)] = 0$. 
\end{enumerate}
\end{theorem}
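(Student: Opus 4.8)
Here is how I would approach a proof of Theorem~\ref{thm:caid_conds}.

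The plan is to derive all five parts from one master characterization of caids. Since the caod $P^*_{Y,H}$ is unique, the starting point is that $P_X$ is a caid iff the output distribution it induces equals $P^*_{Y,H}$. To obtain this I would compute $D(P_{Y,H|X=x}\|P^*_{Y,H})$ in closed form: it factors through $H$, and given $H=h$ it equals $\sum_{j=1}^{T}D\bigl(\matn(hx_j,I_{n_r})\,\|\,\matn(0,I_{n_r}+\tfrac{P}{n_t}hh^T)\bigr)$. Expanding the Gaussian-divergence formula, using $h^T(I+\tfrac{P}{n_t}hh^T)^{-1}h=(I+\tfrac{P}{n_t}h^Th)^{-1}h^Th$, and averaging over $H$, the only $x$-dependent term collapses: by isotropy $\E\bigl[(I+\tfrac{P}{n_t}H^TH)^{-1}H^TH\bigr]$ commutes with $O(n_t)$, hence equals $\tfrac1{n_t}\bigl(\sum_i\E[c(\Lambda_i^2)]\bigr)I_{n_t}$ with $c$ as in~\eqref{eq:cfdef}, so $D(P_{Y,H|X=x}\|P^*)=\mathrm{const}+\mathrm{const}'\,\|x\|_F^2$ with $\mathrm{const}'>0$. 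Thus $\E_{P_X}[D(P_{Y,H|X}\|P^*)]$ is affine and strictly increasing in $\E\|X\|_F^2$, so $I(X;Y,H)\le\E[D(\cdot\|P^*)]\le TC$ with equality throughout iff $P_{Y,H}=P^*$ and $\E\|X\|_F^2=TP$; since $P_{Y,H}=P^*$ already forces the power equality, caid $\iff P_{Y,H}=P^*$. Now $P^*_{Y|H=h}$ is a product of Gaussians and $Z$ is Gaussian and independent of $X$, so Cram\'er's decomposition theorem turns ``$hX+Z$ Gaussian for $P_H$-a.e.\ $h$'' into ``$hX\sim\matn(0,\tfrac{P}{n_t}hh^T)^{\otimes T}$ for a.e.\ $h$'', which via the SVD $h=U\Sigma V(h)^T$ is equivalent to $V(h)^TX\sim\matn(0,\tfrac{P}{n_t}I)$ for the right singular vectors $V(h)$. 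By isotropy, conditioned on $\rank H=k$ the frame $V(H)$ is Haar on the Stiefel manifold of $n_t\times k$ partial isometries; since the condition at a given $k$ implies those at all smaller $k$ (pass to Gaussian marginals) and characteristic functions are continuous, the $P_H$-null set can be filled in, yielding, with $r\eqdef\esssup\rank H$,
\[
  P_X\text{ is a caid}\iff \Phi^TX\sim\matn\!\Bigl(0,\tfrac{P}{n_t}I\Bigr)\text{ for all }\Phi\in\mreals^{n_t\times r},\ \Phi^T\Phi=I_r,
\]
equivalently $\E[e^{i\langle\Xi,X\rangle}]=e^{-\frac{P}{2n_t}\|\Xi\|_F^2}$ for every $\Xi$ with $\rank\Xi\le r$.

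Parts~1--4 are then corollaries. With $r=1$ the master statement is, after rescaling unit vectors, exactly~\eqref{eq:pcc_1}, giving necessity in general and sufficiency when $\PP[\rank H\le1]=1$ (part~1). For parts~2 and~3, put $a\in\{e_i,\ e_i\pm e_k\}$ (and $b$ symmetrically) in~\eqref{eq:pcc_1}: the first choice gives $R_i\sim\matn(0,\tfrac{P}{n_t}I_T)$ and~\eqref{eq:row1}, the second forces the symmetric part of $\E[R_i^TR_k]$ to vanish, i.e.\ \eqref{eq:row2}; conversely, for jointly Gaussian $X$ relations~\eqref{eq:row1}--\eqref{eq:row2} make $\Var(\sum a_ib_jX_{ij})=\tfrac{P}{n_t}\|a\|^2\|b\|^2$ (off-diagonal terms are quadratic forms of antisymmetric matrices, hence $0$), so~\eqref{eq:pcc_1} holds and part~1 applies; the column statements are identical. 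For part~4, $\PP[\rank H>1]>0$ means $r\ge2$, so using two columns of $\Phi$ gives $v^TX\dperp w^TX\sim\matn(0,\tfrac{P}{n_t}I_T)$ for every orthonormal pair $(v,w)$, in particular $(e_i,e_j)$, which is~\eqref{eq:rk2_caidrows}; then~\eqref{eq:rk2_caid} and uniqueness of the jointly Gaussian caid are immediate.

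For part~5, if $r=\min(n_t,T)$ then either $\Phi$ runs over all of $O(n_t)$ (if $n_t\le T$) or the constraint $\rank\Xi\le r$ is vacuous (if $T\le n_t$), so the master statement forces $X\sim\matn(0,\tfrac{P}{n_t}I)$ by Cram\'er--Wold and there is no non-Gaussian caid; this is the ``only if'' direction, since $r=\min(n_t,T)$ is equivalent to $\PP[\rank H\ge\min(n_t,T)]>0$. For the ``if'' direction, suppose $r<\min(n_t,T)$, fix an $(r+1)\times(r+1)$ minor $p(\Xi)$ of the matrix variable (a nonzero polynomial vanishing on all of $\{\rank\le r\}$ and at $\Xi=0$), pick $\phi\in C_c^\infty(\mreals^{n_t\times T})$ with $D_p\phi\not\equiv0$ where $D_p$ is the constant-coefficient differential operator Fourier-dual to $p$, and set $\nu\eqdef\varepsilon\,(D_p\phi)\,dX$. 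Then $\widehat\nu(\Xi)=\varepsilon\,p(\Xi)\widehat\phi(\Xi)$ vanishes on $\{\rank\Xi\le r\}$ and $\widehat\nu(0)=0$; since $D_p\phi$ is bounded with compact support, for $\varepsilon$ small enough both $P_\pm\eqdef\matn(0,\tfrac{P}{n_t}I)\pm\nu$ are probability measures, their characteristic functions agree with that of $\matn(0,\tfrac{P}{n_t}I)$ on $\{\rank\Xi\le r\}$ (so both are caids by the master statement), and $P_\pm-\matn(0,\tfrac{P}{n_t}I)$ is a nonzero compactly supported density, hence --- a real-analytic difference of two Gaussian densities vanishing unless identically zero --- $P_\pm$ is not Gaussian.

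The divergence computation and the isotropy/continuity bookkeeping in the first step are routine; the genuinely delicate point is the existence half of part~5. The obvious attempt, perturbing $\matn(0,\tfrac{P}{n_t}I)$ by a Schwartz function whose Fourier transform is supported in $\{\rank>r\}$, fails because such a function is not dominated by the Gaussian density in the tails, so the perturbed measure can go negative. Keeping $\nu$ compactly supported makes positivity a purely local matter, on a compact set where the Gaussian density is bounded below, while the vanishing of $\widehat\nu$ on the determinantal variety is arranged algebraically through the minor $p$; getting both at once is the crux. I also expect the reduction (when $T\le n_t$) of ``$\rank\Xi\le r$'' to ``no constraint'', and the Cram\'er--Wold step it feeds, to need a careful word.
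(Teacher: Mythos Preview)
Your master characterization --- $P_X$ is a caid iff the characteristic function of $X$ agrees with that of $\matn(0,\tfrac{P}{n_t}I)$ on all $\Xi$ of rank at most $r=\esssup\rank H$ --- is exactly the engine the paper uses, and your derivations of Parts 1--4 from it match the paper's closely. Two cosmetic differences: the paper gets ``$hX\eqdist hG$'' from ``$hX+Z\eqdist hG+Z$'' by the elementary observation that the Gaussian characteristic function is nowhere zero, rather than invoking Cram\'er's decomposition theorem; and the paper's passage from ``$P_H$-a.e.\ $h$'' to ``all $h$'' fixes one $h_0$ in the support and uses a countable dense family $\{U_k h_0 V_l\}$ plus continuity of characteristic functions, which is cleaner than your Haar-on-Stiefel claim (the latter is not literally true without specifying how singular vectors are chosen, though your argument survives since the condition $\Phi^TX\sim\matn(0,\tfrac{P}{n_t}I)$ depends only on the column span of $\Phi$).

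Part 5 is where you genuinely diverge, and your route is more elementary. The paper perturbs on the characteristic-function side, setting $\Psi(t)=e^{-\|t\|^2/2}+\epsilon\,e^{-\|t\|^2/4}q(t)$ and checking that the inverse Fourier transform is a bona fide density; for the inverse transform to be real this requires every monomial of $q$ to have even total degree, and producing such a $q$ in the ideal of $\{\rank\le r\}$ costs a short commutative-algebra argument (finiteness of $\mreals[x_1,\dots,x_n]$ over $\mreals[x_1^2,\dots,x_n^2]$). Your density-side perturbation sidesteps the algebra entirely: compact support of $D_p\phi$ makes positivity local, and vanishing of $\widehat\nu$ on the determinantal variety is automatic. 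One small wrinkle you should patch: if ``$D_p$ Fourier-dual to $p$'' means $(D_p\phi)^\wedge=p\,\widehat\phi$, then for a homogeneous minor $p$ of degree $r+1$ one has $D_p=i^{r+1}p(\partial)$, which is purely imaginary when $r$ is even, so $D_p\phi$ is not a real density. The fix is immediate --- work with $p(\partial)\phi$ directly (its Fourier transform is $(-i)^{r+1}p(\xi)\widehat\phi(\xi)$, still vanishing on $\{\rank\le r\}$), or replace $p$ by $p^2$ --- but it is exactly the parity obstruction that forces the paper into its ring-theoretic detour, so it deserves a sentence.
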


\begin{remark}(Special case of rank-1 $H$) In the MISO case when $n_t>1$ and $n_r=1$ (or more generally, $\rank \mat
H\le 1$ a.s.), there is not only a multitude of caids, but in fact they can have non-trivial correlations between 
entries of $\mat X$ (and this is ruled out by~\eqref{eq:rk2_caid} for all other cases). As an example, for the
$n_t=T=2$ case, any of the following random matrix-inputs $X$ (parameterized by $\rho\in[-1,1]$) is a Gaussian caid:
\begin{align}\label{eq:2x2_gsn_caids}
X = 
\sqrt{\frac{P}{2}} \left[ \begin{array}{cc}
\xi_1 & -\rho \xi_2 + \sqrt{1-\rho^2}\xi_3\\
\xi_2 & \rho \xi_1 + \sqrt{1-\rho^2}\xi_4
\end{array} \right] \,,
\end{align}
where $\xi_1,\xi_2,\xi_3,\xi_4 \sim \matn(0,1)$ i.i.d..  In particular, there are caids for which not all entries of $\mat
X$ are pairwise independent. 
\end{remark}
\begin{remark}Another way to state conditions~\eqref{eq:row1}-\eqref{eq:row2} is: all elements in a row (resp.
column) are
pairwise independent $\sim \mathcal{N}(0,\frac{P}{n_t})$ and each $2\times 2$ minor has antipodal correlation for the two
diagonals. In particular, if $\mat{X}$ is a caid, then $\mat{X}^T$ and any submatrix of $\mat{X}$ are caids too (for different $n_t$ and
$T$).
\end{remark}

\begin{proof} 
We will rely repeatedly on the following observations:
\begin{enumerate}
\item if $A,B$ are two random vectors in $\mreals^n$ then for any $v\in\mreals^n$ we have
	\begin{equation}\label{eq:pcc_0a}
		\forall v\in\mreals^n: v^T A \eqdist  v^T B \quad \iff \quad A \eqdist B \,.
\end{equation}	
	This is easy to show by computing characteristic functions.
\item If $A,B$ are two random vectors in $\mreals^n$ independent of $Z \sim \matn(0, I_n)$, then
	\begin{equation}\label{eq:pcc_0b}
		A+Z \eqdist B+Z \quad \iff \quad A\eqdist B\,.
\end{equation}	
	This follows from the fact that the characteristic function of $Z$ is nowhere zero.
\item For two matrices $Q_1,Q_2 \in \mreals^{n\times n}$ we have
	\begin{equation}\label{eq:pcc_0c}
		\forall v\in\mreals^n: v^T Q_1 v = v^T Q_2 v \quad \iff \quad Q_1+Q_1^T=Q_2+Q_2^T\,. 
\end{equation}	
	This follows from the fact that a quadratic form that is zero everywhere on $\mreals^n$ must have all coefficients
	equal to zero.
\end{enumerate}

Part 1 (necessity). Recall that the caod is unique and given by~\eqref{eq:mimo_caod}. Thus an input $\mat{X}$ is a caid iff for
$P_{\mat{H}}$-almost every $h_0 \in \mreals^{n_r\times n_t}$ we have
\begin{equation}\label{eq:pcc_2}
	h_0 \mat{X} + \mat{Z} \eqdist h_0 \mat{G} + \mat{Z}\,,
\end{equation}
where $\mat{G}$ is an $n_t \times T$ matrix with i.i.d. $\matn(0,P/n_t)$ entries (for sufficiency, just write $I(\mat X; \mat Y,
\mat H) = h(\mat Y|\mat H) - h(\mat Z)$ with $h(\cdot)$ denoting differential entropy). We will argue next that \eqref{eq:pcc_2} implies (under
isotropy assumption on $P_{\mat H}$) that 
\begin{equation}\label{eq:pcc_3}
	\forall a\in\mreals^{n_t}: \quad a^T \mat{X} \eqdist a^T \mat{G}\,. 
\end{equation}
From~\eqref{eq:pcc_0a},~\eqref{eq:pcc_3} is equivalent to $\sum_{i,j} a_i b_j X_{i,j} \eqdist \sum_{i,j} a_i b_j G_{i,j}$ for all $b \in \mathbb{R}^{n_t}$.

Let $E_0$ be a $P_H$-almost sure subset of $\mathbb{R}^{n_t\times n_r}$ for which~\eqref{eq:pcc_2} holds. Let $O(n)=\{U\in \mreals^{n\times n}:
U^T U = U U^T = I_n\}$ denote the group of orthogonal matrices, with the topology inherited from $\mreals^{n\times n}$. 
Let $\{U_k\}$ and $\{V_k\}$ for
$k\in\{1,2,\hdots\}$ be countable dense subsets of $O(n_t)$ and $O(n_r)$, respectively. (These exist since
$\mreals^{n^2}$ is a second-countable topological space).  
By isotropy of $P_{\mat{H}}$ we have $P_{\mat{H}}[ U_k(E_0)V_l] = 1$ and therefore
\begin{align}
E \eqdef E_0 \cap \bigcap_{k=1,l=1}^\infty U_k(E_0)V_l
\end{align}
is also almost sure: $P_{\mat H}[E]=1$, since $E$ is the intersection of countably many almost sure sets.  Here, $U_k(E_0)$ denotes the image of $E_0$ under $U_k$. By assumption (4), $E$ must contain a non-zero element $h_0$, for otherwise we would
have $P_H[0] = 1$, contradicting (4). Consequently, $h_0 \in U_k (E_0) V_l$ for all $k,l$, and so $U_k^{-1} h_0 V_l^{-1} \in E_0$ for all $k,l$.  Since for $U \in O(n)$, the map $U \mapsto U^{-1}$ is a bijective continuous transformation of $O(n)$, we have that $\{U_k^{-1}\}$ and $\{V_l^{-1}\}$ are also countable dense subsets of $O(n_t)$ and $O(n_r)$, respectively.  From~\eqref{eq:pcc_0b} and \eqref{eq:pcc_2} along with the definition of $E_0$, we conclude that
$$ U_k^{-1} h_0 V_l^{-1} \mat X \eqdist U_k^{-1} h_0 V_l^{-1} \mat G \qquad \forall k,l\,.$$
Arguing by continuity and using the density of $\{U_k^{-1}\}$ and $\{V_l^{-1}\}$, this implies also 
\begin{equation}\label{eq:pcc_4}
	U h_0 V \mat X \eqdist U  h_0 V \mat G \qquad \forall U \in O(n_t),  V\in O(n_r)\,.
\end{equation}
In particular, for any $a\in \mreals^{n_t}$ there must exist a choice of $U,V$ such that $Uh_0V$ has the top row equal
to $c_0a^T$ for some constant $c_0 > 0$. Choosing these $U,V$ in~\eqref{eq:pcc_4} and comparing distributions of top rows, we
conclude~\eqref{eq:pcc_3} after scaling by $1/c_0$.

Part 1 (sufficiency). Suppose $\PP[\rank \mat{H}\le 1]=1$. Then our goal is to show that~\eqref{eq:pcc_3} implies that
$\mat X$ is a caid. To that end, it is sufficient to show $h_0 \mat X \eqdist h_0 \mat G$ for all rank-1 $h_0$.
In the special case 
$$ h_0 = \begin{pmatrix} a^T\\0\\\vdots \\0 \end{pmatrix}\,,$$
the claim follows directly from~\eqref{eq:pcc_3}. Every other
rank-1 $h_0'$ can be decomposed as $h_0' = U h_0$ for some matrix $U$, and thus again we get $U h_0 \mat{X} \eqdist U
h_0 \mat {G}$, concluding the proof.

Parts 2 and 3 (necessity). From part 1 we have that for every $a,b$ we must have $a^T \mat X b \sim \matn(0, \|a\|_2^2
\|b\|_2^2 {P\over n_t})$. 
Computing expected square we get
\begin{equation}\label{eq:pcc_5}
	\EE[(a^T \mat X b)^2] = {P\over n_t} \left(\sum_i a_i^2\right)\left(\sum_j b_j^2\right)\,.
\end{equation}
Thus, expressing the left-hand side in terms of rows $R_i$ as $a^T \mat X = \sum_i a_i R_i$ we get 
$$ b^T \left\{ \EE\left[\left(\sum_i a_{i} R_{i}\right)^T  \left(\sum_i a_{i} R_{i}\right) \right] \right\} b = b^T \left(\sum_i a_i^2 I_T\right) b\,,$$
and thus by~\eqref{eq:pcc_0c} we conclude that for all $a$:
$$ \EE\left[\left(\sum_i a_{i} R_{i}\right)^T  \left(\sum_i a_{i} R_{i}\right)\right] = \left(\sum_i a_i^2 \right)I_T\,.$$
Each entry of the $T\times T$ matrices above is a quadratic form in $a$ and thus again by~\eqref{eq:pcc_0c} we
conclude~\eqref{eq:row1}-\eqref{eq:row2}. Part 3 is argued similarly with roles of $a$ and $b$ interchanged.

Parts 2 and 3 (sufficiency). When $\mat H$ is (at most) rank-1, we have from part 1
that it is sufficient to show that $a^T \mat X b \sim \matn(0, \|a\|_2^2
\|b\|_2^2 {P\over n_t})$. When $\mat X$ is jointly zero-mean Gaussian, we have $a^T X \mat b$ is zero-mean Gaussian and so
we only need to check its second moment satisfies~\eqref{eq:pcc_5}. But as we just argued,~\eqref{eq:pcc_5} is equivalent
to either~\eqref{eq:row1}-\eqref{eq:row2} or~\eqref{eq:col1}-\eqref{eq:col2}.

Part 4. As in Part 1, there must exist $h_0 \in \mreals^{n_r \times n_t}$ such that~\eqref{eq:pcc_4} holds and $\rank
h_0>1$. Thus, by choosing $U,V$ we can diagonalize $h_0$ and thus we conclude any pair of rows $R_i,R_j$ must be
independent. 

Part 5. This part is never used in subsequent parts of the paper, so we only sketch the argument and move the most
technical part of the proof to Appendix~\ref{apx:nongauss}. Let $\ell = \max \{r: \PP[\rank \mat H \ge r] > 0\}$. Then
arguing as for~\eqref{eq:pcc_4} we conclude that $\mat X$ is a caid if and only if for any $h$ with $\rank h\le \ell$ we have 
$$ h \mat{X} \eqdist h \mat{G}\,.$$
In other words, we have
\begin{equation}\label{eq:pcc_6}
	\sum_{i,j} a_{i,j} X_{i,j} \eqdist \sum_{i,j} G_{i,j} \qquad \forall a\in\mreals^{n_t\times T}: \rank a \le
\ell\,.
\end{equation}
If $\ell = \min(n_t, T)$, then rank condition on $a$ is not active and hence, we conclude by~\eqref{eq:pcc_0a} that
$\mat X\eqdist \mat G$. So assume $\ell < \min(n_t,T)$. Note that~\eqref{eq:pcc_6} is equivalent to the condition on
characteristic function of $\mat X$ as follows: 
\begin{equation}\label{eq:pcc_7}
	\EE\left[ e^{i \sum_{i,j} a_{i,j} X_{i,j}} \right] = e^{-{P\over 2 n_t} \sum_{i,j} a_{i,j}^2} \qquad \forall a: \rank
a \le \ell\,.
\end{equation}
It is easy to find polynomial (in $a_{i,j}$) that vanishes on all matrices of rank $\le \ell$ (e.g. take the product of all $\ell\times \ell$ minors). Then Proposition~\ref{prop:nongauss} in
Appendix~\ref{apx:nongauss} constructs non-Gaussian $\mat X$ satisfying~\eqref{eq:pcc_7} and hence~\eqref{eq:pcc_6}.
\end{proof}

\subsection{Information density and its moments}

In finite blocklength analysis, a key object of study is the information density, along with its first and second moments.  In this section we'll find expressions for these moments, along with showing when the information density is asymptotically normal.

It will be convenient to assume that the matrix $\mat{H}$ is represented as
\begin{equation}\label{eq:hdecomp}
	\mat{H} = U\Lambda V^T\,,
\end{equation}
where $U,V$ are uniformly distributed on $O(n_r)$ and $O(n_t)$ (which follows from the isotropic assumption on $H$), respectively,\footnote{Recall that $O(m)=\{A \in
\mreals^{m\times m}: A A^T = A^T A = I_m\}$ is the space of all orthogonal matrices. This space is compact in a natural
topology and admits a Haar probability measure.} and $\Lambda$ is the $n_r \times n_t$ diagonal matrix with diagonal entries
$\{\Lambda_i, i=1,\ldots,n_{\min}\}$. Joint distribution of $\{\Lambda_i\}$ depends on the
fading model. It does not matter for our analysis whether $\Lambda_i$'s are sorted in some way, or
permutation-invariant.

For the MIMO-BF channel, let $P_{\mat{Y}\mat{H}}^*$ denote the caod~\eqref{eq:mimo_caod}.  To compute the information density with
respect to $P_{\mat{Y}\mat{H}}^*$ (for a single $T$-block of symbols) as defined in~\eqref{eq:infodens}, denote $y = hx
+ z$ and write an SVD decomposition for matrix $h$ as 
$$ h = u \lambda v^T\,,$$
where $u \in O(n_r)$, $v\in O(n_t)$ and $\lambda$ is an $n_r \times n_t$ matrix which is zero except for the diagonal
entries, which are equal to $\lambda_1,\ldots,\lambda_{n_{\min}}$. Note that this representation is unique up to
permutation of $\{\lambda_j\}$, but the choice of this permutation will not affect any of the expressions below. With
this decomposition we have:
\begin{align}\label{eq:mimo_info_density}
i(x;y,h) &\eqdef \frac{T}{2} \log\det\left(I_{n_r} + \frac{P}{n_t}hh^T\right)\nonumber\\
&+ \frac{\log e}{2}\sum_{j=1}^{n_{\min}}\frac{\lambda_j^2\|v_j^T x\|^2 + 2\lambda_j\inprod{v_j^T x}{\tilde{z}_j} - \frac{P}{n_t}\lambda_j^2\|\tilde{z}_j\|^2}{1+\frac{P}{n_t} \lambda_j^2}
\end{align}
where we denoted by $v_j$ the $j$-th column of
$V$, and have set  $\tilde z = u^T z$, with
$\tilde{z}_j$ representing the $j$-th row of $\tilde{z}$.
The definition naturally extends to blocks of length $nT$ additively:
\begin{equation}\label{eq:rbg1}
	i(x^n; y^n, h^n) \eqdef \sum_{j=1}^n i(x_j; y_j, h_j)\,.
\end{equation}
We compute the (conditional) mean of information density to get
\begin{align} D_n(x^n) &\eqdef {1\over n T} \EE[i(X^n; Y^n,H^n)|X^n=x^n] \label{eq:dndef}\\
	&= C(P) + {\sqrt{\eta_2\over 2} \over n_t n T} \sum_{j=1}^n (\|x_j\|_F^2 - TP)\,,\label{eq:dndef2}
\end{align}	
where we used the following simple fact:
\begin{lemma}\label{lem:var_w}
Let $U \in \mathbb{R}^{1\times n_t}$ be uniformly distributed on the unit sphere, and $x \in \mathbb{R}^{n_t\times T}$ be a fixed matrix, then
\begin{align}
\E[\|Ux\|^2] = \frac{\|x\|_F^2}{n_t}
\end{align}
\end{lemma}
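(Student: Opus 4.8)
The plan is to compute $\E[\|Ux\|^2]$ by expanding the Frobenius-type norm column by column and exploiting the fact that a single uniform unit vector $U$ on the sphere in $\mreals^{n_t}$ has second-moment matrix $\E[U^T U] = \frac{1}{n_t} I_{n_t}$. Concretely, write $x$ in terms of its columns $x = (x_1\ \cdots\ x_T)$ with $x_j \in \mreals^{n_t\times 1}$. Then $Ux$ is a $1\times T$ row vector whose $j$-th entry is the scalar $U x_j$, so $\|Ux\|^2 = \sum_{j=1}^T (U x_j)^2 = \sum_{j=1}^T x_j^T U^T U x_j$. Taking expectations and pulling the (deterministic) $x_j$ outside gives $\E[\|Ux\|^2] = \sum_{j=1}^T x_j^T \E[U^T U] x_j$.

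The one substantive point is the identity $\E[U^T U] = \frac{1}{n_t} I_{n_t}$. This follows from rotational invariance: for any orthogonal $O \in O(n_t)$, $OU \eqdist U$, hence $O \E[U^T U] O^T = \E[U^T U]$, so $\E[U^T U]$ commutes with every orthogonal matrix and must be a scalar multiple of the identity, say $\alpha I_{n_t}$. Taking traces, $n_t \alpha = \tr \E[U^T U] = \E[\|U\|^2] = 1$ since $U$ lies on the unit sphere, so $\alpha = 1/n_t$. Substituting back, $\E[\|Ux\|^2] = \frac{1}{n_t}\sum_{j=1}^T x_j^T x_j = \frac{1}{n_t}\sum_{j=1}^T \|x_j\|^2 = \frac{1}{n_t}\|x\|_F^2$, as claimed.

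I do not expect any real obstacle here; the only thing to be careful about is the bookkeeping between the row-vector convention for $U$ and the column decomposition of $x$, and making sure the scalar $Ux_j$ is treated as a $1\times 1$ quantity so that $(Ux_j)^2 = x_j^T U^T U x_j$ is literally an equality of reals. An alternative one-line argument, if one prefers to avoid the symmetry computation, is to note that by linearity it suffices to check the claim for $x$ equal to each standard basis matrix $E_{k\ell}$ (a single $1$ in position $(k,\ell)$), for which $\|U E_{k\ell}\|^2 = U_k^2$ and $\E[U_k^2] = 1/n_t$ by exchangeability of the coordinates of a uniform point on the sphere together with $\sum_k \E[U_k^2] = 1$; summing over the entries of a general $x$ then recovers $\|x\|_F^2/n_t$.
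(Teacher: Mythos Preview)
Your proof is correct and follows essentially the same approach as the paper: reduce to columns by additivity of $\|Ux\|^2$, then invoke rotational symmetry for the $T=1$ case. The paper's proof is the one-line version of your argument (``by additivity of $\|Ux\|^2$ across columns, it is sufficient to consider the case $T=1$, for which the statement is clear from symmetry''); you have simply spelled out the symmetry step via $\E[U^T U] = \frac{1}{n_t} I_{n_t}$.
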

\begin{proof} Note that by additivity of $\|Ux\|^2$ across columns, it is sufficient to consider the case $T=1$, for
which the statement is clear from symmetry.
\end{proof}

\begin{remark}\label{rem:exp_hx} A simple consequence of Lemma~\ref{lem:var_w} is $\E[\|Hx\|_F^2] = \E[\|H\|_F^2]\frac{\|x\|_F^2}{n_t}$, which follows from considering the SVD of $H$.
\end{remark}

\begin{proposition}\label{prop:cond_var_x}
Let $V_n(x^n) \eqdef \frac{1}{nT}\Var(i(X^n;\mat{Y}^n,\mat{H}^n) | \mat X^n = x^n)$, then we have
\begin{equation}\label{eq:cvx_nto1}
	V_n(x^n) = {1\over n} \sum_{j=1}^n V_1(x_j)\,,
\end{equation}
where the function $V_1 : \mathbb{R}^{n_t\times T} \mapsto \mathbb{R}$ defined as
$V_1(x) \triangleq \frac{1}{T}\Var(i(X;Y,H)|X=x)$ is given by
\begin{align}
V_1(x) &= T\Var\left(C_{r}(H,P)\right) \label{eq:cvx1}\\
& + \sum_{i=1}^{n_{\min}}\E\left[V_{AWGN}\left(\frac{P}{n_t}\Lambda_i^2\right)\right]\label{eq:cvx2}\\
&+ \eta_5 \left(\frac{\|x\|_F^2}{n_t} - \frac{TP}{n_t}\right) \label{eq:cvx5}\\
&+ \eta_3 \left(\frac{\|x\|_F^2}{n_t} - \frac{TP}{n_t}\right)^2\label{eq:cvx3}\\
&+ \eta_4 \left( \|xx^T\|_F^2 - \frac{1}{n_t}\|x\|_F^4\right) \label{eq:cvx4}
\end{align}
where $c(\cdot)$ was defined in~\eqref{eq:cfdef} and
\begin{align} 
C_r(H,P) &\eqdef \frac{1}{2}\log\det\left(I_{n_r} + \frac{P}{n_t}HH^T\right) = \sum_{i=1}^{n_{\min}}
		C_{AWGN}\left(\frac{P}{n_t} \Lambda_i^2\right)\label{eq:crdef}\\
\eta_3    & \triangleq \frac{\log^2 e}{4}\Var\left(\sum_{k=1}^{n_{\min}} c(\Lambda_k^2)\right) \label{eq:eta3def}\\
   \eta_4    & \triangleq \frac{\log^2 e}{2n_t(n_t+2)}\left( \E\left[ \sum_{i=1}^{n_{\min}}
   c^2(\Lambda_i^2)\right] - 
   \frac{1}{(n_t - 1)} \sum_{i\neq j} \E\left[
   c(\Lambda_i^2)c(\Lambda_j^2)\right]\right) \\
\eta_5 & \triangleq {\log e\over 2}\Cov\left( C_r(H,P), \sum_{k=1}^{n_{\min}} c(\Lambda_k^2)\right)
+ \frac{\log^2 e}{T} \sum_{k=1}^{n_{\min}} \E\left[
\frac{\Lambda_k^2}{\left(1 + \frac{P}{n_t}\Lambda_k^2\right)^2}\right]\label{eq:eta5def}\ .
\end{align}   
\end{proposition}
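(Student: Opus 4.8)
The plan is to compute $V_1(x)=\frac1T\Var(i(X;Y,H)\mid X=x)$ from the closed form~\eqref{eq:mimo_info_density} by a nested law of total variance, and to get~\eqref{eq:cvx_nto1} from memorylessness. For the latter: conditioned on $\mat X^n=x^n$ the blocks $(\mat Y_j,\mat H_j)$ are mutually independent (the channel is memoryless and the input is now deterministic), so by the additivity~\eqref{eq:rbg1} the variance of $i(x^n;\mat Y^n,\mat H^n)$ is the sum of the per-block variances, and dividing by $nT$ gives $V_n(x^n)=\frac1n\sum_j V_1(x_j)$. It remains to compute $V_1$.

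Write $H=U\Lambda V^{T}$ as in~\eqref{eq:hdecomp} and, under $X=x$, put $\tilde Z:=U^{T}Z$. Since $U$ is orthogonal, $\tilde Z$ is again a standard Gaussian matrix and is independent of $(\Lambda,V)$, so $\Lambda,V,\tilde Z$ are mutually independent; peel off the randomness in the order $\Lambda$, then $V$, then $\tilde Z$. Using Lemma~\ref{lem:var_w} ($\E_V\|v_j^{T}x\|^2=\|x\|_F^2/n_t$) and $\E\|\tilde Z_j\|^2=T$ for each row $\tilde Z_j$ of $\tilde Z$, the conditional mean is $\E[i\mid\Lambda]=TC_r(H,P)+\frac{\log e}{2}\left(\frac{\|x\|_F^2}{n_t}-\frac{TP}{n_t}\right)\sum_j c(\Lambda_j^2)$, a fixed-coefficient linear combination of $C_r$ and $\sum_j c(\Lambda_j^2)$; expanding $\Var_\Lambda$ of this two-term combination produces term~\eqref{eq:cvx1}, the squared term~\eqref{eq:cvx3} (with $\eta_3$ proportional to $\Var(\sum_j c(\Lambda_j^2))$), and the $\Cov(C_r,\sum_j c(\Lambda_j^2))$ contribution to~\eqref{eq:cvx5}. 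Conditioning further on $V$: the residual random part of $\E[i\mid\Lambda,V]$ is $\sum_j c(\Lambda_j^2)\|v_j^{T}x\|^2=\tr((xx^{T})VDV^{T})$ with $D=\diag(c(\Lambda_1^2),\ldots,c(\Lambda_{n_{\min}}^2),0,\ldots,0)$, while $\Var[i\mid\Lambda,V]$ decouples over the independent rows $\tilde Z_j$, each contributing (the linear form $\langle v_j^{T}x,\tilde Z_j\rangle$ and the quadratic form $\|\tilde Z_j\|^2$ being uncorrelated, by odd-vs-even Gaussian moments) a term $\left(\frac{2\Lambda_j}{1+\frac{P}{n_t}\Lambda_j^2}\right)^2\|v_j^{T}x\|^2+2T\left(\frac{P}{n_t}c(\Lambda_j^2)\right)^2$. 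Averaging this over $V$ (again Lemma~\ref{lem:var_w}) and over $\Lambda$, and writing $\|x\|_F^2=(\|x\|_F^2-TP)+TP$, produces the remaining part of~\eqref{eq:cvx5} together with an $x$-independent term that is checked to equal~\eqref{eq:cvx2} via the identity $\log^2 e\,\frac{P}{n_t}\,\frac{\sigma}{(1+\frac{P}{n_t}\sigma)^2}+\frac{\log^2 e}{2}\left(\frac{P}{n_t}\right)^2 c^2(\sigma)=V_{AWGN}\!\left(\frac{P}{n_t}\sigma\right)$.

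The only non-routine step is $\Var_V(\tr(SVDV^{T}))$ with $S=xx^{T}$ and $V$ Haar on $O(n_t)$, which yields term~\eqref{eq:cvx4}. I would argue by invariance: since Haar measure is invariant under $V\mapsto OV$, this variance is unchanged under $S\mapsto O^{T}SO$, hence is a symmetric quadratic function of $S$, i.e.\ $a(\tr S)^2+b\,\tr(S^2)$; evaluating at $S=I_{n_t}$, where $\tr(VDV^{T})=\tr D$ is deterministic, forces $a=-b/n_t$, so the variance equals $b(\tr(S^2)-\frac1{n_t}(\tr S)^2)=b(\|xx^{T}\|_F^2-\frac1{n_t}\|x\|_F^4)$, exactly the shape of~\eqref{eq:cvx4}. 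To identify $b$ it suffices to take $S=e_1e_1^{T}$, reducing $\tr(SVDV^{T})$ to $\sum_j c(\Lambda_j^2)V_{1j}^2$ with $(V_{11},\ldots,V_{1n_t})$ uniform on the unit sphere, whose variance follows from $\E V_{1j}^4=\frac{3}{n_t(n_t+2)}$ and $\E V_{1j}^2V_{1k}^2=\frac1{n_t(n_t+2)}$ for $j\ne k$; taking $\E_\Lambda$ then gives $\eta_4$. Collecting the five contributions and reading off the multiplying constants establishes the stated formulas for $\eta_3,\eta_4,\eta_5$.

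I expect the main obstacle to be precisely this Haar fourth-moment computation, together with the bookkeeping needed to check that nothing outside the five listed terms survives — in particular that the two $\|x\|_F^2$-linear pieces (from the $C_r$–$\sum_j c(\Lambda_j^2)$ covariance and from $\E_\Lambda\E_V$ of the linear-form variance) merge into a single clean multiple of $\frac{\|x\|_F^2}{n_t}-\frac{TP}{n_t}$, and that the residual $x$-free terms collapse to $\sum_i\E[V_{AWGN}(\frac{P}{n_t}\Lambda_i^2)]$.
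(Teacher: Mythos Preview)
Your proposal is correct and follows the same overall strategy as the paper --- a law-of-total-variance decomposition of $\Var(i(x;Y,H))$ with respect to the independent pieces $(\Lambda,V,\tilde Z)$ --- but you organize it differently in two respects worth noting.

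First, the paper splits $i=f(H)+g(x,H,Z)$ and computes $\Var f$, $\Var g$, $\Cov(f,g)$ separately, then applies the variance identity to $\Var g$ conditioning on $H$, and only afterwards splits the $\Var_{\Lambda,V}$ piece by conditioning on $\Lambda$. Your single nested pass $\Lambda\to V\to \tilde Z$ is cleaner bookkeeping: the cross term $\Cov(C_r,\sum_k c(\Lambda_k^2))$ falls out automatically as part of the outer variance rather than requiring a separate covariance computation.

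Second, and more substantively, for the Haar piece $\E_\Lambda\Var_V\bigl(\sum_k c(\Lambda_k^2)\|v_k^Tx\|^2\bigr)$ the paper proceeds by brute force: it proves a lemma listing the fourth mixed moments $\E[V_{ij}^2V_{kl}^2]$, $\E[V_{ij}V_{ik}V_{lj}V_{lk}]$, etc., of a Haar-distributed orthogonal matrix, and then expands $\Var(\|V_1^Tx\|^2)$ and $\Cov(\|V_1^Tx\|^2,\|V_2^Tx\|^2)$ in terms of the rows of $x$. Your invariance argument --- observing that $\Var_V(\tr(S VDV^T))$ is an $O(n_t)$-conjugation-invariant quadratic in $S$, hence equal to $a(\tr S)^2+b\,\tr(S^2)$, and that $S=I_{n_t}$ forces $a=-b/n_t$ --- bypasses most of this: it explains \emph{a priori} why the answer has the shape $\|xx^T\|_F^2-\frac{1}{n_t}\|x\|_F^4$, and reduces the identification of $\eta_4$ to the two easiest sphere moments $\E V_{1j}^4$ and $\E V_{1j}^2V_{1k}^2$. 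The paper's approach has the advantage of producing all Haar fourth moments (some of which are used implicitly elsewhere), but for this proposition your route is shorter.
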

\begin{remark}\label{rmk:v1positive} Every term in the definition of $V_1(x)$ (except the one with $\eta_5$) is non-negative (for
$\eta_4$-term, see~\eqref{eq:pct98}). The
$\eta_5$-term will not be important because for inputs satisfying power-constraint with equality it vanishes. Note
also that the first term in~\eqref{eq:eta5def} can alternatively be given
as
$$ \Cov\left(C_r(H,P), \sum_{k=1}^{n_{\min}} c(\Lambda_k^2)\right) = n_t {d\over dP} \Var\left[C_r(H,P)\right]\,. $$
\end{remark}

\begin{proof}
From~\eqref{eq:mimo_info_density}, we have the form of the information density.  First note that the information density over $n$ channel uses decomposes into a sum of $n$ independent terms,
\begin{align}
i(x^n; \mat{Y}^n, \mat{H}^n) = \sum_{j=1}^n i(x_j, \mat{Y}_j, \mat{H}_j)\,.
\end{align}
As such, the variance conditioned on $x^n$ also decomposes as
\begin{align}
\Var(i(x^n;\mat{Y}^n, \mat{H}^n)) = \sum_{j=1}^n \Var(i(x_j;\mat{Y}_j,\mat{H}_j))\,,
\label{eq:vs1}
\end{align}
from which~\eqref{eq:cvx_nto1} follows.  Because the variance decomposes as a sum
in~\eqref{eq:vs1}, we focus on
only computing $\Var(i(x;Y,H))$ for a single coherent block. Define
\begin{align}
f(h) &\eqdef T C_r(h,P)\\
g(x,h,z) &\eqdef \frac{\log
e}{2}\sum_{k=1}^{n_{\min}}\frac{\Lambda_k^2\|v_k^Tx\|^2 + 2\Lambda_k\inprod{v_k^T x}{\tilde{z}_k} - \frac{P}{n_t}\Lambda_k^2\|\tilde{z}_k\|^2}{1+\frac{P}{n_t} \Lambda_k^2}
\end{align}
so that $i(x;y,h) = f(h) + g(x,h,z)$ in notation from~\eqref{eq:mimo_info_density}.  With this, the quantity of interest is
\begin{align}\label{eq:var_terms_sum}
\Var(i(x,\mat{Y},\mat{H})) &= 
\Var(f(\mat{H})) + \Var(g(x,\mat{H}, \mat{Z})) + \Cov(f(\mat{H}), g(x,\mat{H}, \mat{Z})) \\
	&= \underbrace{\Cov(f(\mat{H}), g(x,\mat{H}, \mat{Z}))}_{\eqdef T_1} 
		+ \underbrace{\Var(f(\mat{H}))}_{\eqdef T_2} 
    + \underbrace{\Var\left(\E[g(x,\mat{H}, \mat{Z})|\mat{H}]\right)}_{\eqdef T_3}
		+ \underbrace{\E\left[\Var(g(x,\mat{H}, \mat{Z})|\mat{H})\right]}_{\eqdef T_4} \label{eq:vcv75}
\end{align}
where~\eqref{eq:vcv75} follows from the identity
\begin{align}\label{eq:var_identity_expandd}
\Var(g(x, \mat{H}, \mat{Z})) &= \E\left[\Var(g(x,\mat{H}, \mat{Z})|\mat{H})\right] +
\Var\left(\E[g(x,\mat{H},\mat{Z})|\mat{H}]\right)\ .
\end{align}

Below we show that $T_1$ and $T_3$ corresponds to~\eqref{eq:cvx5}, $T_2$ corresponds to~\eqref{eq:cvx1}, $T_4$ corresponds
to~\eqref{eq:cvx2}, and $T_3$ corresponds to~\eqref{eq:cvx3}
and~\eqref{eq:cvx4}. We evaluate each term separately. 
\begin{align}
T_1 &= \Cov(f(\mat{H}), g(x,\mat{H}, \mat{Z})) \\
&= \E\left[ (f(\mat{H}) -
\E[f(\mat{H})] )( g(x,H,Z) - \E[g(x,H,Z)])\right]\label{eq:agb1}\\
&= \frac{\log e}{2} \left(\frac{\|x\|_F^2}{n_t} - \frac{TP}{n_t}\right)
\sum_{k=1}^{n_{\min}}\E\left[(f(H) - \E[f(H)])(c(\Lambda_k^2) - \E[c(\Lambda_k^2)])  \right] \label{eq:agb2}\\
&= \frac{\log e}{2} \left(\frac{\|x\|_F^2}{n_t} - \frac{TP}{n_t}\right) \sum_{k=1}^{n_{\min}}
\Cov\left(f(H), c(\Lambda_k^2) \right) \label{eq:agb3}
\end{align}
where \eqref{eq:agb2} follows from noting that
\begin{align}
\E\left[ g(x, \mat{H}, \mat{Z}) | \mat{H} \right]
= \sum_{k=1}^{n_{\min}} \left(\|V_k^Tx\|^2 -                \frac{TP}{n_t}\right)
c(\Lambda_k^2) {\log e\over 2}\,.
\end{align}
Now, since $V_k$ is independent from $\Lambda_k$ by the rotational invariance assumption, we have that $f(\mat{H})$ is
independent from $V_k$, since $f(\mat{H})$
only depends on $\mat{H}$ through its eigenvalues, see~\eqref{eq:crdef}. We are only concerned with the expectation over $g(x,H,Z)$ in~\eqref{eq:agb1}, which reduces to
\begin{align}
\E\left[g(x,H,Z) - \E[g(x,H,Z)]  \middle| \Lambda_1,\hdots,\Lambda_{n_{\min}} \right]
&= \left(\frac{\|x\|_F^2}{n_t} - \frac{TP}{n_t}\right) \sum_{k=1}^{n_{\min}} c(\Lambda_k^2) - \E[c(\Lambda_k)^2] {\log
e\over2}\ ,
\end{align}
giving~\eqref{eq:agb2}.

Next, $T_2$ in~\eqref{eq:vcv75} becomes
\begin{align}
T_2 &= \Var(f(\mat{H})) \\
&= T^2\Var\left(\sum_{k=1}^{n_{\min}}
C_{AWGN}\left(\frac{P}{n_t}\Lambda_k^2\right)\right)\ .
\end{align}

For $T_3$ in~\eqref{eq:vcv75}, we obtain
\begin{align}
T_3 &= \E\left[\Var(g(x,\mat{H}, \mat{Z})|\mat{H})\right]\\ 
&= \frac{\log^2 e}{4} \E\left[
\sum_{k=1}^{n_{\min}}\frac{4\Lambda_k^2\|V_k^Tx\|^2 +
2T\left(\frac{P}{n_t}\right)^2\Lambda_k^4}{\left(1 +
\frac{P}{n_t}\Lambda_k\right)^2}\right]\label{eq:bb75}\\
&= \frac{\log^2 e}{2}\sum_{k=1}^{n_{\min}} T \E\left[\frac{2\frac{TP}{n_t}\Lambda_k^2 +
T\left(\frac{P}{n_t}\right)^2\Lambda_k^4}{\left(1 +\frac{P}{n_t}\Lambda_k\right)^2}\right] 
+ 2\E\left[\frac{\frac{\|x\|_F^2}{n_t}\Lambda_k^2 -
\frac{TP}{n_t}\Lambda_k^2}{\left(1 +
\frac{P}{n_t}\Lambda_k^2\right)^2}\right]
\label{eq:bb76}\\
&= T \sum_{k=1}^{n_{\min}} V_{AWGN}\left({P \over n_t} \Lambda_k^2\right) +
\log^2(e) \left(\frac{\|x\|_F^2}{n_t} -
\frac{TP}{n_t}\right)\E\left[\frac{\Lambda_k^2}{\left(1 +
\frac{P}{n_t}\Lambda_k^2\right)^2}\right]
\end{align}
where 
\begin{itemize}
\item \eqref{eq:bb75} follows from taking the variance over $\tilde{Z}$ (recall $\tilde{Z} = U^TZ$ in~\eqref{eq:mimo_info_density}).
\item \eqref{eq:bb76} follows from Lemma~\ref{lem:var_w} applied to $\E[\|V_k^Tx\|^2]$, and
adding and subtracting the term
\begin{align}
\log^2(e) \E\left[ \frac{\frac{TP}{n_t}\Lambda_k^2}{\left(1 +
\frac{P}{n_t}\Lambda_k^2\right)^2}\right]\ .
\end{align}
\end{itemize}

Continuing with $T_3$ from~\eqref{eq:vcv75}, 
\begin{align}
T_3 &= \Var \E[g(x,\mat{H}, \mat{Z})|\mat{H}]\\ 
&= \Var\left( \frac{\log e}{2} \sum_{k=1}^{n_{\min}} c(\Lambda_k^2) \left(\|V_k^T x\|^2 -
\frac{TP}{n_t}\right)\right)\label{eq:eqe93}\\
&= 
\eta_3 \left(\frac{\|x\|_F^2}{n_t} - \frac{TP}{n_t}\right)^2 \nonumber\\
&+ \frac{\log^2 e}{4} \E\left[ \Var\left( \sum_{k=1}^{n_{\min}} c(\Lambda_k^2) \|V_k^T x\|^2 \middle|
\Lambda_1,\hdots,\Lambda_{n_{\min}}\right) \right]\label{eq:eqe95}
\end{align}
where
\begin{itemize}
\item \eqref{eq:eqe93} follows from taking the expectation over $\tilde Z$,
\item \eqref{eq:eqe95} follows from applying the variance
identity~\eqref{eq:var_identity_expandd} with respect to $V$ and
$\Lambda_1,\hdots,\Lambda_{n_{\min}}$, as well as recalling~\eqref{eq:eta3def}.
\end{itemize}

We are left to show that the term~\eqref{eq:eqe95} equals~\eqref{eq:cvx4}. 
To that end, define
\begin{align}
\phi(x) &\triangleq \E\left[\Var\left( \sum_{k=1}^{n_{\min}} c(\Lambda_k^2) \|V_k^T x\|^2 \middle|
 \Lambda_1,\hdots,\Lambda_{n_{\min}}\right)\right]\\
&= \sum_{k=1}^{n_{\min}}
\E[c^2(\Lambda_k^2)]\Var\left( \|V_k^Tx\|^2\right) + \sum_{k\not= l}^{n_{\min}}
\E[c(\Lambda_k^2)c(\Lambda_l^2)] \Cov(\|V_k^Tx\|^2,
\|V_l^Tx\|^2)\label{eq:pct98}\ .
\end{align}
We will finish the proof by showing
$$ \phi(x) = \frac{4}{\log^2 e} \eta_4 \left( \|x x^T\|_F^2 - \frac{1}{n_t}\|x\|_F^4\right)\,.$$
To that end, we first compute moments of $V$ drawn from the Haar measure on the orthogonal group.

\begin{lemma}\label{lem:moments_of_v}
Let $V$ be drawn from the Haar measure on $O(n)$, then for $i,j,k,l = 1,\hdots,n$ all unique,
\begin{align}
\E[V_{ij}^2] &= \frac{1}{n}\label{eq:mov1}\\
\E[V_{ij}V_{ik}] &= 0 \label{eq:mov2}\\
\E[V_{ij}^2 V_{ik}^2] &= \frac{1}{n(n+2)} \label{eq:mov3}\\
\E[V_{ij}^2 V_{kl}^2] &= \frac{n+1}{n(n-1)(n+2)} \label{eq:mov4}\\
\E[V_{ij}^4] &= \frac{3}{n(n + 2)} \label{eq:mov5}\\
\E[V_{ij}V_{ik}V_{lj}V_{lk}] &= \frac{-1}{n(n-1)(n+2)} \label{eq:mov5}\ .
\end{align}
\end{lemma}
Proof of this Lemma is given below.\\

First, note that the variance $\Var(\|V_k^Tx\|^2)$ does not depend on $k$, since the marginal
distribution of each $V_k$ is uniform on the unit sphere. Hence below we only consider $V_1$.  We
obtain
\begin{align}
\Var(\|V_1^T x\|^2) &=  \E[\|V_1^Tx\|^4] - \E^2[\|V_1^Tx\|^2]\label{eq:bgt106}\\
&= \E\left[\left(\sum_{i=1}^T
\sum_{j=1}^{n_t}\sum_{k=1}^{n_t}V_{j1}V_{k1}x_{ji}x_{ki}\right)^2\right] - \frac{\|x\|_F^4}{n_t^2}\\
&=
\E\left[\sum_{j=1}^{n_t}\sum_{k=1}^{n_t}\sum_{l=1}^{n_t}\sum_{m=1}^{n_t} V_{j1}V_{k1}V_{l1}V_{m1}
\inprod{r_j}{r_k}\inprod{r_l}{r_m}\right]\label{eq:bgt108}
\end{align}
where $r_j$ denotes the $j$-th row of $x$.  Now it is a matter counting similar terms.
\begin{align}
\E[\|V_1^Tx\|^4] &= \sum_{j=1}^{n_t} \E[V_{j1}^4]\|r_j\|^4 + 2\sum_{j\not= k}^{n_t}
\E[V_{j1}^2V_{k1}^2]
\inprod{r_j}{r_k}^2 + \sum_{j\not= k}^{n_t} \E[V_{j1}^2V_{k1}^2] \|r_j\|^2
\|r_k\|^2\label{eq:vnx109}\\
&= \frac{3}{n_t(n_t+2)} \sum_{j=1}^{n_t} \|r_j\|^4 + \frac{2}{n_t(n_t+2)} \sum_{j\not= k}^{n_t}
\inprod{r_j}{r_k}^2 + \frac{1}{n_t(n_t+2)}\sum_{j\not=
k}\|r_j\|^2\|r_k\|^2\label{eq:vnx110}\\
&= \frac{1}{n_t(n_t+2)}\left(\|x\|_F^4 + 2\|xx^T\|_F^2\right)\label{eq:vnx111}
\end{align}
where
\begin{itemize}
\item \eqref{eq:vnx109} follows from collecting like terms from the summation
in~\eqref{eq:bgt108}.
\item \eqref{eq:vnx110} uses Lemma~\ref{lem:moments_of_v} to compute each expectation.
\item \eqref{eq:vnx111} follows from realizing that
\begin{align}
\|x\|_F^4 &= \left( \sum_{j=1}^{n_t} \|r_j\|^2\right)^2 = \sum_{j=1}^{n_t} \|r_j\|^4 + \sum_{j\not= k}^{n_t} \|r_j\|^2\|r_k\|^2\\
\|xx^T\|_F^2 &= \sum_{j=1}^{n_t}\sum_{k=1}^{n_t} \inprod{r_j}{r_k}^2 = \sum_{j=1}^{n_t}
\|r_j\|^4 + \sum_{j\not= k}^{n_t} \inprod{r_j}{r_k}^2
\end{align}
\end{itemize}
Plugging this back into~\eqref{eq:bgt106} yields the variance term,
\begin{align}
\Var(\|V_1^Tx\|^2) &= \frac{1}{n_t(n_t+2)}\left(\|x\|_F^4 + 2\|xx^T\|_F^2\right) -
\frac{\|x\|_F^4}{n_t^2}
&= \frac{2}{n_t(n_t+2)}\left( \|xx^T\|_F^2 -
\frac{\|x\|_F^4}{n_t}\right)\label{eq:vtm114}\ .
\end{align}
Now we compute the covariance term from~\eqref{eq:pct98} in a similar way.  By symmetry of the columns of $V$, we
can consider only the covariance between $\|V_1^Tx\|^2$ and $\|V_2^Tx\|^2$, i.e.
\begin{align}
\Cov(\|V_1^Tx\|^2, \|V_2^Tx\|^2) = \E[\|V_1^2x\|^2\|V_2^Tx\|^2] -
\frac{\|x\|_F^4}{n_t^2}\label{eq:cvt115}\ .
\end{align}
Expanding the expectation, we get
\begin{align}
\lefteqn{\E[\|V_1^Tx\|^2\|V_2^Tx\|^2]}\\
&= \sum_{j,k,l,m}
\E[V_{1j}V_{1k}V_{2l}V_{2m}] \inprod{r_j}{r_k}\inprod{r_l}{r_m}\label{eq:ctr116}\\
&= \sum_{j=1}^{n_t} \E[V_{1j}^4] \|r_j\|^4 + \sum_{j \not= k} \E[V_{1j}^2V_{2k}^2]
\|r_j\|^2\|r_k\|^2 + 2\sum_{j\not= k}
\E[V_{1j}V_{1k}V_{2j}V_{2k}]\inprod{r_j}{r_k}^2 \label{eq:ctr117} \\
&= \frac{1}{n_t(n_t + 2)} \sum_{j=1}^{n_t} \|r_j\|^4 + \frac{n_t + 1}{(n_t-1)n_t(n_t+2)}
\sum_{j\not= k}\|r_j\|^2\|r_k\|^2 - \frac{2}{(n_t-1)n_t(n_t+2)}\sum_{j\not= k}
\inprod{r_j}{r_k}^2 \label{eq:ctr118}\\
&= \frac{1}{(n_t - 1)n_t(n_t + 2)}\left( (n_t + 1)\|x\|_F^4 -
2\|xx^T\|_F^2\right)\label{eq:119}\ .
\end{align}
With this, we obtain from~\eqref{eq:cvt115},
\begin{align}
\Cov(\|V_1^Tx\|^2, \|V_2^Tx\|^2) = \frac{2}{(n_t-1)n_t(n_t+2)} \left( \frac{\|x\|_F^4}{n_t} -
\|xx^T\|_F^2\right) \label{eq:ctm120}
\end{align}
where the steps follow just as in the variance
computation~\eqref{eq:vnx109}-\eqref{eq:vnx111}.

Finally, returning to~\eqref{eq:pct98}, using the variance~\eqref{eq:vtm114} and
covariance~\eqref{eq:ctm120}, we obtain
\begin{align}
\phi(x) = \frac{2}{n_t(n_t+2)}\left(\|xx^T\|_F^2 -
\frac{\|x\|_F^4}{n_t}\right)\left( \sum_{k=1}^{n_t}\E[c^2(\Lambda_k^2)] -
\frac{1}{n_t-1}\sum_{k\not= l} \E[c(\Lambda_k^2)c(\Lambda_l^2)]\right)\ .
\end{align}
Plugging this into~\eqref{eq:eqe95} finishes the proof.
\end{proof}

\begin{proof}[Proof of Lemma~\ref{lem:moments_of_v}]
We first note that all entries of $V$ have identical distribution, since permutations of rows and
columns leave the distribution invariant.  Because of this, we can WLOG only consider $V_{11},
V_{12}, V_{21}, V_{22}$ to prove the lemma.
\begin{itemize}
\item \eqref{eq:mov1} follows immediately from $\sum_{i=1}^n V_{ij}^2 = 1$ a.s.
\item Let $V_i, V_j$ be any two distinct columns of $V$, then \eqref{eq:mov2} follows from
\begin{align}
0 = \E[\inprod{V_i}{V_j}] = n\E[V_{11}V_{21}]
\end{align}
\item For~\eqref{eq:mov3} and~\eqref{eq:mov5}, let $E_1 = \E[V_{11}^4]$ and $E_2 =
\E[V_{11}^2V_{21}^2]$.  The following relations between $E_1, E_2$ hold,
\begin{align}
1 &= \E\left[ \left( \sum_{j=1}^n V_{1j}^2\right)^2\right]\\
&= n E_1 + n(n-1)E_2 \label{eq:vrel1}
\end{align}
and, noticing that multiplication of $V$ by the matrix
\begin{align}
\left[ \begin{array}{ccc}
1/\sqrt{2} & -1/\sqrt{2} & 0 \\
1/\sqrt{2} & 1/\sqrt{2} & 0 \\
0 & 0 & I_{n-2}
\end{array} \right]
\end{align}
where $I_n$ is the $n\times n$ identity matrix.  This is an orthogonal matrix, so we obtain the
relation
\begin{align}
E_1 &= \E\left[\left(\frac{V_{11}}{\sqrt{2}} +
\frac{V_{12}}{\sqrt{2}}\right)^4\right]\\
&= \frac{1}{2}E_1 + \frac{3}{2}E_2
\end{align}
from which we obtain $E_1 = 3E_2$.  With this and~\eqref{eq:vrel1}, we obtain
\begin{align}
E_1 &= \frac{3}{n(n+2)}\\
E_2 &= \frac{1}{n(n+2)}\label{eq:vee2}
\end{align}
\item For \eqref{eq:mov4}, take
\begin{align}
E_3 &= \E[V_{11}^2V_{22}^2]\\
&= \E\left[V_{11}^2 \left(1 - \sum_{j\not= 2}^n V_{2j}^2\right)\right]\\
&= \frac{1}{n} - \frac{1}{n(n+2)} - (n-2)E_3\ .
\end{align}
Solving for $E_3$ yields~\eqref{eq:mov4}.
\item For~\eqref{eq:mov5}, let $V_1, V_2$ denote the first and second column of $V$ respectively, and let $E_4 =
\E[V_{11}V_{12}V_{21}V_{22}]$, then \eqref{eq:mov5} follows from
\begin{align}
0 &= \E[\inprod{V_1}{V_2}^2]\\
&= nE_2 + n(n-1)E_4\ .
\end{align}
Using $E_2$ from~\eqref{eq:vee2} and solving for $E_4$ gives \eqref{eq:mov5}.
\end{itemize}
\end{proof}

The following propsition gives the value of the conditional variance of the information density when input distribution has i.i.d. $\matn(0,P/n_t)$ entries.  This will turn out to be the operational dispersion in the case where $\rank H > 1$.
\begin{proposition}\label{prop:telatar_disp}
Let $X^n =(\mat X_1, \ldots, \mat X_n)$ be i.i.d. with Telatar distribution~\eqref{eq:telatar} for each entry. Then 
\begin{equation}\label{eq:telatar_disp}
		\E\left[\Var(i(\mat X^n; \mat Y^n, \mat H^n)|\mat X^n)\right] = nT V_{iid}(P)\,,
\end{equation}	
	where $V_{iid}(P)$ is the right-hand side of~\eqref{eq:mimo_disp_expression}. 
\end{proposition}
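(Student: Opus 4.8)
\emph{Proof proposal.} Because information density is additive over blocks (see~\eqref{eq:rbg1}) and the blocks $\mat X_j$ are i.i.d., Proposition~\ref{prop:cond_var_x} already does most of the work: it gives $\Var\bigl(i(\mat X^n;\mat Y^n,\mat H^n)\mid\mat X^n\bigr)=T\sum_{j=1}^n V_1(\mat X_j)$, a sum of $n$ i.i.d.\ terms, so the claim reduces to computing their common mean, i.e.\ to showing $\E[V_1(\mat X)]=V_{iid}(P)$ for a single block $\mat X$ with i.i.d.\ $\matn(0,P/n_t)$ entries as in~\eqref{eq:telatar}. The plan is simply to take the expectation, over that Gaussian law, of the explicit formula~\eqref{eq:cvx1}--\eqref{eq:cvx4} for $V_1(x)$ and to simplify.

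Term by term: the constants~\eqref{eq:cvx1}--\eqref{eq:cvx2} do not depend on $x$ and pass through unchanged, reproducing the first two terms of $V_{iid}(P)$ in~\eqref{eq:mimo_disp_expression} once one recalls $C_r(\mat H,P)=\sum_i C_{AWGN}(\tfrac{P}{n_t}\Lambda_i^2)$. The linear term~\eqref{eq:cvx5} has zero mean, since $\E\|\mat X\|_F^2=TP$. For the quadratic term~\eqref{eq:cvx3} I use $\E\bigl[(\|\mat X\|_F^2-TP)^2\bigr]=\Var\|\mat X\|_F^2=2TP^2/n_t$, since $\|\mat X\|_F^2$ is a sum of $n_tT$ independent squared Gaussians. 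For~\eqref{eq:cvx4} I compute $\E\|\mat X\mat X^T\|_F^2$ and $\E\|\mat X\|_F^4$ via Isserlis' (Wick's) formula, i.e.\ by counting the pairings in $\sum_{i,k,j,l}\E[X_{ij}X_{kj}X_{il}X_{kl}]$ and in $\sum_{a,b,c,d}\E[X_{ab}^2X_{cd}^2]$; this gives $\E\bigl[\|\mat X\mat X^T\|_F^2-\tfrac1{n_t}\|\mat X\|_F^4\bigr]=\tfrac{P^2T(n_t-1)(n_t+2)}{n_t^2}$, in which the factors $(n_t-1)$ and $(n_t+2)$ are exactly those that divide into $\eta_4$ (they arose from the Haar moments of Lemma~\ref{lem:moments_of_v}).

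It then remains to substitute the definitions of $\eta_3$ and $\eta_4$ (with $c(\cdot)$ from~\eqref{eq:cfdef}) and collect. Writing $c_k=c(\Lambda_k^2)$, the only algebraic input is the elementary identity $\sum_{k\ne l}\E[c_kc_l]=\Var(\sum_k c_k)+(\E[\sum_k c_k])^2-\E[\sum_k c_k^2]$, which eliminates the cross-term that $\eta_4$ carries; the $\eta_3$- and $\eta_4$-contributions then add up to a multiple of $n_t\E[\sum_k c_k^2]-(\E[\sum_k c_k])^2$, which by the definitions~\eqref{eq:eta1def}--\eqref{eq:eta2def} of $\eta_1,\eta_2$ is precisely the remaining term of $V_{iid}(P)$ in~\eqref{eq:mimo_disp_expression}, finishing the proof. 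The one step that requires care — and the only genuine computation — is this fourth-moment bookkeeping: one must check that the $n_t$-dependent constants thrown off by the Gaussian fourth moments cancel cleanly against those already absorbed into $\eta_4$ (the $n_t(n_t+2)$ and $n_t-1$ factors there), so that the $\eta_3$ and $\eta_4$ terms collapse to the single clean expression $\eta_1-\eta_2/n_t$; everything else is routine.
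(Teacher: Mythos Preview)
Your proposal is correct and follows essentially the same route as the paper: reduce to a single block via Proposition~\ref{prop:cond_var_x}, note that~\eqref{eq:cvx1}--\eqref{eq:cvx2} are constant and~\eqref{eq:cvx5} vanishes, compute the Gaussian fourth moments needed for~\eqref{eq:cvx3}--\eqref{eq:cvx4} (the paper phrases these as $\chi^2$ identities rather than Isserlis/Wick, but the numbers are the same), and then verify algebraically that the resulting $\eta_3,\eta_4$ combination collapses to $(P/n_t)^2(\eta_1-\eta_2/n_t)$. The paper carries out that last algebraic check by introducing $U_i=c(\Lambda_i^2)$ and comparing the two quadratic expressions directly, which is exactly your identity $\sum_{k\ne l}\E[c_kc_l]=\Var(\sum_k c_k)+(\E\sum_k c_k)^2-\E\sum_k c_k^2$ in disguise.
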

\begin{proof}
To show this, we take the expectation of the expression given in
Proposition~\ref{prop:cond_var_x} when $X^n$ has i.i.d. $\mathcal{N}(0,P/n_t)$ entries.  The
terms~\eqref{eq:cvx1} and~\eqref{eq:cvx2} do not depend on $X^n$, and these give us the first
two terms in~\eqref{eq:mimo_disp_expression}.~\eqref{eq:cvx5} vanishes immediately, since
$\E[\|X\|_F^2] = TP$ by the power constraint.  It is left to compute the expectation
over~\eqref{eq:cvx3} and~\eqref{eq:cvx4} from the expression in
Proposition~\ref{prop:cond_var_x}.  Using identities for $\chi^2$ distributed random variables (namely,
$\EE[\chi^2_k]=k$, $\Var(\chi^2_k) = 2k$), we get:
\begin{align}
\frac{\eta_3}{n_t^2}\Var(\|X_1\|_F^2) &= \frac{\eta_3}{n_t}\left(\frac{P}{n_t}\right)^2 2 T\\
\E[\|X_1\|_F^4] &= 
TP^2\left( T + \frac{2}{n_t}\right)\\
\E[\|X_1X_1^T\|_F^2] &= 
n_t T \left(\frac{P}{n_t}\right)^2(1 + T + n_t)\\
\E\left[\|X_1X_1^T\|_F^2 - \frac{\|X_1\|_F^4}{n_t}\right] 
&= T\left(\frac{P}{n_t}\right)^2(n_t-1)(n_t+2)\,.
\end{align}
Hence, the sum of terms in~\eqref{eq:cvx3} + \eqref{eq:cvx4} after taking expectation over $X^n$ yields
$$ T \left(P\over n_t\right)^2 \left[2 {\eta_3\over n_t} + (n_t-1)(n_t+2) \eta_4\right]\,.$$
Introducing random variables $U_i = c(\Lambda_i^2)$ the expression in the square brackets equals 
\begin{equation}\label{eq:grb1}
	{\log^2 e\over 2} {1\over n_t} \left[ \Var\left(\sum_i U_i\right) + (n_t-1) \sum_i \EE[U_i^2] - \sum_{i\neq j} \EE[U_i U_j] \right]\,.
\end{equation}
At the same time, the third term in expression~\eqref{eq:mimo_disp_expression} is
\begin{equation}\label{eq:grb2}
	{\log^2 e\over 2} {1\over n_t} \left[ n_t \sum_i \EE[U_i^2] - \left(\sum_i \EE[U_i]\right)^2 \right]\,.
\end{equation}
One easily checks that~\eqref{eq:grb1} and~\eqref{eq:grb2} are equal.
\end{proof}

The next proposition shows that, when the rank of $H$ is larger than $1$, the conditional variance in \eqref{eq:vmindef} is constant over the set of caids.  Thus we can compute the conditional variance for the i.i.d. $\matn(0,P/n_t)$ caid, and conclude that this expression is the minimizer in \eqref{eq:vmindef}.
\begin{proposition}\label{prop:vminrank2} If $\PP[\rank H>1]>0$, then for \underline{any} caid $X\sim P_X$ 
	we have 
	$$ \Var(X; Y,H)) = T\EE[V_1(X)] = T V_{iid}(P)\,.$$
	In particular, the $V(P)$ 
	defined as infimum over all caids~\eqref{eq:vmindef} satisfies $V(P)=V_{iid}(P)$.
\end{proposition}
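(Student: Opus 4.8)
The plan is to evaluate $\E_X[V_1(X)]$ for an arbitrary caid $X\sim P_X$ using the explicit decomposition of $V_1$ in Proposition~\ref{prop:cond_var_x} together with the structural description of caids in Theorem~\ref{thm:caid_conds}, and to show the answer is the same for every caid. Since $V_1(x)=\frac1T\Var[i(X;Y,H)\mid X=x]$ for a single block, the quantity $\frac1T\Var[i(X;Y,H)\mid X]$ appearing in~\eqref{eq:vmindef} equals $\E_X[V_1(X)]$, so it suffices to prove $\E_X[V_1(X)]=V_{iid}(P)$; the ``in particular'' claim then follows at once, because the infimum in~\eqref{eq:vmindef} is taken over a set on which the objective is constant.

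First I would run through the terms~\eqref{eq:cvx1}--\eqref{eq:cvx4} of $V_1(x)$. Terms~\eqref{eq:cvx1} and~\eqref{eq:cvx2} depend only on the fading eigenvalues and not on $x$, so they pass through the expectation unchanged and produce the first two summands of~\eqref{eq:mimo_disp_expression}. By Theorem~\ref{thm:caid_conds}(2), each row of any caid is marginally $\matn(0,\frac{P}{n_t}I_T)$, hence $\E\|X\|_F^2=TP$, and so the $\eta_5$-term~\eqref{eq:cvx5} vanishes in expectation. Thus the computation reduces to evaluating $\E[(\|X\|_F^2-TP)^2]=\Var(\|X\|_F^2)$ and $\E[\|XX^T\|_F^2-\frac1{n_t}\|X\|_F^4]$ for a general caid.

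This is where the hypothesis $\PP[\rank H>1]>0$ enters: by Theorem~\ref{thm:caid_conds}(4) the rows $R_1,\dots,R_{n_t}$ of any caid are then pairwise independent, each distributed as $\matn(0,\frac{P}{n_t}I_T)$. Writing $\|X\|_F^2=\sum_i\|R_i\|^2$ and $\|XX^T\|_F^2=\sum_{i,k}\inprod{R_i}{R_k}^2$, every expectation entering $\E_X[V_1(X)]$ is the expectation of a polynomial of degree at most $4$ in the entries of $X$ each of whose monomials involves entries from at most two distinct rows. Monomials supported on a single row have the same expectation as under the i.i.d.\ Gaussian law, since a row has marginal $\matn(0,\frac{P}{n_t}I_T)$; monomials spanning two rows factor across those rows by pairwise independence, again reproducing the i.i.d.\ Gaussian value (in particular $\E[\inprod{R_i}{R_k}^2]=T(\frac{P}{n_t})^2$ and $\E[\|R_i\|^2\|R_k\|^2]=(\frac{TP}{n_t})^2$ for $i\ne k$). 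Hence $\Var(\|X\|_F^2)$ and $\E[\|XX^T\|_F^2-\frac1{n_t}\|X\|_F^4]$ take exactly the values already computed in the proof of Proposition~\ref{prop:telatar_disp}, so $\E_X[V_1(X)]=V_{iid}(P)$, which finishes the proof.

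The main obstacle is the bookkeeping of the last paragraph: one must verify that no monomial appearing in $\|X\|_F^4$ or in $\|XX^T\|_F^2$ touches three or more rows, so that the merely \emph{pairwise} independence of rows guaranteed by Theorem~\ref{thm:caid_conds}(4) for possibly non-Gaussian caids (full mutual independence of rows can fail) already collapses every relevant moment to a single-row Gaussian moment. Everything else is the routine $\chi^2$-moment arithmetic already carried out for the Telatar caid.
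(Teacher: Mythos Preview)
Your proposal is correct and follows essentially the same route as the paper's proof: both reduce to showing that $\E[\|X\|_F^4]$ and $\E[\|XX^T\|_F^2]$ coincide with their Telatar values by invoking Theorem~\ref{thm:caid_conds}(4), then quote Proposition~\ref{prop:telatar_disp}. The paper handles $\|X\|_F^4$ via pairwise independence of \emph{entries}~\eqref{eq:rk2_caid} and $\|XX^T\|_F^2$ via pairwise independence of \emph{rows}~\eqref{eq:rk2_caidrows}, while you organize both computations row-wise and make explicit the key bookkeeping point that no monomial touches more than two rows; this is a welcome clarification but not a different argument.
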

\begin{proof} For any caid the term~\eqref{eq:cvx5} vanishes. Let $X^*$ be Telatar distributed. 
To analyze~\eqref{eq:cvx3} we recall that from~\eqref{eq:rk2_caid} we have
	$$ \EE[\|X\|_F^4] = \sum_{i,j,i',j'} \EE[X_{i,j}^2 X_{i',j'}^2] = \EE[\|X^*\|_F^4]\,.$$
For the term~\eqref{eq:cvx4} we notice that
	$$ \|XX^T\|_F^2 = \sum_{i,j} \langle R_i, R_j\rangle^2\,,$$
	where $R_i$ is the $i$-th row of $X$. By~\eqref{eq:rk2_caidrows} from Theorem \ref{thm:caid_conds} we then also have
	$$ \EE[\|XX^T\|_F^2] = \EE[\|X^* X^{*T}\|_F^2]\,.$$
To conclude, $\EE[V_1(X)]=\EE[V_1(X^*)]=V_{iid}(P)$.
\end{proof}

In the case where $\rank H \leq 1$, it turns out that the conditional variance \emph{does} vary over the set of caids.  The following proposition gives the expression for the conditional variance in this case, as a function of the caid.
\begin{proposition} 
\label{prop:disp_calc}
If $\PP[\text{rank}(H) \le 1] = 1$, then for any capacity achieving input $X$ we have
\begin{align} \label{cond_var_expression}
{1\over T} \E\left[\Var(i(X;Y,H) | X)\right] &= T\Var\left(
C_{AWGN}\left(\frac{P}{n_t}\Lambda_1^2\right) \right)
+ \E V_{AWGN}\left( \frac{P}{n_t}\Lambda_1^2\right)\\
&+ \eta_1\left(\frac{P}{n_t}\right)^2 - \frac{\eta_2}{2n_t^2 T}\Var(\|X\|_F^2)
\end{align}
where $\eta_1,\eta_2$ are defined in~\eqref{eq:eta1def}-\eqref{eq:eta2def}.
\end{proposition}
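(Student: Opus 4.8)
The plan is to obtain the identity directly from Proposition~\ref{prop:cond_var_x}, which already expresses $V_1(x)=\tfrac1T\Var(i(X;Y,H)\mid X=x)$ as the sum of the five terms~\eqref{eq:cvx1}--\eqref{eq:cvx4}. Since the left-hand side of the present proposition is by definition $\E_X[V_1(X)]$, it suffices to (i) specialize that formula to the rank-$\le 1$ regime and (ii) integrate it against an arbitrary caid $P_X$.

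For step (i): when $\PP[\rank H\le1]=1$ only the top eigenvalue $\Lambda_1$ of $HH^T$ is a.s.\ nonzero, and since $c(0)=C_{AWGN}(0)=V_{AWGN}(0)=0$, every sum $\sum_{i=1}^{n_{\min}}$ in Proposition~\ref{prop:cond_var_x} and in~\eqref{eq:eta1def}--\eqref{eq:eta3def} collapses to its $i=1$ term and every cross-sum $\sum_{i\ne j}$ vanishes. Thus~\eqref{eq:cvx1}--\eqref{eq:cvx2} become $T\Var\!\big(C_{AWGN}(\tfrac P{n_t}\Lambda_1^2)\big)+\E\big[V_{AWGN}(\tfrac P{n_t}\Lambda_1^2)\big]$, while $\eta_1=\tfrac{\log^2e}{2}\E[c^2(\Lambda_1^2)]$, $\eta_2=\tfrac{\log^2e}{2}(\E[c(\Lambda_1^2)])^2$, $\eta_3=\tfrac{\log^2e}{4}\Var(c(\Lambda_1^2))$, $\eta_4=\tfrac{\log^2e}{2n_t(n_t+2)}\E[c^2(\Lambda_1^2)]$. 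In particular these four are coupled by the elementary identity $\Var(c(\Lambda_1^2))=\E[c^2(\Lambda_1^2)]-(\E[c(\Lambda_1^2)])^2$, and this is exactly what will convert the $\eta_3,\eta_4$ contributions into $\eta_1,\eta_2$ at the end.

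For step (ii): Theorem~\ref{thm:caid_conds} says every caid has rows $\sim\mathcal N(0,\tfrac P{n_t}I_T)$, so $\E\|X\|_F^2=TP$; hence the $\eta_5$-term~\eqref{eq:cvx5} integrates to $0$ and~\eqref{eq:cvx3} integrates to a multiple of $\Var(\|X\|_F^2)$ via $\E[(\tfrac{\|X\|_F^2}{n_t}-\tfrac{TP}{n_t})^2]=\tfrac1{n_t^2}\Var(\|X\|_F^2)$. The substantive step is to average the $\eta_4$-term~\eqref{eq:cvx4}, i.e.\ to evaluate $\E_X\!\big[\|XX^T\|_F^2-\tfrac1{n_t}\|X\|_F^4\big]$. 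For this I would use the \emph{stronger} caid property~\eqref{eq:pcc_1}: for fixed $a\in\mreals^{n_t}$ every linear functional $b\mapsto b^TX^Ta=a^TXb$ of $X^Ta$ is $\mathcal N\!\big(0,\tfrac P{n_t}\|a\|_2^2\|b\|_2^2\big)$, so $X^Ta$ is jointly Gaussian, $X^Ta\sim\mathcal N(0,\tfrac P{n_t}\|a\|_2^2I_T)$. Applying this with $a=V_1$ — the first column of $V$ in $H=U\Lambda V^T$, which is uniform on the unit sphere and independent of $X$ — forces $X^TV_1\sim\mathcal N(0,\tfrac P{n_t}I_T)$ even unconditionally, so $\|V_1^TX\|^2\sim\tfrac P{n_t}\chi^2_T$ and $\E[\|V_1^TX\|^4]=(\tfrac P{n_t})^2T(T+2)$. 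Combining this with Lemma~\ref{lem:var_w} (which gives $\E[\|V_1^TX\|^2\mid X]=\|X\|_F^2/n_t$), with the identity~\eqref{eq:vtm114} rewritten as $\|xx^T\|_F^2-\tfrac1{n_t}\|x\|_F^4=\tfrac{n_t(n_t+2)}2\Var(\|V_1^Tx\|^2)$, and with $\E\|X\|_F^4=\Var(\|X\|_F^2)+(TP)^2$, one obtains
\[
\E_X\!\left[\|XX^T\|_F^2-\tfrac1{n_t}\|X\|_F^4\right]=n_t(n_t+2)T\Big(\tfrac P{n_t}\Big)^2-\tfrac{n_t+2}{2n_t}\Var(\|X\|_F^2).
\]
(Alternatively one can read this off by matching the quadratic-in-$a$ moments of $XX^T$ at $a=e_i$ and $a=e_i\pm e_j$ and summing over $i,j$.)

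Finally, assembling the averaged terms~\eqref{eq:cvx3}--\eqref{eq:cvx4}: the part free of $\Var(\|X\|_F^2)$ reduces to $\eta_4\,n_t(n_t+2)(\tfrac P{n_t})^2=\eta_1(\tfrac P{n_t})^2$, and the coefficient of $\Var(\|X\|_F^2)$ becomes $\tfrac1{Tn_t^2}\big(\eta_3-\tfrac{n_t(n_t+2)}2\eta_4\big)=\tfrac{\log^2 e}{4Tn_t^2}\big(\Var(c(\Lambda_1^2))-\E[c^2(\Lambda_1^2)]\big)=-\tfrac{\eta_2}{2n_t^2T}$, which is precisely~\eqref{cond_var_expression}. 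The genuinely delicate point is the $\eta_4$-term: a caid need not be Gaussian and — in the rank-$\le1$ case — can carry real correlations among the entries of $X$ (cf.~\eqref{eq:2x2_gsn_caids}), so fourth moments of $X$ are not available from Wick's formula; what saves the computation is that only the \emph{marginal} Gaussianity of $X^Ta$ for each fixed direction $a$ is needed, and that is exactly what~\eqref{eq:pcc_1} provides.
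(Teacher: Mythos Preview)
Your approach is correct and is essentially the paper's own argument, just organized a bit differently. Both proofs hinge on the same observation from Theorem~\ref{thm:caid_conds}: for any caid $X$ and any unit vector $v$, the row $v^T X\sim\mathcal N(0,\tfrac{P}{n_t}I_T)$, so $\|V_1^TX\|^2\sim\tfrac{P}{n_t}\chi^2_T$ regardless of the (possibly non-Gaussian, possibly correlated) joint law of $X$. The paper uses this by reverting to the intermediate form~\eqref{eq:eqe95}/\eqref{eq:mvt149} for $T_3$ and applying $\Var(AB)=\E[A^2]\E[B^2]-\E[A]^2\E[B]^2$; you instead keep the final expression~\eqref{eq:cvx4} and invert~\eqref{eq:vtm114} to write $\|xx^T\|_F^2-\tfrac1{n_t}\|x\|_F^4=\tfrac{n_t(n_t+2)}{2}\Var_{V_1}(\|V_1^Tx\|^2)$, then average over $X$. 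Since~\eqref{eq:vtm114} is exactly the identity that produced~\eqref{eq:cvx4} from~\eqref{eq:eqe95} in the first place, this is the same calculation run backwards, and both routes land on $\E[c^2]\cdot 2T(P/n_t)^2-\E[c]^2\cdot\tfrac{1}{n_t^2}\Var(\|X\|_F^2)$ (times $\tfrac{\log^2 e}{4}$) for the $T_3$ contribution to $\Var[i\mid X]$.

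One small bookkeeping remark: in your final assembly you silently drop a factor of $T$ in the term free of $\Var(\|X\|_F^2)$ and insert a $1/T$ in the coefficient of $\Var(\|X\|_F^2)$. The underlying computation is right --- what you have actually computed is the $T_3$-level (i.e.\ $\Var[i]$-level) contribution $\eta_1 T(P/n_t)^2-\tfrac{\eta_2}{2n_t^2}\Var(\|X\|_F^2)$, and the division by $T$ to pass to $V_1(X)=\tfrac1T\Var[i\mid X]$ should be made explicit rather than absorbed piecewise into the two terms.
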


\begin{proof}
As in Prop.~\ref{prop:telatar_disp} we need to evaluate the expectation of terms in~\eqref{eq:cvx5}-\eqref{eq:cvx4}. Any
caid $X$ should satisfy $\EE[\|X\|_F^2] = TP$ and thus the term~\eqref{eq:cvx5} is zero. The term ~\eqref{eq:cvx3} can
be expressed in terms of $\Var(\|X\|_F^2)$, but the \eqref{eq:cvx4} presents a non-trivial complication due to the
presence of $\|XX^T\|_F^2$, whose expectation is possible (but rather tedious) to compute by invoking properties of
caids established in Theorem~\ref{thm:caid_conds}. Instead, we recall that the sum ~\eqref{eq:cvx3}+\eqref{eq:cvx4}
equals~\eqref{eq:eqe95}. Evaluation of the latter can be simplified in this case due to constraint on the rank of $H$.
Overall, we get 
\begin{align}
\E\left[\Var(i(X;Y,H)|X)\right] &=
T^2\Var\left(C_{AWGN}\left(\frac{P}{n_t}\Lambda_1^2\right)\right)
+ T \E\left[V_{AWGN}\left(\frac{P}{n_t}\Lambda_1^2\right)\right]\\
&+ \frac{\log^2 e}{4}
\E\left[\Var\left( c(\Lambda_1^2) \left(\|V_1^TX\|^2 - \frac{TP}{n_t}\right)\middle| X \right)\right]\,, \label{eq:mvt149}
\end{align}
where $c(\cdot)$ is from~\eqref{eq:cfdef}. The last term in~\eqref{eq:mvt149} can be written as
\begin{align}
\E\left[c(\Lambda_1^2)^2\right] \E\left[\left(\|V_1^TX\|^2 -
\frac{TP}{n_t}\right)^2\right] - \E^2[c(\Lambda_1^2)]\E\left[\left( \E[\|V_1^TX\|_F^2 | X] -
\frac{TP}{n_t}\right)^2\right]\label{eq:mtt152}
\end{align}
which follows from the identity $\Var(AB) = \E[A^2]\E[B^2] - \E^2[A]\E^2[B]$ for independent $A,B$. The second term
in~\eqref{eq:mtt152} is easily handled since from Lemma~\ref{lem:var_w}  we have $\E[\|V_1^TX\|_F^2|X] = \|X\|_F^2/n_t$.
To compute the first term in~\eqref{eq:mtt152} recall from Theorem~\ref{thm:caid_conds} that for any fixed unit-norm $v$
and caid $X$ we must have $v^T X \sim \matn(0, P/n_t I_T)$. Therefore, we have 
$$ \E\left[\left(\|V_1^TX\|^2 - \frac{TP}{n_t}\right)^2 \bigg| V_1 \right] = {2TP^2\over n_t^2}\,.$$
Putting everything together we get that~\eqref{eq:mtt152} equals 
\begin{align}
\E[c(\Lambda_1^2)^2] 2T\left(\frac{P}{n_t}\right)^2 - \E[c(\Lambda_1^2)]^2 \frac{1}{n_t^2}\Var(\|X\|_F^2)
\end{align}
concluding the proof.
\end{proof}

The question at hand is: which input distribution $\mat{X}$ that achieves capacity minimizes~\eqref{cond_var_expression}?  Proposition~\ref{prop:disp_calc} reduces this problem to maximizing $\Var(\|\mat{X}\|_F^2)$ over the set of capacity achieving input distributions.  This will be analyzed in Section~\ref{sec:MISO}.

Finally, the following lemma computes the Berry Esseen constant.  This is a technical result that will be needed for both the achievability and converse proofs.

\begin{lemma}\label{lem:thirdmom} Fix $x_1,\ldots,x_n \in \mreals^{n_t \times T}$  and let $W_j = i(x_j;
Y_j,H_j)$, where $Y_j, H_j$ are distributed as the output of channel~\eqref{eq:channel} with input $x_j$. Define the
Berry-Esseen ratio
\begin{equation}\label{eq:bndef}
	B_n(x^n) \eqdef \sqrt{n}{\sum_{j=1}^n \EE[|W_j - \EE[W_j]|^3] \over \left(\sum_{j=1}^n \Var(W_j)\right)^{3/2}}  \ .
\end{equation}
Then whenever $\sum_{j=1}^n \|x_j\|_F^2 = nTP$ and  $\max_j\|x_j\|_F \le \delta n^{1\over 4}$ we have
$$ B_n(x^n) \le K_1 \delta^2 \sqrt{n} + K_2 n^{1/4} + \frac{K_3}{n^{1/2}} $$
where $K_1,K_2, K_3>0$ are constants which only depend on channel parameters but not $x^n$ or $n$.
\end{lemma}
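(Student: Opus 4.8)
The plan is to bound the numerator and denominator of $B_n(x^n)$ separately, reusing the decompositions established for $V_n$ and for the information density~\eqref{eq:mimo_info_density}. Throughout write $p_j\eqdef\|x_j\|_F^2$ and $q_j\eqdef\|x_jx_j^T\|_F^2-\tfrac1{n_t}\|x_j\|_F^4\ge 0$. For the denominator, recall $\Var[W_j]=T\,V_1(x_j)$; by Remark~\ref{rmk:v1positive} the terms~\eqref{eq:cvx1}--\eqref{eq:cvx4} of $V_1$ are all non-negative, while the $\eta_5$-term~\eqref{eq:cvx5} is affine in $p_j$, so, since $\sum_j p_j=nTP$, the $\eta_5$-contributions cancel upon summation. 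Dropping the non-negative term~\eqref{eq:cvx1} then gives
\begin{equation}\label{eq:plan-denom}
\sum_{j=1}^{n}\Var[W_j]\ \ge\ T D',\qquad D'\eqdef n v_0+\tfrac{\eta_3}{n_t^2}\sum_{j}(p_j-TP)^2+\eta_4\sum_{j}q_j,
\end{equation}
with $v_0\eqdef\sum_{i}\E\!\big[V_{AWGN}(\tfrac{P}{n_t}\Lambda_i^2)\big]>0$ by~\eqref{eq:hnorm} and $\eta_3,\eta_4\ge 0$. In particular $D'\ge n v_0$, $\eta_3\sum_j(p_j-TP)^2\le n_t^2 D'$ and $\eta_4\sum_j q_j\le D'$.

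For the numerator I would start from $i(x;y,h)=f(h)+g(x,h,z)$ as in the proof of Proposition~\ref{prop:cond_var_x}, split $g=\E[g\,|\,\mat H]+\big(g-\E[g\,|\,\mat H]\big)$ with $\E[g\,|\,\mat H]=\tfrac{\log e}{2}\sum_k c(\Lambda_k^2)\big(\|v_k^Tx\|^2-\tfrac{TP}{n_t}\big)$, and apply the triangle inequality in $L^3$ to
$$W_j-\E W_j=\big(f(\mat H)-\E f(\mat H)\big)+\tfrac{\log e}{2}G_0-\tfrac{\log e}{2}\tfrac{TP}{n_t}\Big(\textstyle\sum_k c(\Lambda_k^2)-\E\sum_k c(\Lambda_k^2)\Big)+\big(g-\E[g\,|\,\mat H]\big),$$
where $G_0\eqdef\sum_k c(\Lambda_k^2)\|v_k^Tx\|^2-\tfrac{p_j}{n_t}\E\sum_k c(\Lambda_k^2)$ (so $\E G_0=0$). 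The first and third summands are bounded in $L^3$ by constants (the first because $C_r(\mat H,P)$ has a finite third moment, the third because $0\le c(\cdot)\le n_t/P$); and conditionally on $\mat H$ the last summand is a linear-plus-centred-quadratic form in the standard Gaussian $\tilde z$ whose linear part has conditional variance $O(p_j)$ (using $\tfrac{\Lambda_k^2}{(1+\frac{P}{n_t}\Lambda_k^2)^2}\le\tfrac{n_t}{4P}$) and whose quadratic coefficients are bounded, so $\|g-\E[g\,|\,\mat H]\|_3\le\mathrm{const}\cdot(1+\sqrt{p_j})$.

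The delicate term is $\|G_0\|_3$: $G_0$ is of order $\|x\|_F^2$, and estimating it crudely would put $\|x_j\|_F^6$ into the numerator, which is too large to survive summation under the peak constraint. Instead I would use $\E|G_0|^3\le\|G_0\|_\infty\,\Var(G_0)$ together with the deterministic bound $\|G_0\|_\infty\le\mathrm{const}\cdot p_j$ (from $c\le n_t/P$ and $\|v_k^Tx\|^2\le\|x\|_F^2$) and the exact identity
$$\Var(G_0)=\tfrac{4\eta_3}{n_t^2\log^2 e}\,\|x\|_F^4+\tfrac{4\eta_4}{\log^2 e}\Big(\|xx^T\|_F^2-\tfrac1{n_t}\|x\|_F^4\Big),$$
which follows by the law of total variance over $\mat H$ from computations already carried out in the proof of Proposition~\ref{prop:cond_var_x} (the first summand is $\tfrac{p_j^2}{n_t^2}\Var(\sum_k c(\Lambda_k^2))$ by Lemma~\ref{lem:var_w}; the second is the quantity denoted $\phi(x)$ there). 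Hence $\|G_0\|_3\le\mathrm{const}\cdot(\eta_3 p_j^3+\eta_4 p_j q_j)^{1/3}$, and cubing the $L^3$-triangle bound gives, for a constant $K$ depending only on the channel parameters,
$$\E\big[\,|W_j-\E W_j|^3\,\big]\ \le\ K\big(1+p_j^{3/2}+\eta_3 p_j^3+\eta_4 p_j q_j\big).$$

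Finally I would sum and divide by $(\sum_j\Var W_j)^{3/2}\ge(T D')^{3/2}$. Using $p_j\le\delta^2\sqrt n$ and $\sum_j p_j=nTP$: $\sum_j p_j^{3/2}\le\delta n^{1/4}\cdot nTP$; $\eta_3\sum_j p_j^3\le\delta^2\sqrt n\,\eta_3\sum_j p_j^2=\delta^2\sqrt n\big(\eta_3\sum_j(p_j-TP)^2+\eta_3 nT^2P^2\big)\le\delta^2\sqrt n\,(n_t^2 D'+\eta_3 nT^2P^2)$; and $\eta_4\sum_j p_j q_j\le\delta^2\sqrt n\,\eta_4\sum_j q_j\le\delta^2\sqrt n\,D'$. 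Combining these with $D'\ge n v_0$ and $\sqrt n$ in front, every resulting term is $O(1)$, $O(\delta n^{1/4})$, or $O(\delta^2\sqrt n)$ with constants depending only on the channel; since $\delta n^{1/4}=\sqrt{\delta^2\sqrt n}\le\tfrac12\delta^2\sqrt n+\tfrac12 n^{1/4}$, this yields $B_n(x^n)\le K_1\delta^2\sqrt n+K_2 n^{1/4}$. The main obstacle is precisely this handling of $G_0$: one must recognize that the worst-case $\|x_j\|_F^6$-growth of its third moment is governed by the very $\eta_3$- and $\eta_4$-structures that reappear as non-negative summands in the variance lower bound~\eqref{eq:plan-denom}, so that the dangerous portion of the numerator is absorbed into the denominator.
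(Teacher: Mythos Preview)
Your argument is correct and the final bound follows as you claim. The route, however, differs from the paper's in a noteworthy way.

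The paper's proof bounds the quadratic-in-$x$ piece of the centered information density in the crudest possible fashion: writing $W-\E W=c_0(H,Z)+\tr(x^TAx)+\tr(x^TBx)$ and using $0\preceq H^T\Sigma^{-1}H\preceq \tfrac{n_t}{P}I$ to get $\E|\tr(x^TAx)|^3=O(\|x\|_F^6)$, it absorbs the resulting $\sum_j\|x_j\|_F^6$ by a two-case analysis of the denominator, splitting on whether $nK_1'$ or $K_2'\sum_j(p_j-TP)^2$ is larger. When the constant term wins, $\sum_j p_j^2$ is $O(n)$ and the peak constraint controls $\sum_j p_j^3$; when the variable term wins, the $(1-a)\sum_j p_j^4$ portion of the denominator directly cancels the numerator.

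Your approach instead tracks the structure of the dangerous piece. Bounding $\E|G_0|^3\le\|G_0\|_\infty\Var(G_0)$ and computing $\Var(G_0)$ via the very identities used for $V_1(x)$, you obtain $\E|G_0|^3=O(\eta_3 p_j^3+\eta_4 p_jq_j)$. Because these same $\eta_3,\eta_4$ pieces reappear as non-negative summands in the denominator bound \eqref{eq:plan-denom}, the ratio is controlled directly, with no case split. This is arguably cleaner and makes explicit why the bound works: the third-moment growth and the variance growth are governed by the same quantities. The paper's version, on the other hand, is blunter but needs less bookkeeping (no $\eta_4$-tracking, no $L^\infty$ bound on $G_0$), and drops the $\eta_4$-term from the denominator altogether.
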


The proof of Lemma~\ref{lem:thirdmom} can be found in Appendix~\ref{apx:be}.

\subsection{Hypothesis testing}

Many finite blocklength results are derived by considering an optimal hypothesis between appropriate distributions. We define $\beta_{\alpha}(P,Q)$ to be the minimum error probability of all statistical tests $P_{Z|W}$ between distributions $P$ and $Q$, given that the test chooses $P$ when $P$ is correct with at least probability $\alpha$.  Formally:
\begin{align}
\beta_{\alpha}(P,Q) = \inf_{P_{Z|W}} \left\{\int_{\mathcal{W}} P_{Z|W}(1|w)dQ(w): \int_{\mathcal{W}} P_{Z|W}(1|w)dP(w)
\geq \alpha \right\}\ .
\end{align}
The classical Neyman-Pearson lemma shows that the optimal test achieves
\begin{align}
\beta_{\alpha}(P,Q) = Q\left[\frac{dP}{dQ} > \gamma\right]
\end{align}
where $dP\over dQ$ denotes the Radon-Nikodym derivative of $P$ with respect to $Q$, and $\gamma$ is chosen to satisfy
\begin{align}
\alpha = P\left[\frac{dP}{dQ} > \gamma \right]\,.
\end{align}
We recall a simple bound on $\beta_\alpha$ following from the data-processing inequality (see~\cite[(154)-(156)]{PPV08}
or, in different notation,~\cite[(10.21)]{CKv2}):
\begin{equation}\label{eq:ba_dpq}
	\beta_{\alpha}(P,Q) \geq \exp\left( -\frac{D(P||Q) + h_B(\alpha)}{\alpha} \right)\,.
\end{equation}
A more precise bound~\cite[(102)]{PPV08} is
\begin{align}\label{eq:beta_std_bound}
\beta_{\alpha}(P,Q) \geq \sup_{\gamma > 0} \frac{1}{\gamma}\left( \alpha - \PP\left[ \log \frac{dP}{dQ} \geq \log \gamma\right]\right)\,.
\end{align}

We will also need to define the performance of composite hypothesis tests. To this end, let $F \subset \matx$ and 
$P_{Y|X}:\matx\to\maty$ be a random transformation. We define
\begin{align}\label{eq:ktdef}
\kappa_{\tau}(F,Q_Y) = \inf_{P_{Z|Y}} \left\{\int_{\mathcal{Y}} P_{Z|Y}(1|y)dQ_Y : \inf_{x\in F} \int_{\mathcal{Y}}
P_{Z|Y}(1|y)dP_{Y|X=x} \geq \tau \right\} \ .
\end{align}
We can lower bound the error in a composite hypothesis test $\kappa_{\tau}$ by the error in an appropriately chosen binary hypothesis test as follows:
\begin{lemma}\label{lemma:kt_ba}
For any $P_{\tilde X}$ on $\matx$ we have 
\begin{align}
\kappa_{\tau}(F,Q_Y) \geq \beta_{\tau P_{\tilde{X}}[F]}(P_{Y|X} \circ P_{\tilde{X}}, Q_Y)
\end{align}
\end{lemma}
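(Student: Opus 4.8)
The plan is the standard reduction of a composite hypothesis test to a simple one. Given any randomized test that is feasible for the composite problem defining $\kappa_\tau(F,Q_Y)$ in~\eqref{eq:ktdef}, I will reuse it verbatim as a (suboptimal) test for the binary problem of distinguishing the mixture $P_{Y|X}\circ P_{\tilde X}$ from $Q_Y$. Feasibility of this test at level $\tau P_{\tilde X}[F]$ follows by integrating the composite constraint against $P_{\tilde X}$ and discarding the mass outside $F$; optimality of $\beta$ then lower-bounds its $Q_Y$-objective, and taking the infimum over all feasible tests finishes the proof.

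In detail: fix a Markov kernel $P_{Z|Y}$ with $Z$ binary that is feasible for~\eqref{eq:ktdef}, i.e. $\inf_{x\in F}\int_{\maty} P_{Z|Y}(1|y)\,dP_{Y|X=x}(y)\geq\tau$, and write $\phi(y)\eqdef P_{Z|Y}(1|y)\in[0,1]$. The map $x\mapsto \E[\phi(Y)\mid X=x]=\int_{\maty}\phi(y)\,dP_{Y|X=x}(y)$ is measurable, so since $\phi\geq 0$ Tonelli's theorem applies and
\begin{align}
\int_{\maty}\phi(y)\,d\big(P_{Y|X}\circ P_{\tilde X}\big)(y)
&= \int_{\matx}\left(\int_{\maty}\phi(y)\,dP_{Y|X=x}(y)\right)dP_{\tilde X}(x)\nonumber\\
&\geq \int_{F}\left(\int_{\maty}\phi(y)\,dP_{Y|X=x}(y)\right)dP_{\tilde X}(x)\ \geq\ \tau\,P_{\tilde X}[F]\,,
\end{align}
where the first inequality drops the nonnegative contribution of $x\notin F$ and the second uses the composite constraint pointwise on $F$.

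Hence $\phi$ is a feasible test for $\beta_{\alpha}(P_{Y|X}\circ P_{\tilde X},\,Q_Y)$ with $\alpha=\tau P_{\tilde X}[F]$, so by definition of $\beta_\alpha$ we get $\int_{\maty}\phi(y)\,dQ_Y(y)\geq \beta_{\tau P_{\tilde X}[F]}(P_{Y|X}\circ P_{\tilde X},\,Q_Y)$. The left-hand side is precisely the objective minimized in~\eqref{eq:ktdef}, so taking the infimum over all feasible $P_{Z|Y}$ yields $\kappa_{\tau}(F,Q_Y)\geq\beta_{\tau P_{\tilde X}[F]}(P_{Y|X}\circ P_{\tilde X},Q_Y)$, as claimed. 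There is no substantive obstacle here; the only bookkeeping points are the measurability of $x\mapsto\E[\phi(Y)\mid X=x]$ (needed to invoke Tonelli/Fubini) and the observation that the acceptance probability we produced equals $\tau P_{\tilde X}[F]$ exactly, so it matches the subscript of $\beta$ on the nose and no monotonicity of $\alpha\mapsto\beta_\alpha$ is even needed.
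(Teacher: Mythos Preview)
Your proof is correct and follows essentially the same route as the paper's: take any test feasible for $\kappa_\tau$, integrate the composite constraint against $P_{\tilde X}$ (dropping the mass outside $F$) to show feasibility at level $\tau P_{\tilde X}[F]$ for the simple binary problem, then invoke the definition of $\beta_\alpha$ and take the infimum. The only differences are cosmetic---you invoke Tonelli (since $\phi\ge 0$) where the paper says Fubini, and you are a bit more explicit about measurability.
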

\begin{proof}
Let $P_{Z|Y}$ be any test satisfying conditions in the definition~\eqref{eq:ktdef}. We have the chain

\begin{align}\label{eq:nhy1}
	\int_{\maty} P_{Z|Y}(1|y) d(P_{Y|X} \circ P_{\tilde X}) &= 
	\int_{\matx} dP_{\tilde X} \int_{\maty} P_{Z|Y}(1|y) dP_{Y|X=x} \\
	&\ge \tau P_{\tilde X}[F]\,,\label{eq:nhy2}
\end{align}	
where~\eqref{eq:nhy1} is from Fubini and~\eqref{eq:nhy2} from constraints on the test.
Thus $P_{Z|Y}$ is also a test satisfying conditions in the definition of $\beta_{\tau P_{\tilde X}[F]}$. Optimizing over
the tests completes the proof.
\end{proof}


\section{Achievability}\label{sec:ach}

In this section, we prove the achievability side of the coding theorem for the MIMO-BF channel. We will rely on the $\kappa\beta$ bound~\cite[Theorem 25]{PPV08}, quoted here:
\begin{theorem}[$\kappa\beta$ bound]
Given a channel $P_{Y|X}$ with input alphabet $\mathcal{A}$ and output alphabet $\mathcal{B}$, for any distribution $Q_Y$ on $\mathcal{B}$, any non-empty set $F \subset \mathcal{A}$, and $\epsilon, \tau$ such that $0 < \tau < \epsilon < 1/2$, there exists and $(M,\epsilon)$-max code satisfying
\begin{align}\label{eq:kappabeta}
M \geq \frac{\kappa_{\tau}(F,Q_Y)}{\sup_{x\in F} \beta_{1-\epsilon+\tau}(P_{Y|X=x},Q_Y)}\ .
\end{align}
\end{theorem}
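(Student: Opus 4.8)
The plan is to construct the codebook together with a (randomized) decoder greedily, one codeword at a time, and to observe that at the first moment the construction can no longer be extended, the union of the decoding regions produced so far is itself a feasible test for the composite hypothesis-testing problem defining $\kappa_\tau(F,Q_Y)$. Throughout I would work with randomized tests $\gamma:\mathcal{B}\to[0,1]$, so that the optimal Neyman--Pearson tests are genuinely attained. For each $x\in F$ fix $\gamma_x$ achieving $\beta_{1-\epsilon+\tau}(P_{Y|X=x},Q_Y)$, so that $\int\gamma_x\,dP_{Y|X=x}\ge 1-\epsilon+\tau$ while $\int\gamma_x\,dQ_Y=\beta_{1-\epsilon+\tau}(P_{Y|X=x},Q_Y)\le\beta^*$, where $\beta^*\eqdef\sup_{x\in F}\beta_{1-\epsilon+\tau}(P_{Y|X=x},Q_Y)$ is the denominator of~\eqref{eq:kappabeta}.

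Suppose we have already selected codewords $c_1,\dots,c_m\in F$ with decoding functions $\gamma_1,\dots,\gamma_m$ satisfying $\sum_{j\le m}\gamma_j\le 1$ pointwise, and write the leftover mass as $\rho_m\eqdef 1-\sum_{j\le m}\gamma_j\in[0,1]$. The greedy step is: if some $x\in F$ satisfies $\int\gamma_x\rho_m\,dP_{Y|X=x}\ge 1-\epsilon$, append $c_{m+1}=x$ with decoding function $\gamma_{m+1}\eqdef\gamma_x\rho_m$; otherwise halt. Such a $\gamma_{m+1}$ keeps $\sum_{j\le m+1}\gamma_j\le 1$ (because $\gamma_x\rho_m\le\rho_m$), yields conditional error at most $\epsilon$ for the new message, and satisfies $\int\gamma_{m+1}\,dQ_Y\le\int\gamma_x\,dQ_Y\le\beta^*$.

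The key point is what happens at a halt of size $m$: then every $x\in F$ has $\int\gamma_x\rho_m\,dP_{Y|X=x}<1-\epsilon$, so subtracting from $\int\gamma_x\,dP_{Y|X=x}\ge 1-\epsilon+\tau$ gives $\int\gamma_x(1-\rho_m)\,dP_{Y|X=x}>\tau$; since $0\le\gamma_x\le 1$ this forces the single test $\phi_m\eqdef 1-\rho_m=\sum_{j\le m}\gamma_j\in[0,1]$ to obey $\int\phi_m\,dP_{Y|X=x}\ge\tau$ for all $x\in F$. Hence $\phi_m$ is feasible in the definition~\eqref{eq:ktdef} of $\kappa_\tau(F,Q_Y)$, so $\int\phi_m\,dQ_Y\ge\kappa_\tau(F,Q_Y)$, while at the same time $\int\phi_m\,dQ_Y=\sum_{j\le m}\int\gamma_j\,dQ_Y\le m\beta^*$. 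Therefore the construction cannot halt while $m<\kappa_\tau(F,Q_Y)/\beta^*$, and running it up to size $M=\lceil\kappa_\tau(F,Q_Y)/\beta^*\rceil$ produces a maximal-error-$\epsilon$ code with $M\ge\kappa_\tau(F,Q_Y)/\beta^*$, which is~\eqref{eq:kappabeta}.

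The one step I expect to require genuine care is the bookkeeping with randomized tests: verifying that $\gamma_x\rho_m$ is a legitimate decoding region (that $\sum_j\gamma_j$ never exceeds $1$), and that the test $1-\rho_m$ appearing in the halting analysis is exactly $\sum_j\gamma_j$ so that it plugs directly into the $\kappa_\tau$ problem. If one prefers to avoid randomization, $\gamma_x$ can be replaced by an indicator at the cost of an arbitrarily small slack, which does not affect the conclusion; either way the structure of the argument is the same.
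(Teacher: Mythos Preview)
The paper does not prove this theorem; it is quoted from~\cite[Theorem 25]{PPV08} as a tool. Your greedy construction is correct and is exactly the argument given in~\cite{PPV08}: pick Neyman--Pearson regions $\gamma_x$, carve each new decoding region out of the remaining mass $\rho_m$, and at the halting step observe that $\phi_m=1-\rho_m$ is a feasible test for the composite problem defining $\kappa_\tau$, giving $m\beta^*\ge\kappa_\tau(F,Q_Y)$.
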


The art of applying this theorem is in choosing $F$ and $Q_Y$ appropriately. The intuition in choosing these is as
follows: although we know the distributions in the collection $\{P_{Y|X=x}\}_{x\in F}$, we do not know which $x$ is
actually true in the composite, so if $Q_Y$ is in the ``center'' of the collection, then the two hypotheses can be difficult to
distinguish, making the numerator large. However, for a given $x$, $P_{Y|X=x}$ vs $Q_Y$ may still be easily to
distinguish, making the denominator small.  The main principle for applying the $\kappa \beta$-bound is thus:
Choose $F$ and $Q_Y$ such that $P_{Y|X=x}$ vs $Q_Y$ is easy to distinguish for any given $x$, yet the composite
hypothesis $Y \sim \{P_{Y|X=x}\}_{x\in F}$ is hard to distinguish from a simple one $Y \sim Q_Y$.

The main theorem of this section gives achievable rates for the MIMO-BF channel, as follows:
\begin{theorem}\label{thm:mainach}
Fix an arbitrary caid $P_X$ on $\mreals^{n_t \times T}$ and let 
\begin{align}
V' &\eqdef \frac{1}{T}\E\left[\Var(i(\mat{X};\mat{Y},\mat{H})|\mat{X})\right] = \EE[V_1(X)]\,,
\label{eq:vprime}
\end{align}
where $V_1(x)$ is introduced in Proposition~\ref{prop:cond_var_x}. Then we have
\begin{align}
\log M^*(nT,\epsilon,P) \geq nTC(P) - \sqrt{nT V'}Q^{-1}(\epsilon) + o(\sqrt{n})
\end{align}
with $C(P)$ given by~\eqref{eq:mimo_capacity}.
\end{theorem}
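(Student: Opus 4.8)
The plan is to apply the $\kappa\beta$ bound of~\cite[Theorem 25]{PPV08} quoted above, with the channel being the $n$-fold product channel $P_{\mat Y^n, \mat H^n|\mat X^n}$, the auxiliary output distribution $Q_{Y^n} = (P^*_{\mat Y,\mat H})^{\otimes n}$ the $n$-fold product of the caod~\eqref{eq:mimo_caod}, and the feasible set $F$ taken to be a power-shell-like set of $n$-tuples $x^n = (x_1,\dots,x_n)$ satisfying the power constraint $\sum_j\|x_j\|_F^2 = nTP$ (with equality) together with a peak constraint $\max_j\|x_j\|_F \le \delta_n n^{1/4}$ for a suitably slowly vanishing $\delta_n\to 0$ (or, more simply, $\max_j \|x_j\|_F \le n^{1/4}/\log n$). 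The choice of $F$ is dictated by Lemma~\ref{lem:thirdmom}: on this set the Berry--Esseen ratio $B_n(x^n)$ is $o(\sqrt n)$ uniformly, which is exactly what is needed to make the normal approximation go through with only an $o(\sqrt n)$ loss.

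Next I would handle the two pieces of~\eqref{eq:kappabeta} separately. For the denominator, $\sup_{x^n\in F}\beta_{1-\epsilon+\tau}(P_{\mat Y^n,\mat H^n|\mat X^n=x^n}, Q_{Y^n})$: using the Neyman--Pearson characterization, $\log\frac{dP_{\mat Y^n,\mat H^n|\mat X^n=x^n}}{dQ_{Y^n}}$ is exactly the information density $i(x^n;\mat Y^n,\mat H^n) = \sum_j i(x_j;\mat Y_j,\mat H_j)$, a sum of $n$ independent terms. Its conditional mean is $nTD_n(x^n) = nTC(P)$ on $F$ (since the power constraint holds with equality, the $\eta_2$-term in~\eqref{eq:dndef2} vanishes), its conditional variance is $nT\,V_n(x^n) = T\sum_j V_1(x_j)$ by Proposition~\ref{prop:cond_var_x}, and this is $nTV' + o(n)$ uniformly over $F$ because the peak constraint forces $\frac1n\sum_j\|x_j\|_F^2 \to TP$ and $\frac1n\sum_j(\|x_j\|_F^2 - TP)^2, \frac1n\sum_j\|x_jx_j^T\|_F^2$ to converge to their Telatar-averaged values up to $o(1)$ (here one must check the $\eta_4$ and $\eta_3$ terms carefully, but the bound $\|x_jx_j^T\|_F^2 \le \|x_j\|_F^4 \le \delta_n^4 n$ makes the fluctuation negligible). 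Then the Berry--Esseen theorem with the ratio bound from Lemma~\ref{lem:thirdmom} gives
\begin{align}
\beta_{1-\epsilon+\tau}(P_{\mat Y^n,\mat H^n|\mat X^n=x^n}, Q_{Y^n}) \le \exp\left\{-nTC(P) + \sqrt{nTV'}\,Q^{-1}(\epsilon-\tau) + o(\sqrt n)\right\}
\end{align}
uniformly over $x^n\in F$, after absorbing the Berry--Esseen correction and the choice $\tau = \tau_n\to 0$ slowly (so $Q^{-1}(\epsilon-\tau_n)\to Q^{-1}(\epsilon)$).

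For the numerator, I would lower-bound $\kappa_{\tau}(F, Q_{Y^n})$ using Lemma~\ref{lemma:kt_ba} with $P_{\tilde X^n} = P_X^{\otimes n}$ the product caid: $\kappa_{\tau}(F,Q_{Y^n}) \ge \beta_{\tau P_X^{\otimes n}[F]}(P_{\mat Y^n,\mat H^n|\mat X^n}\circ P_X^{\otimes n},\, Q_{Y^n})$. One checks $P_X^{\otimes n}[F] \to 1$ by the weak law of large numbers applied to $\sum_j\|\mat X_j\|_F^2$ and the fact that $\max_j\|\mat X_j\|_F = o(n^{1/4})$ a.s. for Gaussian-tailed caids (a union bound over $n$ coordinates, each with sub-exponential $\|\mat X_j\|_F^2$, gives $\max_j\|\mat X_j\|_F^2 = O(\log n)$ w.h.p., comfortably inside $\delta_n\sqrt n$ if $\delta_n$ decays slower than $\sqrt{\log n}/\sqrt n$); a minor wrinkle is that one may first need to condition $P_X^{\otimes n}$ on the power constraint holding with equality, or restrict $F$ to a shell of nonzero $P_X^{\otimes n}$-measure — the standard fix is to use the shell $\{|\,\sum_j\|x_j\|_F^2 - nTP\,| \le \sqrt n\}$ for $F$ and argue the rate loss is $o(\sqrt n)$. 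Then one lower-bounds this $\beta$ by the strong-converse bound~\eqref{eq:beta_std_bound} (or by the simple $\beta_\alpha \ge \frac{\alpha}{2}\exp\{-\text{(mean)} - \sqrt{\cdots}\}$ estimate): since $\log\frac{dP_{\mat Y^n,\mat H^n|\mat X^n}\circ P_X^{\otimes n}}{dQ_{Y^n}}$ concentrates around its mean $\le nTC(P)$ with fluctuations $O(\sqrt n)$ (which needs a bound on $\Var$ of the \emph{unconditional} information density — this is finite and $O(n)$, e.g.\ dominated using $\Var i(\mat X;\mat Y,\mat H) = T V' + \Var D_1(\mat X) \cdot(\text{const})$), one gets $\kappa_\tau(F,Q_{Y^n}) \ge \exp\{-nTC(P) - O(\sqrt n)\}$. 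Actually the cleaner route, which I'd prefer, is to use $\kappa_\tau \ge \tau - \sup_{x^n\in F}\PP[i(x^n;\mat Y^n,\mat H^n) \text{ small under } Q] $ type arguments, or simply note $\kappa_\tau(F, Q_{Y^n}) \ge \tau_n(1-o(1))$ is \emph{not} quite enough and one really does need the $\exp\{-nTC\}$ factor from~\eqref{eq:beta_std_bound}; combining numerator and denominator,
\begin{align}
\log M^*(nT,\epsilon,P) \ge \log\frac{\kappa_{\tau_n}(F,Q_{Y^n})}{\sup_{x^n\in F}\beta_{1-\epsilon+\tau_n}} \ge nTC(P) - \sqrt{nTV'}\,Q^{-1}(\epsilon) + o(\sqrt n)\,.
\end{align}
The main obstacle I anticipate is the uniform (over $x^n\in F$) control in the denominator: one must verify that the conditional variance $\frac1n\sum_j V_1(x_j)$ and the conditional third moment both converge to the right limits uniformly over the peak-constrained shell, which is precisely where the $\delta_n n^{1/4}$ peak constraint and Lemma~\ref{lem:thirdmom} are indispensable — without the peak constraint the $\eta_4$-term $\|x_jx_j^T\|_F^2$ could be as large as $\|x_j\|_F^4 \sim (nTP)^2$ for a single spiked $x_j$, destroying the variance estimate. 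A secondary technical point is ensuring $P_X^{\otimes n}[F] = 1-o(1)$ when $F$ carries the peak constraint, handled by the union/concentration bound sketched above and by choosing $\delta_n\to 0$ slowly enough.
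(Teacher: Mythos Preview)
Your overall architecture (apply the $\kappa\beta$ bound with $Q_{Y^n}=(P^*_{Y,H})^n$, control the denominator via Berry--Esseen with Lemma~\ref{lem:thirdmom}, and bound the numerator via Lemma~\ref{lemma:kt_ba}) matches the paper's. But there is a genuine gap in your denominator analysis, and your numerator bound is muddled.

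\textbf{Denominator.} Your claim that $V_n(x^n)=\frac1n\sum_j V_1(x_j)=V'+o(1)$ \emph{uniformly} over the peak-constrained power shell is false. The $x_j$'s are deterministic here (you are taking a $\sup$ over $x^n\in F$), so no law of large numbers applies to $\frac1n\sum_j\|x_jx_j^T\|_F^2$. A simple counterexample: take all $x_j\equiv x_0$ for some fixed $x_0$ with $\|x_0\|_F^2=TP$; this $x^n$ satisfies both the equality-power constraint and the peak constraint, yet $V_n(x^n)=V_1(x_0)$, which can exceed $V'$ by a constant amount depending on $x_0$ (through the $\eta_4$-term). Hence your upper bound on $\beta_{1-\epsilon+\tau}$ does not hold uniformly over your $F$. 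The paper's remedy is to \emph{put the variance constraint into} $F$: it takes
\[
F_n=\Bigl\{x^n:\ \|x^n\|_F^2=nTP,\ V_n(x^n)\le V'+\tau,\ \max_j\|x_j\|_F\le \delta n^{1/4}\Bigr\}
\]
for a fixed $\tau>0$ (sent to $0$ at the very end). On this set the denominator bound is immediate. The price is that one must now verify that $F_n$ still carries enough mass for the numerator, which is done via a law-of-large-numbers argument for $V_n(\tilde X^n)$ (Lemma~\ref{lem:fnprob}).

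\textbf{Numerator.} Your discussion wavers between several bounds and lands on ``one really does need the $\exp\{-nTC\}$ factor'', which is backwards: a $\kappa_\tau$ of order $e^{-nTC}$ would kill the result; you need $\kappa_\tau$ bounded below by a \emph{constant}. You also noticed that $P_X^{\otimes n}$ puts zero mass on the exact power shell. The paper's fix for both issues is to use the \emph{normalized} caid $\tilde X^n=\sqrt{nTP}\,X^n/\|X^n\|_F$, which lands exactly on the shell (so $P_{\tilde X^n}[F_n]\to 1$ by Lemma~\ref{lem:fnprob}), and then to show that the induced output law is within \emph{bounded} KL of the caod (Lemma~\ref{lemma:divergence_convergence}). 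Feeding this into the data-processing bound~\eqref{eq:ba_dpq} gives $\kappa_\tau(F_n,Q_{Y^n})\ge K_2(\tau)$, a $\tau$-dependent constant, which is exactly what is needed. Your alternative of a thick shell $|\sum_j\|x_j\|_F^2-nTP|\le c_n$ could be made to work for the numerator, but you would then have to revisit the denominator (codewords above $nTP$ violate the power constraint; codewords below it make $D_n(x^n)<C(P)$ by an amount that must be kept $o(1/\sqrt n)$), and the normalization trick avoids all of this.
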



\begin{proof}
Let $\tau>0$ be a small constant (it will be taken to zero at the end).
We apply the $\kappa\beta$ bound (\ref{eq:kappabeta}) with auxiliary distribution $Q_Y=(P_{Y,H}^*)^n$, where $P_{Y,H}^*$ is the
caod~\eqref{eq:mimo_caod}, and the set $F_n$ is to be specified shortly. Recall notation $D_n(x^n)$, $V_n(x^n)$ and
$B_n(x^n)$ from~\eqref{eq:dndef},~\eqref{eq:cvx_nto1} and~\eqref{eq:bndef}. For any $x^n$ 
such that $B_n(x^n) \le \tau \sqrt{n}$, we have from \cite[Lemma 14]{YP10},
\begin{equation}\label{eq:beta_normal}
	-\log \beta_{1-\epsilon+\tau}(P_{\mat{Y}^n\mat{H}^n|\mat{X}^n=x^n}, P_{YH}^{* n}) \ge 
	nT D_n(x^n) + \sqrt{nT V_n(x^n)} Q^{-1}(1-\epsilon-2\tau) - \log {1\over \tau} - K' 
\end{equation}	
where $K'$ is a constant that only depends on channel parameters.
We mention that obtaining~\eqref{eq:beta_normal} from~\cite[Lemma 14]{YP10} also requires that $V_n(x^n)$ be bounded
away from zero by a constant, which holds since in the expression for $V_n(x^n)$ in Proposition~\ref{prop:cond_var_x}, the term~\eqref{eq:cvx2}
is strictly positive, term~\eqref{eq:cvx5} will vanish, and terms~\eqref{eq:cvx3} and~\eqref{eq:cvx4} are both
non-negative.

Considering~\eqref{eq:beta_normal}, our choice of the set $F_n$ should not be surprising:
\begin{align}\label{eq:F_set}
F_n \eqdef \left\{x^n : \|x^n\|_F^2 = nTP, V_n(x^n) \leq V' + \tau, \max_j\|x_j\|_F \le \delta n^{1\over 4}
\right\}\,,
\end{align}
where $\delta = \delta(\tau)>0$ is chosen so that Lemma~\ref{lem:thirdmom} implies $B_n(x^n) \le \tau \sqrt{n}$ for any $x^n\in
F_n$. 
Under this choice from~\eqref{eq:beta_normal},~\eqref{eq:dndef2} and Lemma~\ref{lem:thirdmom} we conclude
\begin{equation}\label{eq:abet1}
	\sup_{x^n \in F_n} \log \beta_{1-\epsilon+\tau}(P_{\mat{Y}^n\mat{H}^n|\mat{X}^n=x^n}, P_{YH}^{* n}) \le
- nT C(P) + \sqrt{nT(V'+\tau)} Q^{-1}(\epsilon-2\tau) + K''\,,
\end{equation}
where $K'' = K' + \log {1\over \tau}$.

To lower bound the numerator $\kappa_{\tau}(F_n,P_{Y,H}^{*n})$ we first state two auxiliary lemmas, whose proofs follow.  The first, Lemma \ref{lemma:divergence_convergence}, shows that the output distribution induced by an input distribution that is uniform on the sphere is ``similar'' (in the sense of divergence) to the $n$-fold product of the caod.
\begin{lemma}\label{lemma:divergence_convergence} 
Fix an arbitrary caid $P_X$ and let $X^n$ have i.i.d. components $\sim P_X$. Let 
\begin{align} \label{eq:norm_caid}
\tilde X^n \eqdef \frac{\mat{X}^n}{\|\mat{X}^n\|_F}\sqrt{nTP}
\end{align}
where $\|X^n\|_F = \sqrt{\sum_{t=1}^n \|X_j\|_F^2}$. Then
\begin{align}\label{eq:dcc}
D(P_{\mat{Y}^n\mat{H}^n|\mat{X}^n} \circ P_{\tilde X^n} || P_{Y,H}^{*n}) \leq {TP \log e
\over n_t} \EE[\|H\|_F^2] \,,
\end{align}
where $P_{Y,H}^{*n}$ is the $n$-fold product of the caod~\eqref{eq:mimo_caod}.
\end{lemma}

The second, Lemma \ref{lem:fnprob}, shows that a uniform distribution on the sphere has nearly all of its mass in $F_n$ as $n\to\infty$.

\begin{lemma}\label{lem:fnprob} With $\tilde X^n$ as in Lemma 
\ref{lemma:divergence_convergence} and set $F_n$ defined as in~\eqref{eq:F_set} (with arbitrary $\tau>0$ and $\delta>0$) we have as $n\to\infty$,
	$$ \PP[\tilde X^n \in F_n] \to 1 $$
\end{lemma}

Denote the right-hand side of~\eqref{eq:dcc} by $K_1$ and consider the following chain:
\begin{align} 
\kappa_{\tau}(F_n,Q_{\mat{Y}^n}) &\geq 
\exp\left( -\frac{D(P_{\mat{Y}^n\mat{H}^n|\mat{X}^n} \circ P_{\tilde{\mat{X}}^n} || Q_{\mat{Y}^n} ) + \log 2}{\tau
P_{\tilde{\mat{X}}^n}[F_n]} \right)\label{eq:dcp1}\\
	&\geq \exp\left(-{K_1 + \log 2 \over \tau P_{\tilde X^n}[F_n]}\right)\label{eq:dcp2}\\
	&= \exp\left(-{K_1 + \log 2 \over \tau +o(1)}\right)\label{eq:dcp3}\\
	&\ge K_2(\tau) \label{eq:dcp4}\,,
\end{align}
where~\eqref{eq:dcp1} follows from Lemmas~\ref{lemma:kt_ba} and~\eqref{eq:ba_dpq} with $P_{\tilde X^n}$ as in
Lemma~\ref{lemma:divergence_convergence},~\eqref{eq:dcp2} is from Lemma \ref{lemma:divergence_convergence},~\eqref{eq:dcp3} is from
Lemma~\ref{lem:fnprob}, and in~\eqref{eq:dcp4} we introduced a $\tau$-dependent constant $K_2$.

Putting~\eqref{eq:abet1} and~\eqref{eq:dcp4} into the $\kappa\beta$-bound we obtain
$$ \log M^*(nT,\epsilon,P) \ge nT C(P) - \sqrt{nT(V'+\tau)} Q^{-1}(\epsilon-2\tau) - K'' - K_2(\tau)\,.$$
Taking $n\to\infty$ and then $\tau\to 0$ completes the proof.
\end{proof}

Now we prove the two lemmas used in the Theorem.

\begin{proof}[Proof of Lemma~\ref{lemma:divergence_convergence}]
In the case of no-fading ($H_j=1$) and SISO, this Lemma follows from~\cite[Proposition 2]{molavianjazi2015second}. Here
we prove the general case. Let us introduce an auxiliary channel acting on $X_j$ as follows:
\begin{align}\label{eq:tildech}
\tilde Y_j = \mat{H}_j \frac{\mat{X}_j}{\|\mat{X}^n\|_F}\sqrt{nTP} + \mat{Z}_j, \qquad j=1,\hdots,n
\end{align}
With this notation, consider the following chain:
\begin{align} 
	D(P_{\mat{Y}^n\mat{H}^n|\mat{X}^n} \circ P_{\tilde{\mat{X}}^n}  || P_{Y,H}^{*n} ) &=
	D(P_{\tilde Y^n\mat{H}^n|\mat{X}^n} \circ P_{{\mat{X}}^n}  || P_{Y,H}^{*n} ) \label{eq:pl1}\\
	&=
	D(P_{\tilde Y^n\mat{H}^n|\mat{X}^n} \circ P_{{\mat{X}}^n}  || P_{\mat{Y}^n\mat{H}^n|\mat{X}^n} \circ
	P_{{\mat{X}}^n} ) \label{eq:pl2}\\
	&= D(P_{\tilde Y^n\mat{H}^n|\mat{X}^n} ||  P_{\mat{Y}^n\mat{H}^n|\mat{X}^n} | P_{\mat{X}^n}) \label{eq:pl3}\\
	&= D(P_{\tilde Y^n|\mat{H}^n,\mat{X}^n} ||  P_{\mat{Y}^n|\mat{H}^n|\mat{X}^n} | P_{\mat{X}^n} P_{H^n}) 
		\label{eq:pl4}\\
	&= \frac{\log e}{2} \E\left[ \left(1 -
{\sqrt{nTP}\over \|\mat{X}^n\|_F}\right)^2 \sum_{t=1}^n \|H_j X_j\|_F^2 \right]\label{eq:pl5}\\
	&=
	\frac{\log e}{2n_t}\E[\|H\|_F^2] \E\left[ \left(\|\mat{X}^n\|_F - \sqrt{nTP}\right)^2 \right]\label{eq:pl6}\\
	&= 
	\frac{\log e}{n_t}\E[\|H\|_F^2] (nTP - \sqrt{nTP} \EE[\|\mat{X}^n\|_F]) \label{eq:pl7}
\end{align}
where~\eqref{eq:pl1} is by clear from~\eqref{eq:tildech},~\eqref{eq:pl2} follows since $P_X$ is a
caid,~\eqref{eq:pl3}-\eqref{eq:pl4} are standard identities for divergence,~\eqref{eq:pl5} follows since both $\tilde
Y_j$ and $Y_j$ are unit-variance Gaussians and $D(\matn(0,1)\|\matn(a,1))={a^2\log e\over 2}$,~\eqref{eq:pl6} is
from Lemma~\ref{lem:var_w} (see Remark \ref{rem:exp_hx}) and~\eqref{eq:pl7} is just algebra along with the assumption that $\EE[\|X^n\|_F^2] = nTP$.

It remains to lower bound the expectation $\EE[\|X^n\|_F]$. Notice that for any uncorrelated random variables $B_t \ge 0$ with
mean 1 and variance 2 we have
\begin{equation}\label{eq:plxx}
	\EE\left[\sqrt{{1\over n} \sum_{t=1}^n B_t}\right] \ge 1 - {1\over n}\,,
\end{equation}
which follows from $\sqrt{x} \ge {3x-x^2\over 2}$ for all $x\ge 0$ and simple computations.
Next consider the chain:
\begin{align} \E[\|\mat{X}^n\|_F] &= 
	\E\left[ \sqrt{\sum_{i,j} \sum_{t=1}^n (X_t)_{i,j}^2 } \right]\\
	&\geq \sqrt{\frac{n}{n_tT}} \sum_{i,j} \E\left[  \sqrt{{1\over n}\sum_{t=1}^n (X_t)_{i,j}^2}\right]\\
	&=\sqrt{n T P} \left(1-{1\over n}\right)\label{eq:plxy}
\end{align}
where in~\eqref{eq:plxy} we used the fact that for any caid, $\{(X_t)_{i,j}, t=1,\ldots n\} \sim \matn(0, P/n_t)$ i.i.d. (from
Theorem~\ref{thm:caid_conds}) and applied~\eqref{eq:plxx} with $B_t = {(X_t)_{i,j}^2 n_t\over P}$.
Putting together~\eqref{eq:pl7} and~\eqref{eq:plxy} completes the proof.
\end{proof}

\begin{proof}[Proof of Lemma~\ref{lem:fnprob}]
Note that since $\|X^n\|_F^2$ is a sum of i.i.d. random variables, we have ${\|X^n\|_F\over \sqrt{nTP}}\to 1$ almost surely. In addition we
have
$$ \EE[\|X_1\|_F^8] \le (n_t T)^3 \sum_{i,j} \EE[(X_1)_{i,j}^8] \eqdef K\,,$$
where we used the fact (Theorem~\ref{thm:caid_conds}) that $X_1$'s entries are Gaussian. Then we have from independence
of $X_j$'s and Chebyshev's inequality,
$$ \PP[\max_j \|X_j\|_F \le \delta' n^{1\over 4}] = \PP[\|X_1\|_F \le \delta' n^{1\over 4}]^{n} \ge \left(1-{K\over
\delta'^8 n^2}\right)^n \to 1 $$
as $n\to \infty$.
Consequently, 
$$ \PP[\max_j \|\tilde X_j\|_F \le \delta n^{1\over 4}] \ge \PP\left[\max_j \|X_j\|_F \le {\delta\over 2} n^{1\over
4}\right] - \PP\left[{\|X^n\|_F\over \sqrt{nTP}} < {1\over 2}\right] \to 1 $$
as $n\to \infty$.

Next we analyze the behavior of $V_n(\tilde X^n)$. From Proposition~\ref{prop:cond_var_x} we see that, due to $\|\tilde
X^n\|_F^2 = nTP$, the term~\eqref{eq:cvx5} vanishes, while~\eqref{eq:cvx3} simplifies. Overall, we have
\begin{align}
V_n(\tilde X^n) &= K
+ \left( \frac{nTP}{\|X^n\|_F^2}\right)^2 
	{1\over n} \sum_{j=1}^n \left({\eta_3-\eta_4\over n_t} \|X_j\|_F^4 + \eta_4  \|X_jX_j^T\|_F^2 \right)\,, \label{eq:cle1}
\end{align}
where we replaced the terms that do not depend on $x^n$ with $K$. Note that the first term in parentheses (premultiplying the sum) converges almost-surely to 1, by the strong law of large numbers. Similarly, the normalized sum converges to the expectation (also by the strong law of large
numbers). Overall, applying the SLLN in the limit as $n\to\infty$, we obtain:
\begin{align} \lim_{n\to\infty}V_n(\tilde{\mat{X}}^n) &= \lim_{n\to\infty} \frac{1}{n} \sum_{j=1}^n
V_1(\tilde{\mat{X}_j}) \\&= \E[V_1(\mat{X})] \eqdef V'\,.
\end{align}
In particular, $\PP[V_n(\tilde X^n) \le V' + \tau] \to 1$. This concludes the proof of $\PP[\tilde X^n \in F_n]\to 1$.
\end{proof}


\section{Converse}\label{sec:conv}

Here we state and prove the converse part of Theorem~\ref{thm:mimo_dispersion}.  There are two challenges in proving the converse relative to
other finite blocklength proofs. First, behavior of the information density~\eqref{eq:mimo_info_density}
varies widely as $x^n$ varies over the power-sphere 
\begin{equation}\label{eq:powersp}
	S_n = \{x^n \in (\mathbb{R}^{n_t\times T })^n : \|x^n\|_F^2
= nTP\}.
\end{equation}
Indeed, when $\max_j \|x_j\|_F \ge c n^{1\over4}$ the distribution of information density ceases to be Gaussian.  
In contrast, the information density for the AWGN channel is constant over $S_n$.

Second, assuming asymptotic normality, we have for any $x^n \in S_n$:
$$ -\log \beta_{1-\epsilon}(P_{\mat{Y}^n\mat{H}^n|\mat{X}^n=x^n}, P_{Y,H}^{*n}) \approx nC(P) - \sqrt{nV_n(x^n)}
Q^{-1}(\epsilon) + o(\sqrt{n})\,.$$
However, the problem is that $V_n(x^n)$ is also non-constant. In fact there exists regions of $S_n$ where
$V_n(x^n)$ is abnormally small. Thus we need to also show that no capacity-achieving codebook can live on those
abnormal sets.

The main theorem of the section is the following:

\begin{theorem}\label{thm:mainconv} For any $\delta_n\to0$ there exists $\delta'_n\to 0$ such that any $(n,M,\epsilon)$-max code 
with $\epsilon < 1/2$ and codewords satisfying $\max_{1\le j\le n}\|x_j\|_{F} \le \delta_n n^{1\over 4}$ has size
bounded by
\begin{align}\label{eq:mainconv}
\log M \leq nTC(P) - \sqrt{nTV(P)}Q^{-1}(\epsilon) + \delta'_n \sqrt{n}\,,
\end{align}
where $C(P)$ and $V(P)$ are defined in~\eqref{eq:capacity} and~\eqref{eq:vmindef}, respectively.
\end{theorem}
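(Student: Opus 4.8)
The plan is to prove the converse by a careful application of the meta-converse, combined with a decomposition of the power-sphere $S_n$ into ``good'' and ``bad'' regions according to the value of $V_n(x^n)$. First I would apply the standard meta-converse bound: for any code and any auxiliary output distribution $Q_{Y^nH^n}$, $\log M \le \sup_{x^n} [-\log\beta_{1-\epsilon}(P_{Y^nH^n|X^n=x^n}, Q_{Y^nH^n})] + O(1)$; here I take $Q_{Y^nH^n} = (P_{Y,H}^*)^n$, the product caod. Since all codewords lie on $S_n$ (the power constraint is met with equality by a standard argument, or else we lose only $o(\sqrt n)$), and additionally satisfy $\max_j\|x_j\|_F\le\delta_n n^{1/4}$, Lemma~\ref{lem:thirdmom} gives that the Berry--Esseen ratio $B_n(x^n)$ is $o(\sqrt n)$ uniformly over such $x^n$. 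Then the Berry--Esseen theorem applied to the sum $i(x^n;Y^n,H^n) = \sum_j i(x_j;Y_j,H_j)$ yields, for every such $x^n$,
\begin{align}
-\log\beta_{1-\epsilon}(P_{Y^nH^n|X^n=x^n}, P_{Y,H}^{*n}) \le nTD_n(x^n) - \sqrt{nTV_n(x^n)}\,Q^{-1}(\epsilon) + o(\sqrt n)\,.
\end{align}
Using~\eqref{eq:dndef2}, on $S_n$ we have $D_n(x^n) = C(P)$ exactly, so this already gives $\log M \le nTC(P) - \sqrt{nT}\,Q^{-1}(\epsilon)\inf_{x^n\in\text{codebook}} \sqrt{V_n(x^n)} + o(\sqrt n)$, and it remains to show $V_n(x^n) \ge V(P) - o(1)$ uniformly over the codebook — which is false pointwise, hence the real work.

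The main obstacle is precisely that $V_n(x^n) = \frac1n\sum_j V_1(x_j)$ can dip below $V(P)$ on ``abnormal'' regions of $S_n$: from Proposition~\ref{prop:cond_var_x}, $V_1(x)$ depends on $x$ only through $\|x\|_F^2$ (via the $\eta_3$-term, which is non-negative and minimized when $\|x\|_F^2=TP$) and through $\|xx^T\|_F^2 - \frac1{n_t}\|x\|_F^4 \ge 0$ (the $\eta_4$-term), so a codeword with, say, highly anisotropic $x_j$ having small $\|x_jx_j^T\|_F^2$ could have small $V_1(x_j)$. The resolution I would pursue is a two-part covering/counting argument in the spirit of Polyanskiy's converse for power-constrained AWGN: partition the codebook $\mathcal C$ into $\mathcal C_{\text{good}} = \{x^n: V_n(x^n) \ge V(P)-\xi\}$ and $\mathcal C_{\text{bad}}$ for a small slack $\xi>0$. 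For $\mathcal C_{\text{good}}$ the bound above directly gives the desired estimate with $V(P)-\xi$ in place of $V(P)$. For $\mathcal C_{\text{bad}}$, I would show that the sub-code supported there cannot be large: since $V_1$ is a continuous function bounded below by a strictly positive constant (term~\eqref{eq:cvx2}), and since on $S_n$ the only way to make $\frac1n\sum_j V_1(x_j)$ small is to have most $x_j$ concentrated near the minimizers of $V_1$ on the sphere $\|x\|_F^2=TP$, one argues that $\mathcal C_{\text{bad}}$ behaves like a code over an effectively ``lower-variance but also lower-cardinality'' ensemble; more concretely, the meta-converse still applies to the sub-code, and even with the smaller $V_n$ the term $nTC(P)$ dominates, so $\log|\mathcal C_{\text{bad}}| \le nTC(P) - \sqrt{nT}\,Q^{-1}(\epsilon)\sqrt{V_{\min}} + o(\sqrt n)$ where $V_{\min}>0$ is the pointwise infimum of $V_1$; this is $\le nTC(P) + O(\sqrt n)$, which is not yet enough, so one needs the sharper observation that a code achieving error $\epsilon<1/2$ on a set where $V_n$ is atypically small must have vanishingly small relative size, formalized by noting $\frac{|\mathcal C_{\text{bad}}|}{M}\to 0$ or else the overall error exceeds $\epsilon$.

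Concretely, I expect the cleanest route for the bad set is the following: for each codeword in $\mathcal C_{\text{bad}}$, the conditional distribution of $i(x^n;Y^n,H^n)$ is still asymptotically Gaussian with mean $nTC(P)$ and some variance $nTV_n(x^n) < nT(V(P)-\xi)$; feed this into the strong-converse bound~\eqref{eq:beta_std_bound} with threshold $\gamma$ chosen at $nTC(P) - \sqrt{nT(V(P)-\xi)}Q^{-1}(\epsilon) + t\sqrt n$ for suitable $t$, and sum over the codebook to get a bound of the form $M \le \sum_{x^n} \PP[i(x^n;Y^n,H^n) \ge \log\gamma] \cdot e^{\log\gamma}+ (\text{correction})$ — but the key point is that $V(P)$ in~\eqref{eq:vmindef} is defined as the \emph{infimum over caids} of $\frac1T\Var[i(X;Y,H)|X]$, i.e. $V(P) = \inf_{P_X \text{ caid}} \E[V_1(X)]$; hence $V(P) \le \E[V_1(X)]$ is not directly a pointwise lower bound, and one must instead argue that a codebook is essentially forced to induce an \emph{empirical input distribution close to a caid} (since its rate is close to capacity, by a Fano/continuity-of-$I(X;Y|H)$ argument), and then $\frac1n\sum_j V_1(x_j) = \E_{\hat P_n}[V_1(X)] \ge \inf_{\text{near-caid }P}\E_P[V_1(X)]$, which by continuity of $V_1$ and of the capacity functional tends to $V(P)$ as the slack $\to 0$. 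The hardest technical step is making this ``empirical-distribution-is-nearly-caid'' claim quantitative enough: one needs that any code of rate $C(P) - o(1)$ has $\hat P_n$ whose induced $I(\hat X;\hat Y|\hat H)$ is $C(P)-o(1)$, hence (by strict concavity / uniqueness structure from Theorem~\ref{thm:caid_conds}) $\hat P_n$ is within $o(1)$ of the caid-set in a topology fine enough to control $\E[V_1]$, and then $V_n(x^n)\ge V(P)-o(1)$ for the bulk of codewords; the remaining exceptional codewords are handled by a union bound as above. I would then take $\delta_n'\to 0$ by collecting all the $o(1)$ slacks. Finally, one checks the peak-power hypothesis $\max_j\|x_j\|_F \le \delta_n n^{1/4}$ is exactly what Lemma~\ref{lem:thirdmom} needs for the Berry--Esseen step, which is why it appears in the statement.
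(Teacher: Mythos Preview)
Your overall architecture matches the paper: restrict to the power sphere, apply the meta-converse, split the codebook into $\mathcal{C}_{\text{good}}=\{V_n(x^n)\ge V(P)-\delta\}$ and $\mathcal{C}_{\text{bad}}$, and handle $\mathcal{C}_{\text{good}}$ via Berry--Esseen (Lemma~\ref{lem:thirdmom}) with $Q=(P_{Y,H}^*)^n$. That part is fine.

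The genuine gap is in your treatment of $\mathcal{C}_{\text{bad}}$. You keep the \emph{same} auxiliary distribution $(P_{Y,H}^*)^n$, and then observe (correctly) that this only yields $\log|\mathcal{C}_{\text{bad}}|\le nTC(P)+O(\sqrt n)$, which is useless. Your attempts to rescue this---``$|\mathcal{C}_{\text{bad}}|/M\to 0$ or else error exceeds $\epsilon$'', or the empirical-distribution argument---do not work as stated: the first has no mechanism connecting membership in $\mathcal{C}_{\text{bad}}$ to decoding error, and the second is circular (you would need the code's rate to already be $C(P)-o(1)$ to invoke Fano, but that is what you are proving). The paper's resolution is to \emph{change the auxiliary distribution} on the bad set: it introduces a constrained capacity
\[
\tilde C(P,\delta)=\sup\{I(X;Y|H)/T:\ \E\|X\|_F^2\le TP,\ \E[V_1(X)]\le V(P)-\delta\}
\]
and proves (Lemma~\ref{lem:tildec}, via a Prokhorov/compactness argument that is exactly the ``near-capacity $\Rightarrow$ near-caid'' intuition you were reaching for) that $\tilde C(P,\delta)<C(P)-\tau$ for some $\tau>0$. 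Taking $Q$ to be the $n$-fold caod $\tilde P_{Y,H}^{*n}$ of \emph{this} constrained problem, every $x^n\in\mathcal{C}_{\text{bad}}$ satisfies $\E[\tilde i(x^n;Y^n,H^n)]\le n\tilde C(P,\delta)$ by the capacity saddle-point, and a Chebyshev bound then gives $\log|\mathcal{C}_{\text{bad}}|\le n(C(P)-\tau/2)$---strictly sub-capacity, hence negligible. A second nontrivial ingredient you are missing is Lemma~\ref{lem:tildevar}: one needs $\Var[\log\frac{dP_{Y,H|X=x}}{d\tilde P_{Y,H}^*}]\le K(1+\|x\|_F^2)$ uniformly in the (unknown) $\tilde P_{Y,H}^*$, which the paper obtains via the Gaussian Poincar\'e inequality and a regularity estimate on $\nabla\log f_{Y|H}$.
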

\begin{proof}
As usual, without loss of generality we may assume that all codewords belong to~$S_n$ as defined in~\eqref{eq:powersp},
see~\cite[Lemma 39]{PPV08}.
The maximal probability of error code size is bounded by a meta-converse theorem~\cite[Theorem~31]{PPV08}, which states that for any $(n,M,\epsilon)$ code and distribution $Q_{Y^nH^n}$ on the output space of the channel,
\begin{align}\label{eq:conv_bound}
\frac{1}{M} \geq \inf_{x^n} \beta_{1-\epsilon}(P_{Y^nH^n|X=x^n},Q_{Y^nH^n})\,,
\end{align}
where infimum is taken over all codewords.
The main problem is to select $Q_{Y^n H^n}$ appropriately. We do this separately for the two subcodes defined as
follows. Fix arbitrary $\delta>0$ (it will be taken to 0 at the end) and introduce:
\begin{align}
&\mathcal{C}_l \triangleq \mathcal{C} \cap \{x^n : V_n(x^n) \leq n(V(P) - \delta) \}\\
&\mathcal{C}_u \triangleq \mathcal{C} \cap \{x^n : V_n(x^n) > n(V(P) - \delta) \}\ .
\end{align}
To bound the cardinality of $\matc_u$, we select $Q_{Y^n H^n} = (P_{Y,H}^*)^n$ to be the $n$-product of the
caod~\eqref{eq:mimo_caod}, then apply the following estimate from~\cite[Lemma 14]{YP10}, quoted here: for any $\Delta>0$ we have
\begin{equation}
	\log\beta_{1-\epsilon}(P_{Y^nH^n|X=x^n}, P_{Y,H}^{* n}) \ge 
		 -n D_n(x^n) - \sqrt{nV_n(x^n)} Q^{-1}\left( 1 - \epsilon - {B_n(x^n)+\Delta\over
		\sqrt{n}}\right)  - {1\over2}\log {n\over \Delta^2}\,,\label{eq:beta_as_l}
\end{equation}		
where $D_n$, $V_n$ and $B_n$ are given by~\eqref{eq:dndef2},~\eqref{eq:cvx_nto1} and~\eqref{eq:bndef}, respectively.
We choose $\Delta = n^{1\over 4}$ and then from Lemma~\ref{lem:thirdmom} (which relies on the assumption that $\|x_j\|_F \leq \delta n^{\frac{1}{4}}$) we get that for some constants $K_1,K_2$ we
have for all $x^n \in \matc_u$:
$$ B_n(x^n) + \Delta \le K_1 \delta_n^2 \sqrt{n} + K_2 n^{1\over 4} + \frac{K_3}{n^{1/2}}\,.$$
From~\eqref{eq:conv_bound} and~\eqref{eq:beta_as_l} we therefore obtain
\begin{equation}\label{eq:gdf1}
	\log |\matc_u| \le nTC(P) - \sqrt{nT(V(P)-\delta)} Q^{-1}(\epsilon - \delta''_n) + {1\over 4} \log n\,,
\end{equation}
where $\delta''_n = {K_1 \delta_n^2 + K_2 n^{-{1\over 4}}} \to 0$ as $n\to \infty$.

Next we proceed to bounding $|\matc_l|$. To that end, we first state two lemmas.  Lemma \ref{lem:tildec} shows that, if in addition to the power constraint $\E[\|X\|_F^2] \leq TP$, we also required $\E[V_1(X)] \leq V(P) - \delta$, then the capacity of this variance-constrained channel is strictly less than without the latter constraint.

\begin{lemma}\label{lem:tildec} Consider the following constrained capacity:
\begin{equation}\label{eq:tildecp}
\tilde C(P,\delta) \eqdef \frac{1}{T} \sup_X \left\{ I(\mat{X};\mat{Y}|\mat{H}): \E[\|\mat{X}\|_F^2] \le TP,
\E[V_1(\mat{X})] \leq V(P)-\delta\right\}\,,
\end{equation}
where $V(P)$ is from~\eqref{eq:vmindef} and $V_1(x)$ is from~\eqref{eq:cvx1}. 
For any $\delta > 0$ there exists $\tau = \tau(P,\delta)>0$ such that $\tilde C(P,\delta)<C(P)-\tau$.
\end{lemma}
\begin{remark} Curiously, if we used constraint $\EE[V_1(X)] > V(P)+\delta$ instead of $\E[V_1(\mat{X})] \leq V(P)-\delta$ in \eqref{eq:tildecp}, then the resulting capacity equals $C(P)$ regardless of $\delta$.
\end{remark}

The following Lemma shows that, with the appropriate choice of an auxiliary distribution $Q_{Y^n,H^n}$, the expected size of the normalized log likelihood ratio is strictly smaller than capacity, while the variance of that same ratio is upper bounded by a constant (i.e. does not scale with $n$).

\begin{lemma}\label{lem:tildevar}
Define the auxiliary distribution
\begin{align}
Q_{Y|H}(y|h) =
\begin{cases}
P^*_{Y|H}(y | h) & \|h\|_F^2 > A\\
\tilde{P}^*_{Y|H}(y|h) & \|h\|_F^2 \leq A
\end{cases}
\end{align}
where $A > 1$ is a constant, $P^*_{Y|H}(y | h)$ is the caod for the MIMO-BF channel, and $\tilde{P}^*_{Y|H}(y|h)$ is the caod for the variance-constrained channel in \eqref{eq:tildecp}.  Let $Q_{Y,H} = P_HQ_{Y|H}$, and $Q_{Y^n,H^n} = \prod_{i=1}^n Q_{Y,H}$.  Then there exists constants $\tau, K > 0$ such that for all $x^n \in \mathcal{C}_l$,
\begin{align}
C_n &\triangleq \frac{1}{nT}\EE\left[\log\frac{P_{Y^n,H^n|X^n}}{Q_{Y^n,H^n}}(Y^n,H^n|x^n) \right] \leq C(P)-\tau\label{eq:cnvar} \\
V_n &\triangleq \frac{1}{nT}\Var\left( \log\frac{P_{Y^n,H^n|X^n}}{Q_{Y^n,H^n}}(Y^n,H^n|x^n) \right)\label{eq:vnvar} \leq K
\end{align}
where $Y_i = H_ix_i + Z_i$, $i=1,\hdots,n$ is the joint distribution.
\end{lemma}

\begin{remark}
The reason we let $Q_{Y|H}$ take on two distributions depending on the value of $H$ is because we do not know the form of $\tilde{P}^*_{Y|H}$, hence we do not explicitly know how it depends on $H$.  This choice of $Q_{Y|H}$ ensures that expectations involving $\tilde{P}^*_{Y|H}$ are finite.
\end{remark}

Choose $Q_{Y,H}$ as in Lemma \ref{lem:tildevar}, so that the bounds on $C_n$, $V_n$ from \eqref{eq:cnvar}, \eqref{eq:vnvar} respectively, hold. Applying~\cite[Lemma 15]{YP10} with $\alpha = 1-\epsilon$ (the statement of this lemma is the contents of \eqref{eq:lemapp1}), we obtain
\begin{align}
\log \beta_{1-\epsilon}(P_{Y^n, H^n |X^n=x^n}, \tilde P_{Y,H}^{*n}) &\ge -nTC_n -\sqrt{\frac{2nTV_n}{1-\epsilon}} - \log\frac{1-\epsilon}{2}\label{eq:lemapp1}\\
&\ge -nT(C(P)-\tau) - \sqrt{2nTK\over 1-\epsilon}+\log {1-\epsilon\over 2}\,.
\end{align}
Therefore, from~\eqref{eq:conv_bound} we conclude that for all $n\ge n_0(\delta)$ we have
\begin{equation}\label{eq:gdf2}
	\log |\matc_l| \le nT\left(C(P)-{\tau\over 2}\right)\,.
\end{equation}
Overall, from~\eqref{eq:gdf1} and~\eqref{eq:gdf2} we get (due to arbitrariness of $\delta$) the
statement~\eqref{eq:mainconv}.
\end{proof}

\begin{proof}[Proof of Lemma~\ref{lem:tildec}]

Introduce the following set of distributions:
\begin{align}
\mathcal{P}' \triangleq \left\{ P_X : \E[\|X\|_F^2] \leq TP,\  \E[V_1(X)] \leq V-\delta \right\}\ .
\end{align}
By Prokhorov's criterion (e.g. \cite[Theorem 5.1]{billingsley2013convergence}, tightness implies relative compactness), the norm constraint implies that this set is relatively compact in the topology of weak
convergence. So there must exist a sequence of distributions $\tilde P_n \in \matp'$ s.t. $\tilde P_n\stackrel{w}{\to}\tilde P$ and
$I(\tilde X_n; H \tilde X_n+Z|H) \to \tilde C(P,\delta)$ where $\tilde X_n \sim \tilde P_n$. By Skorokhod representation \cite[Theorem 6.7]{billingsley2013convergence}, we may assume
$\tilde X_n \stackrel{a.s.}{\to} \tilde X \sim \tilde P$, i.e. there exists random variable $\tilde X$ that is the pointwise limit of the $\tilde{X}_n$'s. Notice that for any continuous bounded function $f(h,y)$ we
have
$$ \EE[f(H, H \tilde X_n + Z)] \to \EE[f(H, H \tilde X + Z)]\,,$$
and therefore $P_{\tilde Y_n, H} \stackrel{w}{\to} P_{\tilde Y,H}$. Assume (to arrive at a contradiction) that $\tilde
C(P,\delta)=C(P)$, then by the golden formula, cf.~\cite[Theorem 3.3]{PW2016notes}, we have
\begin{align} I(\tilde X_n; H \tilde X_n + Z|H) &= D(P_{Y H| X}  \| P_{Y,H}^* | P_{\tilde X_n}) - D(P_{\tilde Y_n, H}\|P_{Y,H}^*)\\
			  	   &= \EE[D_1(\tilde X_n)] - D(P_{\tilde Y_n, H}\|P_{Y,H}^*)\label{eq:gff1}\\
				   &\le C(P) - D(P_{\tilde Y_n, H}\|P_{Y,H}^*)\,,
\end{align}				   
where $D_1(x)$ is from~\eqref{eq:dndef2}. Therefore, we have
$$ D(P_{\tilde Y_n, H}\|P_{Y,H}^*) \to 0\,.$$
From weak lower-semicontinuity of divergence~\cite[Theorem 3.6]{PW2016notes} we have $D(P_{\tilde Y, H}\|P_{Y,H}^*)=0$.
In particular, if we denote $X^*$ to have Telatar distribution~\eqref{eq:telatar}, we must have 
\begin{equation}\label{eq:gff2}
	\EE[\|\tilde Y\|_F^2] = \EE[\|H \tilde X + Z\|_F^2] = \EE[\|H X^* + Z\|_F^2]\,.
\end{equation}
From Lemma~\ref{lem:var_w} (see Remark \ref{rem:exp_hx}) we have
\begin{equation}\label{eq:gff5}
	\EE[\|Hx\|_F^2] = {\EE[\|H\|_F^2]\over n_t} \|x\|_F^2
\end{equation}
and hence from the independence of $Z$ from $(H,X)$ we get 
$$ \EE[\|H \tilde X + Z\|_F^2] = {\EE[\|H\|_F^2]\over n_t} \EE[\|\tilde X\|_F^2] + n_r T\,,$$
and similarly for the right-hand side of~\eqref{eq:gff2}. We conclude that 
$$ \EE[\|\tilde X\|_F^2] = \EE[\|X^*\|_F^2] = TP\,.$$
Finally, plugging this fact into the expression for $D_1(x)$ in~\eqref{eq:dndef2} and~\eqref{eq:gff1} we obtain
$$ I(\tilde X; H\tilde X+Z|H) = \EE[D_1(\tilde X_n)] = C(P)\,.$$
That is, $\tilde X$ is a caid. But from Fatou's lemma we have (recall that $V_1(x)\ge 0$ since it is a variance)
$$ \EE[V_1(\tilde X)] \le \liminf_{n\to \infty} \EE[V_1(\tilde X_n)] \le V(P)-\delta\,,$$
where the last step follows from $\tilde P_n \in \matp'$. A caid achieving conditional variance strictly less than $V(P)$ contradicts
the definition of $V(P)$, cf.~\eqref{eq:vmindef}, as the infimum of $\EE[V_1(X)]$ over all caids.
\end{proof}

\begin{proof}[Proof of Lemma~\ref{lem:tildevar}] 
First we analyze $C_n$ from \eqref{eq:cnvar}.  Denote
\begin{align}
i(x;y,h) &= \log\frac{P_{Y|H,X}}{P^*_{Y|H}}(y|h,x)\\
\tilde{i}(x;y,h) &= \log\frac{P_{Y|H,X}}{\tilde{P}^*_{Y|H}}(y|h,x)\,.
\end{align}
Here, $i(x;y,h)$ is the information density given by \eqref{eq:mimo_info_density}, while $\tilde{i}(x;y,h)$ instead has the caod for the variance-constrainted channel \eqref{eq:tildecp} in the denominator.  Since $Q_{Y|H}$ takes on one of two distributions based on the value of $H$, conditioning on $H$ in two ways yields
\begin{align}
C_n &= \frac{1}{nT}\E\left[\log\frac{P_{Y^n,H^n|X^n}}{Q_{Y^n,H^n}}(Y^n,H^n|x^n) \right]\label{eq:lememaster}\\
&= \frac{1}{nT}\sum_{j=1}^n \E\left[ i(x_j;Y_j,H_j) \middle| \|H_j\|_F^2 > A\right] \PP[\|H_j\|_F^2 > A]\label{eq:leme1}\\
&+ \frac{1}{nT}\sum_{j=1}^n \E\left[ \tilde{i}(x_j,Y_j,H_j) \middle| \|H_j\|_F^2 \leq A\right] \PP[\|H_j\|_F^2 \leq A]\label{eq:leme2}\,.
\end{align}
The $H_j$'s are i.i.d. according to $P_H$, so we define $p \triangleq \PP[\|H_j\|_F^2 > A]$. Using capacity saddle point, \eqref{eq:leme1} is bounded by
\begin{align}
\frac{p}{nT} \E\left[ \sum_{j=1}^n i(x_j;Y_j,H_j) \middle| \|H_j\|_F^2 > A\right]\label{eq:leme1bound}
&\leq pC(P_{H>A})
\end{align}
where $C(P_{H})$ denotes the capacity of the MIMO-BF channel with fading distribribution $P_H$, and $P_{H>A}$ denotes the distribution of $H$ conditioned on $\|H\|_F^2 > A$ (similarly, $P_{H\leq A}$ will denote $H$ conditioned on $\|H\|_F^2 \leq A$).  \eqref{eq:leme1bound} follows from the fact that the information density, i.e. $\log \frac{P_{Y|H,X}}{P^*_{Y|H}}(y|h,x)$, is not a function of $P_H$, hence changing the distribution $P_H$ does not affect the form of $i(x;y,h)$.  Similarly, using Lemma \ref{lem:tildec}, \eqref{eq:leme2} is bounded by
\begin{align}
\frac{1-p}{nT}\E\left[\sum_{j=1}^n \tilde{i}(X_j;Y_j,H_j) \middle| \|H_j\|_F^2 \leq A\right] &\leq (1-p)\tilde{C}(P_{H\leq A})\\ 
&= (1-p)(C(P_{H\leq A}) - \tau')\label{eq:leme1bound2}
\end{align}
where $\tau' > 0$ is a positive constant, and $\tilde{C}(P_H)$ denotes the solution to the optimization problem \eqref{eq:tildecp} when the fading distribution is $P_H$.  Putting together \eqref{eq:leme1bound} and \eqref{eq:leme1bound2}, we obtain an upper bound on $C_n$,
\begin{align}\label{eq:lemccomb}
C_n \leq pC(P_{H>A}) + (1-p)(C(P_{H\leq A}) - \tau')\,.
\end{align}
Note that $C(P_H) = \E_{P_H}\left[\log\det(I_{n_r} + P/n_tHH^T)\right]$, so the capacity only depends on $P_H$ through the expectation -- the expression inside is not a function of $P_H$ because the i.i.d. Gaussian caid achieves capacity for all isotropic $P_H$'s. Hence, by the law of total expectation, \eqref{eq:lemccomb} simplifies to
\begin{align}\label{eq:lemecapb}
C_n \leq C(P_H) - (1-p)\tau'\,.
\end{align}
Finally, we can upper bound $p$ using Markov's inequality as
\begin{align}
p = \PP[\|H_1\|_F^2 > A] \leq \frac{1}{A}
\end{align}
since $A > 1$.  Applying this bound to \eqref{eq:lemecapb}, we obtain
\begin{align}
C_n &\leq C(P_H) - (1-p)\tau'\\
&\leq C(P_H) - \left(1 - \frac{1}{A}\right)\tau'\,.
\end{align}
Defining $\tau \triangleq (1 - 1/A)\tau'$ completes the proof of \eqref{eq:cnvar}.\\

Next we analyze $V_n$ from \eqref{eq:vnvar}.  The strategy will be to decompose \eqref{eq:vnvar} into two terms depending on the value of $\|H\|_F^2$, then show that each term is upper bounded by $A_1 + A_2\sum_{j=1}^n \|x_j\|_F^4$, where $A_1, A_2$ are constants not depending on $x^n$.  Finally, we will show that $\sum_{j=1}^n\|x_j\|_F^4 = O(n)$ when $x^n \in \mathcal{C}_l$.  To this end,
\begin{align}
V_n &= \frac{1}{nT}\Var\left( \log\frac{P_{Y^n,H^n|X^n}}{Q_{Y^n,H^n}}(Y^n,H^n|x^n) \right)\\
&= \frac{1}{nT}\sum_{j=1}^n \Var\left( \log\frac{P_{Y,H|X}}{Q_{Y,H}}(Y_j,H_j|x_j) \right) \label{eq:lemv210}\\
&\leq \frac{1}{nT}\sum_{j=1}^n \E\left[ \left(\log\frac{P_{Y,H|X}}{Q_{Y,H}}(Y_j,H_j|x_j) \right)^2 \right] \label{eq:lemv211}
\end{align}
where \eqref{eq:lemv210} follows from the independence of the terms, and \eqref{eq:lemv211} is from the bound $\Var(X) \leq \E[X^2]$. Again we condition on $H$ in two ways,
\begin{align}
V_n &\leq \frac{p}{nT}\sum_{j=1}^n \E\left[i(x_j;Y_j;H_j)^2 \middle| \|H_j\|_F^2 > A\right] \label{eq:lemv1}\\
&+ \frac{1-p}{nT}\sum_{j=1}^n \E\left[ \tilde{i}(x_j;Y_j,H_j)^2 \middle| \|H_j\|_F^2 \leq A\right] \label{eq:lemv2}\,.
\end{align}
For the first term, \eqref{eq:lemv1}, we know the expression for $i(x;y,h)$ from \eqref{eq:mimo_info_density}, so we simply upper bound $i(x;y,h)^2$.  To this end,
\begin{align}
i(x;y,h)^2 &\leq 2\left(\frac{T}{2}\log\det\left(I_{n_r} + \frac{P}{n_t}hh^T\right)\right)^2 +  2 \left(\frac{\log e}{2}\sum_{j=1}^{n_{\min}}\frac{\lambda_j^2\|v_j^T x\|^2 + 2\lambda_j\inprod{v_j^T x}{\tilde{z}_j} - \frac{P}{n_t}\lambda_j^2\|\tilde{z}_j\|^2}{1+\frac{P}{n_t} \lambda_j^2}\right)^2\\
&\leq C_1 \|h\|_F^2 + C_2\|x\|_F^4 + C_3(\tilde{z}_j)\|x\|_F^2 + C_4(\tilde{z_j})\label{eq:lemvi2b}
\end{align}
where $C_1, C_2$ are non-negative constants, and $C_3(\tilde{z}_j), C_4(\tilde{z_j})$ are functions of only $\tilde{z}_j$ that have bounded moments.  This follows from:
\begin{itemize}
\item Bounding the first term via
\begin{align}
\left(\frac{T}{2}\log\det\left(I_{n_r} + \frac{P}{n_t}hh^T\right)\right)^2 \leq \log^2(e) \frac{PT^2}{4n_t}n_{\min} \|h\|_F^2\,,
\end{align}
which can be derived from the basic inequality $\log(1+x) \leq \log(e)\sqrt{x}$.
\item Noting that the second term is bounded in $h$, since for all $\lambda \in \mathbb{R}$,
\begin{align}
\frac{|\lambda|}{1 + \frac{P}{n_t}\lambda^2} &\leq \frac{1}{2\sqrt{\frac{P}{n_t}}}\\
\frac{\lambda^2}{1 + \frac{P}{n_t}\lambda^2} & \leq \frac{n_t}{P}\,.
\end{align}
\item Noting that all moments of $\|\tilde{z}_j\|^2$ are finite because this is the norm of a standard normal vector.
\end{itemize}
Therefore, after taking the expectation of \eqref{eq:lemvi2b} and summing over all $n$, we obtain
\begin{align}
\frac{p}{nT}\sum_{j=1}^n \E\left[i(x_j;Y_j;H_j)^2 \middle| \|H_j\|_F^2 > A\right] 
\leq \frac{1}{nT}\left( C_5 + C_6\sum_{j=1}^n \|x_j\|_F^4\right)\label{eq:lemvib}
\end{align}
for some non-negative constants $C_5, C_6$.

To bound the second term, \eqref{eq:lemv2}, first we split the logarithm as
\begin{align}
\E&\left[ \tilde{i}(x_j;Y_j,H_j)^2 \middle| \|H_j\|_F^2 \leq A\right]\\
&\leq 2\E\left[\log\left( P_{Y|H,X}(Y_j|H_j,x_j) \right)^2  \middle| \|H_j\|_F^2 \leq A \right] + 2\E\left[\log\left( \tilde{P}^*_{Y|H}(Y_j|H_j) \right)^2  \middle| \|H_j\|_F^2 \leq A\right]\label{eq:lemv2terms}
\end{align}
The first term in \eqref{eq:lemv2terms} is simple to handle, since its expression is given by the definition of the channel,
\begin{align}
\E\left[\log\left( P_{Y|H,X}(Y_j|H_j,x_j) \right)^2  \middle| \|H_j\|_F^2 \leq A \right] &= \E\left[\left( -\frac{n_rT}{2}\log(2\pi) - \frac{1}{2}\|Z_j\|_F^2 \right)^2 \right]\\&\leq \frac{1}{2}n_rT\log^2(2\pi) + \frac{1}{2}n_rT(2 + n_r T)\\
&\triangleq K_1\label{eq:lemvt1b}
\end{align}
i.e. we have a constant upper bound.  For the second term in \eqref{eq:lemv2terms}, notice that $\tilde{P}^*_{Y,H}$ that is inducible through channel, i.e. there exists an input distribution $P_X$ such that $\tilde{P}^*_{Y,H}(y,h) = \E[P_{Y,H|X}(y,h|X)]$.  Using this fact, we obtain the bound
\begin{align}
-\log \tilde{P}^*_{Y|H}(y|h) &= -\log \E[P_{Y|H,X}(y|h,X)]\label{eq:vl211}\\
&\leq \E[-\log P_{Y|H,X}(y|h,X)]\label{eq:vl212}\\
&= \E\left[ \frac{n_r T}{2}\log(2\pi) + \frac{1}{2}\|y - hX\|_F^2\right]\label{eq:vl213}\\
&\leq \frac{n_r T}{2}\log(2\pi) + \|y\|_F^2 + TP\|h\|_F^2\label{eq:vl214}
\end{align}
where~\eqref{eq:vl212} follows from Jensen's inequality, \eqref{eq:vl213} is from the definition of the channel, and \eqref{eq:vl214} follows from applying the inequality $\|A + B\|_F^2 \leq 2\|A\|_F^2 + 2\|B\|_F^2$ along with $\|hX\|_F^2 \leq \|h\|_F^2\|X\|_F^2$, then noting that $X$ satisfies $\E[\|X\|_F^2] = TP$.    Using this, we can bound the second term in \eqref{eq:lemv2terms} via
\begin{align}
\E&\left[ \log\left( \tilde{P}^*_{Y|H}(Y_j|H_j) \right)^2 \middle| \|H_j\|_F^2 \leq A\right]\\
&\leq \E\left[ \left(\frac{n_r T}{2}\log(2\pi) + \|Y_j\|_F^2| + TP\|H_j\|_F^2\right)^2 \middle| \|H_j\|_F^2 \leq A \right]\label{eq:lemv312}\\
&\leq \E\left[ 3\frac{n_r^2 T^2}{4}\log^2(2\pi) + 3\|Y_j\|_F^4 + 3T^2P^2\|H_j\|_F^4 \middle| \|H_j\|_F^2 \leq A \right]\label{eq:lemv313}\\
&\leq K_2 + K_3\|x\|_F^4\label{eq:lemv314}
\end{align}
where $K_2,K_3$ are non-negative constants which do not depend on $x$, \eqref{eq:lemv312} is from the above bound \eqref{eq:vl214}, and \eqref{eq:lemv314} follows from applying the bound
\begin{align}
\E\left[\|Y_j\|_F^4 \middle| \|H_j\|_F^2 \leq A\right] &= \E\left[\|H_jx_j + Z_j\|_F^4 \middle| \|H_j\|_F^2 \leq A \right]\\
&\leq 8\E\left[\|H_j\|_F^4 \middle| \|H_j\|_F^2 \leq A\right] \|x_j\|_F^4 + 16n_r^2T^2\\
&\leq 8A\|x_j\|_F^4 + 16n_r^2T^2\,.
\end{align}
Putting together \eqref{eq:lemv314} and \eqref{eq:lemvt1b}, we obtain an upper bound on \eqref{eq:lemv2},
\begin{align}
\frac{1-p}{nT}\sum_{j=1}^n \E&\left[ \tilde{i}(x_j;Y_j,H_j)^2 \middle| \|H_j\|_F^2 \leq A\right] 
\leq \frac{2(1-p)}{nT}\left(K_3 + K_4 + K_5\sum_{j=1}^n \|x_j\|_F^4\right)\label{eq:lemvtildeb}\,.
\end{align}
Now, since $x^n \in \mathcal{C}_l$ by assumption, we can control the quantity $\sum_{i=1}^n \|x_i\|_F^4$ via
\begin{align} 
\sum_{i=1}^n \|x_i\|_F^4 &\leq \sum_{i=1}^n V_1(x_i)\\ 
&\leq n(V(P)-\delta)\,,
\end{align}
where the first inequality follows from the non-negativity of the terms in $V_1(x)$ given in Proposition \ref{prop:cond_var_x}, and the second inequality is from the definition of $\mathcal{C}_l$.  Hence the sum of fourth powers of the $\|x_i\|_F$'s is $O(n)$ on $\mathcal{C}_l$.  All together, combining \eqref{eq:lemvtildeb} and \eqref{eq:lemvib} yields the following bound on $V_n$,
\begin{align}
V_n &\leq \frac{1}{n}\left(K' + K''\sum_{j=1}^n\|x_i\|_F^4\right)\\
&\leq K
\end{align}
which completes the proof of \eqref{eq:vnvar}.
\end{proof}


\section{The rank 1 case}\label{sec:MISO}

When $H$ is rank 1, for example in the MISO case, i.e. $n_t > n_r = 1$, the MIMO-BF channel has multiple input distributions that achieve capacity, as shown in Theorem~\ref{thm:caid_conds}.  Theorem~\ref{thm:mimo_dispersion} proved that the dispersion in the general MIMO-BF channel is given by~\eqref{eq:vmindef}, where we minimize the conditional variance of the information density over the set of caids.  In this section, we analyze those minimizers for the rank 1 case, which turns out to be non-trivial.

From Theorem~\ref{thm:rank1_disp}, when $H$ is rank 1, the conditional variance takes the form
\begin{align}
V(P) = K_1 - K_2 v^*(n_t,T)
\end{align}
where $K_1,K_2 > 0$ are constants that depend on the channel parameters but not the input distribution.  From~\eqref{eq:v_star_minimization}, computing $v^*(n_t, T)$ requires us to maximize the variance of the squared Frobenius norm of the input distribution over the set of caids.  Intuitively, this says that minimizing the dispersion is equivalent to maximizing the amount of correlation amongst the entries of $X$ when $X$ is jointly Gaussian. In a sense, this asks for the capacity achieving input distribution having the  least amount of randomness.

Here we characterize $v^*(n_t,T)$.  The manifold of caids is not easy to optimize over, since one must account for all
the independence constraints on the rows and columns, the covariance constraints on the $2\times 2$ minors, positive
definite constraints, etc. as described in Theorem~\ref{thm:caid_conds}.  Our strategy instead will be to give an upper
bound on $v^*(n_t,T)$, then show that for certain pairs $(n_t,T)$, the upper bound is tight.  Before stating the main
theorem of the section, we review orthogonal designs, which will play a large role in the solution to this problem.

\subsection{Orthogonal designs}\label{sec:od}

\begin{defn}[Orthogonal Design]
A real $n\times n$ orthogonal design of size $k$ is defined
to be an $n\times n$ matrix $A$ with entries given by linear forms in $x_1, \hdots, x_k$ and coefficients in $\mreals$ satisfying
\begin{equation}\label{eq:orthog}
A^TA = \left(\sum_{i=1}^k x_i^2\right) I_n
\end{equation}
\end{defn}

In other words, all columns of $A$ have squared Euclidean norm $\sum_{i=1}^k x_i^2$, and all columns are pairwise
orthogonal.  A common representation for an orthogonal design is the sum $A = \sum_{i=1}^k x_iV_i$ where
$\{V_1,\hdots,V_k\}$ is a collection of $n\times n$ real matrices satisfying Hurwitz-Radon
conditions~\eqref{eq:hr1}-\eqref{eq:hr2}.
Such collection is called a Hurwitz-Radon family.  Theorem~\ref{thm:radon-hurwitz} shows that the maximal cardinality of
a Hurwitz-Radon family is the \emph{Hurwitz-Radon number} $\rho(n)$, cf.~\eqref{eq:rho_funct}.

The definition of orthogonal designs can be generalized to rectangular matrices~\cite{tarokh1999space}, as follows:

\begin{defn}[Generalized Orthogonal Design]\label{def:god}
A generalized orthogonal design is a $p\times n$ matrix $A$ with $p\geq n$ with entries as linear forms of the
indeterminates $\{x_1,\hdots,x_k\}$ satisfying~\eqref{eq:orthog}.
\end{defn}
The quantity $R=k/p$ is often called the rate of the generalized orthogonal design.  This term is justify by noticing
that if $p$ represents a number channel uses and $k$ represents the number of data symbols, then $R$ represents sending
$k$ data symbols in $p$ channel uses. In this work, we are only interested in the case $R=1$ (i.e. $k=p$), called
\emph{full-rate orthogonal designs}.
Full-rate orthogonal design can be constructed from a Hurwitz-Radon family $\{V_1,\hdots,V_n\}$, each $V_i \in
\mathbb{R}^{k\times k}$ by forming the matrix $A$ 
\begin{align}\label{eq:god_a}
A = \left[ V_1x \ \cdots \ V_nx \right] 
\end{align}
where $x = [x_1,\hdots, x_k]^T$ is the vector of indeterminates. It follows immediately from this construction that~\eqref{eq:orthog} is satisfied.  Theorem~\ref{thm:radon-hurwitz} allows us to conclude that a generalized full rate $n\times k$ orthogonal design exists if and only if $n \leq \rho(k)$.

The following proposition shows that full rate orthogonal designs correspond to caids in the MIMO-BF channel.

\begin{proposition}\label{prop:od_to_caid}
Take $n_t = \rho(T)$ and a maximal  Hurwitz-Radon family $\{V_i, i=1,\ldots,n_t\}$ of $T\times T$ matrices (cf. Theorem~\ref{thm:radon-hurwitz}). Let $\xi \sim \mathcal{N}(0, P/n_t I_T)$ be an i.i.d. row-vector. Then the input distribution
\begin{equation}\label{eq:vrep}
    X = \left[ V_1^T \xi^T\,\, \cdots V_{n_t}^T \xi^T \right]^T 
\end{equation}
achieves capacity for any MIMO-BF channel provided $\PP[\rank H\le 1]=1$.
\end{proposition}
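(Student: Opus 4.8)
The plan is to invoke the sufficient condition for caids from Theorem~\ref{thm:caid_conds}, Part~2. Since $X$ in~\eqref{eq:vrep} is a fixed linear image of the Gaussian row-vector $\xi$, it is jointly zero-mean Gaussian, and by hypothesis $\PP[\rank \mat H\le 1]=1$; hence it suffices to verify that the rows $R_1,\dots,R_{n_t}$ of $X$ obey~\eqref{eq:row1}-\eqref{eq:row2}. Unwinding the column-stacking in~\eqref{eq:vrep} --- and recalling that a maximal Hurwitz-Radon family of $T\times T$ matrices has exactly $\rho(T)=n_t$ members by Theorem~\ref{thm:radon-hurwitz}, so the construction really does produce an $n_t\times T$ matrix --- the $i$-th row is $R_i=\xi V_i\in\mreals^{1\times T}$.

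First I would record the elementary identity $\E[\xi^T\xi]=\tfrac{P}{n_t}I_T$, valid because $\xi$ has i.i.d. $\matn(0,P/n_t)$ entries. This yields, for all $i,j$,
$$ \E[R_i^T R_j]=V_i^T\,\E[\xi^T\xi]\,V_j=\frac{P}{n_t}\,V_i^T V_j\,. $$
Now substitute the Hurwitz-Radon relations~\eqref{eq:hr1}-\eqref{eq:hr2}: for $i=j$ this gives $\E[R_i^T R_i]=\tfrac{P}{n_t}V_i^T V_i=\tfrac{P}{n_t}I_T$, which is~\eqref{eq:row1}; and for $i\neq j$, $\E[R_i^T R_j]+\E[R_j^T R_i]=\tfrac{P}{n_t}(V_i^T V_j+V_j^T V_i)=0$, which is~\eqref{eq:row2}. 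The sufficiency part of Theorem~\ref{thm:caid_conds}, Part~2, then makes $X$ a caid; as a consistency check, \eqref{eq:row1} forces $\E[\|X\|_F^2]=\sum_i\tr\E[R_i^T R_i]=n_t\cdot\tfrac{PT}{n_t}=TP$, so the power constraint holds with equality. An equivalent route is to verify~\eqref{eq:pcc_1} directly: $a^T X b=\xi\big(\sum_i a_iV_i\big)b$ is Gaussian with variance $\tfrac{P}{n_t}\,b^T\big(\sum_{i,j}a_ia_j V_i^TV_j\big)b=\tfrac{P}{n_t}\|a\|_2^2\|b\|_2^2$ by the same Hurwitz-Radon cancellation, and then apply Part~1.

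There is no substantive obstacle: the proof is the one-line expectation computation above combined with the caid characterization, which is already established. The only points that need a bit of care are bookkeeping --- matching the column/row shapes in~\eqref{eq:vrep} so that $R_i=\xi V_i$, and confirming the family has exactly $n_t$ matrices --- and checking that both hypotheses of the \emph{sufficiency} direction of Theorem~\ref{thm:caid_conds} (joint Gaussianity of $X$ and $\PP[\rank \mat H\le1]=1$) are in force, which they are.
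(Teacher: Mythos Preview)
Your proof is correct and follows essentially the same approach as the paper: identify the rows as $R_i=\xi V_i$, compute $\E[R_i^T R_j]=\tfrac{P}{n_t}V_i^T V_j$, and invoke the Hurwitz-Radon relations together with the sufficiency direction of Theorem~\ref{thm:caid_conds}, Part~2. If anything, your write-up is more careful than the paper's, since you explicitly verify joint Gaussianity and the rank hypothesis before appealing to sufficiency, and you note the alternative route via Part~1.
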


\begin{proof}
Since $\{V_1,\hdots,V_{n_t}\}$ is a Hurwitz-Radon family, they satisfy~\eqref{eq:hr1}-\eqref{eq:hr2}.
Form $X$ as in~\eqref{eq:vrep}.  Then each row and column is jointly Gaussian, and applying the caid conditions~\eqref{eq:row1} and~\eqref{eq:row2} from Theorem~\ref{thm:caid_conds} shows,
\begin{align}
&\E[R_i^TR_i] = V_i^T \E[\xi^T \xi] V_i = \frac{P}{n_t}V_i^TV_i =          \frac{P}{n_t}I_T\\
&\E[R_i^TR_j] = V_i^T  \E[\xi^T \xi] V_j = \frac{P}{n_t}V_i^TV_j = - \frac{P}{n_t}V_j^TV_i = -        \E[R_j^TR_i]
\end{align}
Therefore $X$ satisfies the caid conditions, and hence achieves capacity.  
\end{proof}

\begin{remark}
The above argument implies that if $X \in \mathbb{R}^{n_t\times T}$ is constructed above, then removing the last row of
$X$ gives an $(n_t - 1) \times T$ input distribution that also achieves capacity.
\end{remark}

\subsection{Proof of theorem~\ref{th:vstar}}

Theorem~\ref{th:vstar} states that for dimensions where orthogonal designs exist, the conditional variance \eqref{eq:vmindef} is minimized
if and only if the input is constructed from an orthogonal design as in Proposition~\ref{prop:od_to_caid}. The approach is
first to prove an upper bound on $v^*$, then show that conditions for tightness of the upper bound correspond to
conditions of the Hurwitz-Radon theorem.

We start with a simple lemma, which will be applied with $A,B$ equal to the rows of the capacity achieving input $X$.
\begin{lemma}\label{lem:proj_lemma}
Let $A = (A_1,\hdots,A_n)$ and $B = (B_1,\hdots,B_n)$ each be i.i.d. random vectors from the same distribution with finite second moment $\E[A_1^2] = \sigma^2 < \infty$.  While $A$ and $B$ are i.i.d. individually, they may have arbitrary correlation between them.  Then
\begin{align}\label{eq:cov_sum_lemma}
\sum_{i=1}^n \sum_{j=1}^n \Cov(A_i,B_j) \leq n\sigma^2
\end{align}
with equality iff $\sum_{i=1}^n A_i = \sum_{i=1}^n B_i$ almost surely.
\end{lemma}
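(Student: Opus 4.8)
The plan is to collapse the double sum into a single scalar covariance and then finish with a one-line second-moment argument. First I would set $S_A \eqdef \sum_{i=1}^n A_i$ and $S_B \eqdef \sum_{j=1}^n B_j$ and use bilinearity of covariance to write $\sum_{i,j}\Cov(A_i,B_j) = \Cov(S_A,S_B)$. Since the entries of $A$ are i.i.d.\ with $\Var(A_1)=\sigma^2$ (in the intended application $A$ is a row of a capacity-achieving input, hence zero-mean, so $\E[A_1^2]=\Var(A_1)=\sigma^2$), we get $\Var(S_A)=n\sigma^2$, and identically $\Var(S_B)=n\sigma^2$; moreover $\E[S_A]=\E[S_B]$ because $A$ and $B$ share the same marginal distribution.

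Next I would invoke the elementary bound
$$ 0 \;\le\; \Var(S_A - S_B) \;=\; \Var(S_A) + \Var(S_B) - 2\Cov(S_A,S_B) \;=\; 2n\sigma^2 - 2\Cov(S_A,S_B)\,, $$
which rearranges to $\Cov(S_A,S_B)\le n\sigma^2$, i.e.\ exactly~\eqref{eq:cov_sum_lemma}. For the equality case, equality above forces $\Var(S_A-S_B)=0$, so $S_A-S_B$ is almost surely a constant; since $\E[S_A-S_B]=0$ that constant must be $0$, giving $\sum_i A_i = \sum_i B_i$ a.s. Conversely, if $\sum_i A_i=\sum_i B_i$ a.s.\ then trivially $\Cov(S_A,S_B)=\Var(S_A)=n\sigma^2$. (If $\sigma=0$ everything degenerates: $A$ and $B$ are a.s.\ constant, both sides of~\eqref{eq:cov_sum_lemma} vanish, and $\sum_i A_i=\sum_i B_i$ holds a.s.)

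I do not anticipate a genuine obstacle; the argument is essentially positivity of $\Var(S_A-S_B)$, equivalently Cauchy--Schwarz applied to $S_A$ and $S_B$. The only things to keep a little care with are (i) that it is the i.i.d.\ (not merely identically distributed) structure of the components within each of $A$ and $B$ that yields $\Var(S_A)=\Var(S_B)=n\sigma^2$ with no cross terms, and (ii) that in the equality case it is precisely the \emph{sum} of the components, not any reordering of them, that must agree almost surely.
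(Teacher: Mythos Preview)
Your proposal is correct and essentially identical to the paper's proof: both collapse the double sum to $\Cov(S_A,S_B)$ via bilinearity and then bound it by $\sqrt{\Var(S_A)\Var(S_B)}=n\sigma^2$, with your $\Var(S_A-S_B)\ge 0$ being exactly the Cauchy--Schwarz step specialized to two variables of equal variance (as you yourself note). One small correction to your parenthetical: in the paper's application the $A_i$ are the \emph{squares} $X_{i,k}^2$, not the entries themselves, so they are not zero-mean; the lemma statement's ``$\E[A_1^2]=\sigma^2$'' is really meant as $\Var(A_1)=\sigma^2$, which is how both you and the paper use it.
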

\begin{proof}
Simply use the fact that covariance is a bilinear function, and apply the Cauchy-Schwarz inequality as follows:
\begin{align}
\sum_{i=1}^n \sum_{j=1}^n \Cov(A_i,B_j) &= \Cov\left(\sum_{i=1}^n A_i,  \sum_{j=1}^n B_j\right)\\
&\leq \sqrt{\Var\left(\sum_{i=1}^n A_i\right)\Var\left(\sum_{j=1}^n B_j\right)}\\
&= \sqrt{(n\Var(A_1))(n\Var(B_1))}\\
&= n\sigma^2
\end{align}
We have equality in Cauchy-Schwarz when $\sum_{i=1}^n A_i$ and $\sum_{i=1}^n B_i$ are proportional, and since these sums have the same distribution, the constant of proportionality must be equal to 1, so we have equality in~\eqref{eq:cov_sum_lemma} iff $\sum_{i=1}^n A_i = \sum_{i=1}^n B_i$ almost surely.
\end{proof}

\begin{proof}[Proof of Theorem~\ref{th:vstar}] First, we rewrite $v^*(n_t,T)$ defined in~\eqref{eq:v_star_minimization}
as
\begin{align} \label{eq:corr_sum2}
v^*(n_t,T) \triangleq {n_t^2\over 2P^2} \max_{P_X : I(X;Y|H) = C} \sum_{i=1}^{n_t}\sum_{j=1}^{n_t}\sum_{k=1}^T\sum_{l=1}^T 
\Cov(X_{i,k}^2,X_{j,l}^2)
\end{align}
From here, $v^*(n_t,T) = v^*(T,n_t)$ follows from the symmetry to transposition of the caid-conditions on $X$ (see Theorem~\ref{thm:caid_conds})
and symmetry to transposition of~\eqref{eq:corr_sum2}.  From now on, without loss of generality we assume $n_t \le T$.

For the upper bound, since the rows and columns of $X$ are i.i.d., we can apply Lemma \ref{lem:proj_lemma} with
$A_i=X_{i,k}^2$ and $B_j = X_{j,l}^2$  (and hence $\sigma^2 = 2(P/n_t)^2$) to get 
\begin{equation}\label{eq:cbound}
	\sum_{i,j,k,l} \Cov(X_{i,k}^2,X_{j,l}^2) \le \sum_{i,j} 2T (P/n_t)^2 = 2n_t^2 T (P/n_t)^2\,,
\end{equation}
which together with~\eqref{eq:corr_sum2} yields the upper bound~\eqref{eq:collins_sum} (recall that $n_t \le T$).

Equation~\eqref{eq:cbound} implies that if $X$ achieves the bound~\eqref{eq:collins_sum}, then removing
the last row of $X$ achieves~\eqref{eq:collins_sum} as an $(n_t-1)\times T$ design. In other words, if~\eqref{eq:collins_sum} is
tight for $n_t\times T$ then it is tight for all $n_t' \le n_t$.

Notice that for any $X$ such that any
pair $X_{i,k}$,$X_{j,l}$ is jointly Gaussian, we have
\begin{equation}\label{eq:corr_sum1}
	{n_t^2\over 2P^2} \Var(\|X\|_F^2)  = \sum_{i,j,k,l} \rho_{ikjl}^2\,,
\end{equation}
where
\begin{align}
\rho_{ikjl} \eqdef {n_t\over P} \Cov(X_{ik}, X_{jl})\,.
\end{align}
Take $X \in \mathbb{R}^{n_t\times T}$ as constructed in~\eqref{eq:vrep}.  
By Proposition~\ref{prop:od_to_caid}, $X$ is capacity achieving and identity~\eqref{eq:corr_sum1} clearly holds.  In the representation~\eqref{eq:vrep}, the matrix $V_j^TV_i$ contains the correlation coefficients between rows $i$ and $j$ of $X$, since $\E[(\xi V_j)^T(\xi V_i)] = \frac{P}{n_t} V_j^TV_i$, so
\begin{align}
\|V_j^TV_i\|_F^2 = \sum_{k=1}^T\sum_{l=1}^T \rho_{ikjl}^2\,.
\end{align}
Therefore we can represent the sum of squared correlation coefficients as
\begin{align}\label{eq:vrep2}
\sum_{i,j,k,l} \rho_{ijkl}^2 &= \sum_{i=1}^{n_t}\sum_{j=1}^{n_t} \|V_j^TV_i\|_F^2\\
&= \sum_{i=1}^{n_t}\sum_{j=1}^{n_t} \tr\left(V_jV_j^TV_iV_i^T\right)\\
&= \tr\left(\left(\sum_{i=1}^{n_t} V_i V_i^T\right)^2\right)\label{eq:bound_pf_trace}\\
&= n_t^2\,. T\label{eq:apply_hr_conds}
\end{align}
Line~\eqref{eq:apply_hr_conds} follows since the $V_i$'s are orthogonal by the Hurwitz-Radon condition, so each $V_i
V_i^T = I_T$ in the summation in~\eqref{eq:bound_pf_trace}.  Hence the $X$ constructed in~\eqref{eq:vrep} achieves the
upper bound in~\eqref{eq:cbound} and~\eqref{eq:collins_sum}.

Next we prove~\eqref{eq:collins_nontight}. Suppose $X$ is a jointly-Gaussian caid saturating the bound~\eqref{eq:cbound}. 
From Lemma~\ref{lem:proj_lemma}, the condition for equality
in~\eqref{eq:cov_sum_lemma} implies that for all $j \in \{1,\hdots,n_t\}$,
\begin{align}\label{eq:norm_equality}
\|R_j\|_F^2 = \|R_1\|_F^2 \ \ a.s.
\end{align}
where $R_j$ is the $j$-th row of $X$ for $j=1,\hdots,n_t$. In particular, this means that every $R_j$ is a linear
function of $R_1$. Consequently, we may represent $X$ in terms of a row-vector $\xi\sim \matn(0, P/n_t I)$ as
in~\eqref{eq:vrep}, that is $R_j = \xi V_j$ for some $T\times T$ matrices $V_j,j\in[n_t]$. We clearly have 
$$ \EE[R_i^T R_j] = {P\over n_t} V_i^T V_j\,.$$
But then the caid constraints~\eqref{eq:row1}-\eqref{eq:row2} imply that the matrix $A$ in~\eqref{eq:god_a} constructed
using indeterminates $\{x_1,\ldots,x_{n_t}\}$ and family $\{V_1,\ldots, V_{n_T}\}$ satisfies
Definition~\ref{def:god}. Therefore, from Theorem~\ref{thm:radon-hurwitz}, (see also~\cite[Proposition
4]{liang2003orthogonal}), we must have $n_T \le \rho(T)$.
\end{proof}
\begin{remark} In the case $n_t=T=2$ it is easy to show that for any non-jointly-Gaussian caid, there exists a
jointly-Gaussian caid achieving the same $\Var(\|X\|_F^2)$. Indeed, consider ~\eqref{eq:2x2_gsn_caids} with
$\rho={\cov(X_{1,1}^2, X_{2,2}^2) + \cov(X_{1,2}^2, X_{2,1}^2)\over 8 (P/n_t)^2}$. If this phenomena held in general, we would
conclude that~\eqref{eq:collins_tight} holds if and only if $n_t \le \rho(T)$ or $T \le \rho(n_t)$. As a step towards
the proof of the latter, we notice that any caid $X$ achieving equality in~\eqref{eq:cbound} satisfies
\begin{equation}\label{eq:xxt_miso}
		X X^T = {\|X\|_F^2\over n_t} I_{n_t} \qquad \mbox{(a.s.)}\,,
\end{equation}	
which is equivalent to saying $R_i R_j' = 0$ for $i\neq j$. The latter follows from applying~\eqref{eq:norm_equality} to
rows of $UX$, where $U$ is an arbitrary orthogonal matrix. Identity~\eqref{eq:xxt_miso} could be informally stated as ``any caid
saturating~\eqref{eq:cbound} is a random full-rate orthogonal design''. 
\end{remark}

In summary, the full-rate orthogonal designs (when those exist) achieve the optimal channel dispersion $V(P)$. Some
examples ($\xi_j$ are i.i.d. $\matn(0,1)$) for $n_t = T = 4$ and $n_t = 4, T = 3$, respectively, are as follows:
\begin{align} \label{eq:4x4}
&X = \sqrt{P\over 4}
\left[ \begin{array}{cccc}
\xi_1 & \xi_2 & \xi_3 & \xi_4\\
-\xi_2 & \xi_1 & -\xi_4 & \xi_3\\
-\xi_3 & \xi_4 & \xi_1  & -\xi_2\\
-\xi_4 & -\xi_3 & \xi_2 & \xi_1
\end{array} \right]\\
&X = \sqrt{P\over 4}
\left[ \begin{array}{cccc}
\xi_1 & \xi_2 & \xi_3 \\
-\xi_2 & \xi_1 & -\xi_4 \\
-\xi_3 & \xi_4 & \xi_1  \\
-\xi_4 & -\xi_3 & \xi_2 
\end{array} \right] \nonumber
\end{align}

\subsection{Beyond full-rate orthogonal designs}

For pairs $(n_t,T)$ where $n_t > \rho(T)$, full-rate orthogonal design do not exist. For example $\rho(3)=1$, so no
full-rate orthogonal design exits for $n_t = 2$, $T=3$. Which caids are minimizer for~\eqref{eq:vmindef} in this case?  
In general, we do not know the answer and do not even know whether one can restrict the search to
jointly-Gaussian caids. But one thing is certain: it is
definitely not an i.i.d. Gaussian (Telatar) caid. To show this claim, we will give a method for constructing improved
caids.

To that end, suppose that $X$ consists of entries $\pm \xi_j$, $j=1\hdots, d$, where $\xi_j
\stackrel{i.i.d.}{\sim}\matn(0, P/n_t)$. Then we have:
\begin{equation}\label{eq:sumrho2}
	{n_t^2 \over 2P} \Var(\|X\|_F^2) = \sum_{t=1}^d (\ell_t)^2\,, 
\end{equation}
where $\ell_t$ is the number of times $\pm\xi_t$ appears in the description of $X$. By this observation and the remark
after Theorem~\ref{thm:caid_conds} (any submatrix of a caid $X$ is also a caid), we can obtain lower bounds on $v^*(n_t, T)$ for $n_t > \rho(T)$ via the following
\textit{truncation construction}:
\begin{enumerate}
    \item Take $T'>T$ such that $\rho(T')\ge n_t$ and let $X'$ be a corresponding $\rho(T')\times T'$ full-rate
    orthogonal design with entries $\pm \xi_1,\ldots \pm \xi_{T'}$.
    \item Choose an $n_t \times T$ submatrix of $X'$ maximizing the sum of squares of the number of occurrences of
    each of $\xi_j$, cf.~\eqref{eq:sumrho2}.
\end{enumerate}

As an example of this method, by truncating a $4\times4$ design~\eqref{eq:4x4} we obtain the following $2\times3$ and
$3\times3$
submatrices:
\begin{align}\label{eq:v3x3}
X = \sqrt{\frac{P}{3}}
\left[ \begin{array}{ccc}
\xi_1 & \xi_2 & \xi_3\\
-\xi_2 & \xi_1 & \xi_4\\
-\xi_3 & -\xi_4 & \xi_1
\end{array} \right]
\ \ \ 
X = \sqrt{\frac{P}{2}}
\left[ \begin{array}{ccc}
\xi_1 & \xi_2 & \xi_3\\
-\xi_2 & \xi_1 & \xi_4
\end{array} \right]
\end{align}
By independent methods we were able to show that designs~\eqref{eq:v3x3} are dispersion-optimal out of all jointly
Gaussian caids. Note that in these cases~\eqref{eq:collins_tight} does not hold,
illustrating~\eqref{eq:collins_nontight}.

\begin{table}
\centering%
\footnotesize
\caption{Values for $v^*(n_t,T)$}
\label{tab:table1}
\begin{tabular}{| c || c | c | c | c | c | c | c | c |}
\hline
  $n_t \setminus T$ & 1 & 2 & 3 & 4 & 5 & 6 & 7 & 8\\
\hline\hline
 1 & 1 & 2 & 3 & 4 & 5 & 6 & 7 & 8\\
\hline
 2 &  & 8 & $10^*$ & 16 & 18 & 24 & 26 & 32\\
\hline
 3 &  &  & $21^*$ & 36 & [39,45]  & [46,54]  & [57,63]  & 72\\
\hline
 4 &  &  &  & 64 & [68,80]  & [80,96]  & [100,112]  &  128\\
\hline
 5 &  &  &  &  & [89,125] & [118,150]  & [155,175] &  200\\
\hline
 6 &  &  &  &  &  & [168,216] & [222,252]  &  288\\
\hline
 7 &  &  &  &  &  &  & [301,343] &  392\\
\hline
 8 &  &  &  &  &  &  &  &  512\\
\hline
\end{tabular}
\\[5pt]
Note: Table is symmetric about diagonal; intervals $[a,b]$ mark entries for which dispersion-optimal input is
unknown. The optimality of entries marked with $*$ is only established in the class of all jointly-Gaussian caids.
\end{table}

Our current knowledge about $v^*$ is summarized in Table~\ref{tab:table1}. The lower bounds for cases
not handled by Theorem~\ref{th:vstar} were computed by truncating the 8x8 orthogonal                            design~\cite[(5)]{tarokh1999space}.
Based on the evidence from $2\times T$ and $3\times 3$ we \textit{conjecture} this construction to be optimal.


From the proof of Theorem~\ref{th:vstar} it is clear that Telatar's i.i.d. Gaussian is
never dispersion optimal, unless $n_t=1$ or $T=1$. Indeed, for Telatar's input $\rho_{ikjl}=0$ unless $(i,k)=(j,l)$.
Thus embedding even a single $2 \times 2$ Alamouti block into an otherwise i.i.d. $n_t \times T$ matrix $X$ strictly
improves the sum~\eqref{eq:corr_sum2}.
We note that the value of ${V\over C^2}$ entering~\eqref{eq:minblock} can be quite sensitive to the suboptimal  choice of
the design. For
example, for $n_t=T=8$ and $SNR=20~dB$ estimate~\eqref{eq:minblock} shows that one needs
\begin{itemize}
    \item around 600 channel inputs (that is 600/8 blocks) for the optimal $8\times 8$ orthogonal design, or
    \item around 850 channel inputs for Telatar's i.i.d. Gaussian design
\end{itemize}
in order to achieve 90\% of capacity. This translates into a 40\% longer delay or battery spent in running the
decoder.

Thus, curiously even in cases where pure multiplexing (that is maximizing transmission rate) is needed -- as is  often the
case in modern cellular networks -- transmit diversity enters the picture by enhancing the finite blocklength   fundamental
limits. Remember, however, that our discussion pertains only to cases when the transmitter (base-station)  is  equipped with
more antennas than the receiver (user equipment), or when the channel does not have more than one diversity branch.

In cases when
full-rate designs do not exist, there have been various suggestions as to what could be the best solution,
e.g.~\cite{liang2003orthogonal}. Thus for non full-rate designs the property of minimizing dispersion (such
as~\eqref{eq:v3x3}) could be used for selecting the best design for cases $n_t >\rho(T)$.

\section{Discussion}\label{sec:discussion}

\begin{figure}[ht]
\centering
\includegraphics[width=.8\textwidth]{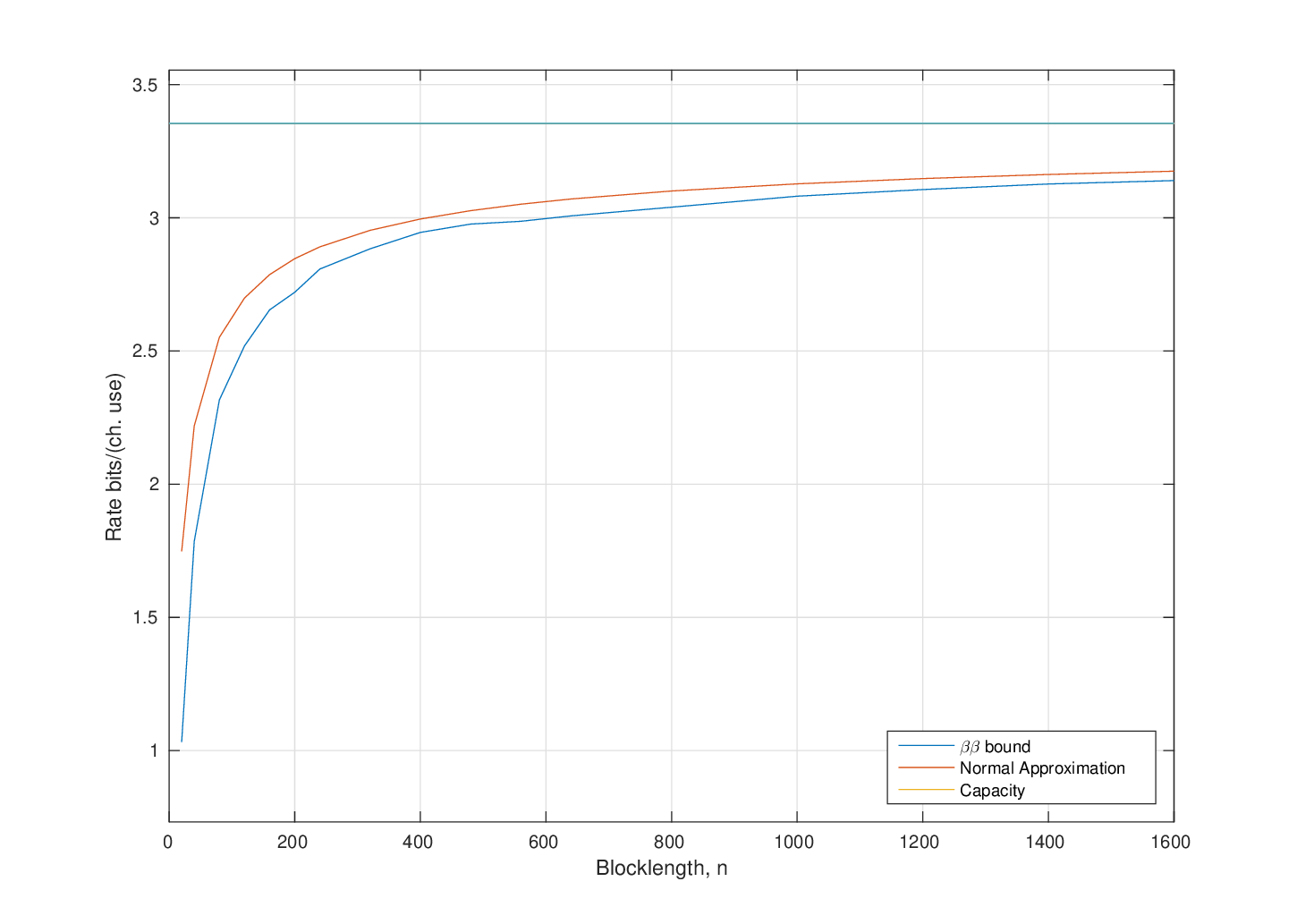}
\caption{Achievability and normal approximation for $n_t=n_r=T=4$, $P=0$dB, and $\epsilon=10^{-3}$.}\label{fig:achieve}
\end{figure}

Figure~\ref{fig:achieve} plots the capacity, normal approximation, and $\beta\beta$ achievability bound for the MIMO channel with $n_t = n_r = T = 4$ for the complex case.  The details of this computation are given in~\cite{YAGP16-bb}.  The $\beta\beta$ bound was developed by Yang et al~\cite{YAGP16-bb} and is often more computationally friendly than the $\kappa\beta$ bound.  This figure illustrates the gap between achievability and the normal approximation, as well as the gap to capacity.  For example, at blocklength 400, we can achieve about 88\% of capacity, and at blocklength 1000 we can achieve about about 92\% of capacity, given $P=0$dB and tolerating an error
probability of $10^{-3}$.

\begin{figure}[ht]
\centering
\includegraphics[width=.8\textwidth]{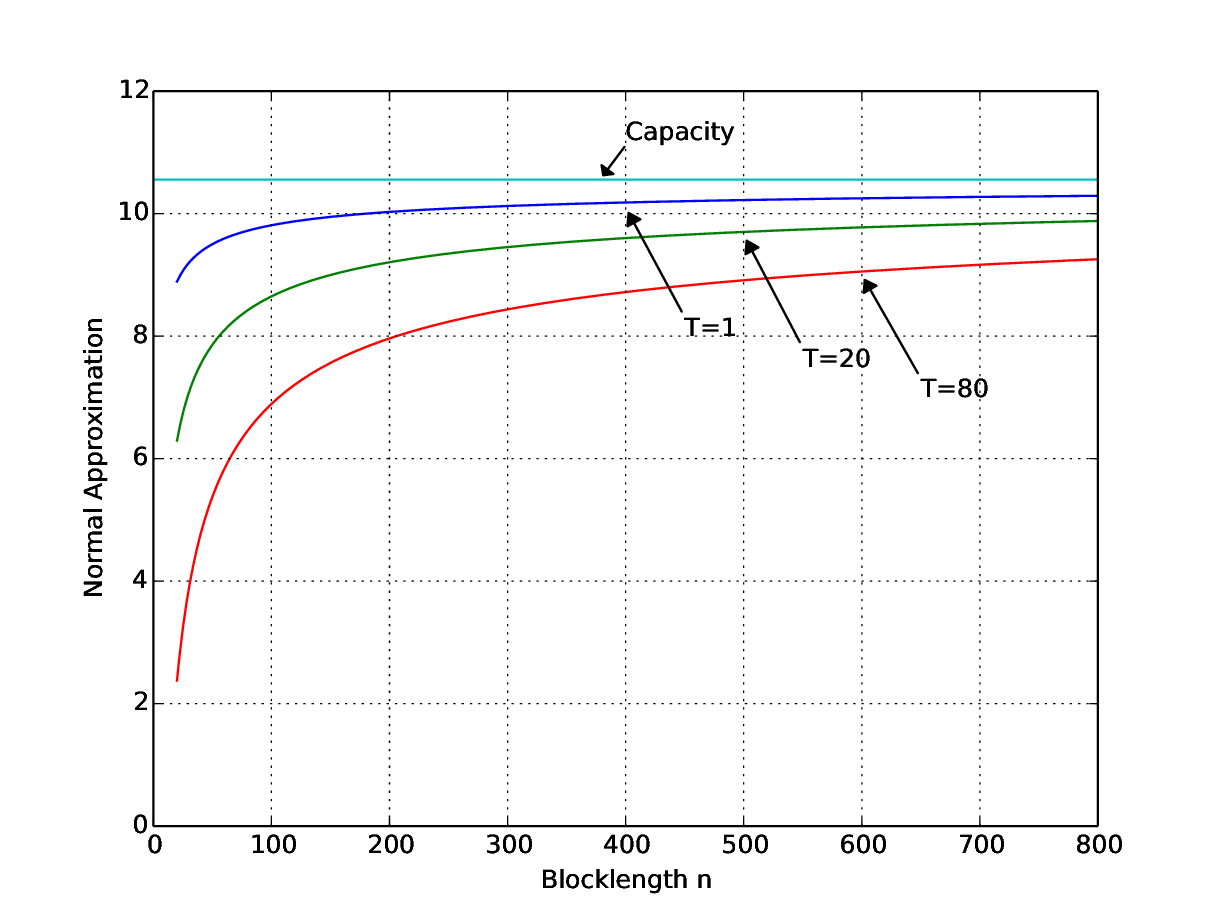}
\caption{The normal approximation for varying coherent times, with $n_t =  n_r = 4$, $P=20dB$, and $\epsilon = 10^{-3}$}\label{fig:norm_appx_coherence}
\end{figure}

Figure~\ref{fig:norm_appx_coherence} shows the dependence of the rate on the coherence time $T$ for the $4 \times 4$ MIMO channel.  The normal approximation for $T=1, 20, 80$ is plotted.  From~\eqref{eq:mimo_capacity} and~\eqref{eq:mimo_disp_expression}, we know the capacity does not depend on $T$, but the dispersion depends on $T$ in an affine relationship.  Hence, from the dispersion we see that a larger coherence time reduces the maximum transmission rate when the other channel parameters are held fixed.  Intuitively, when the coherence time is lower, we are able to average over independent realizations of the fading coefficients in less channel uses.  Note that the CSIR assumption implies that we know the channel coefficient perfectly, which may be unrealistic at short coherence times for a practical channel.

\begin{figure}[ht]
\centering
\includegraphics[width=.8\textwidth]{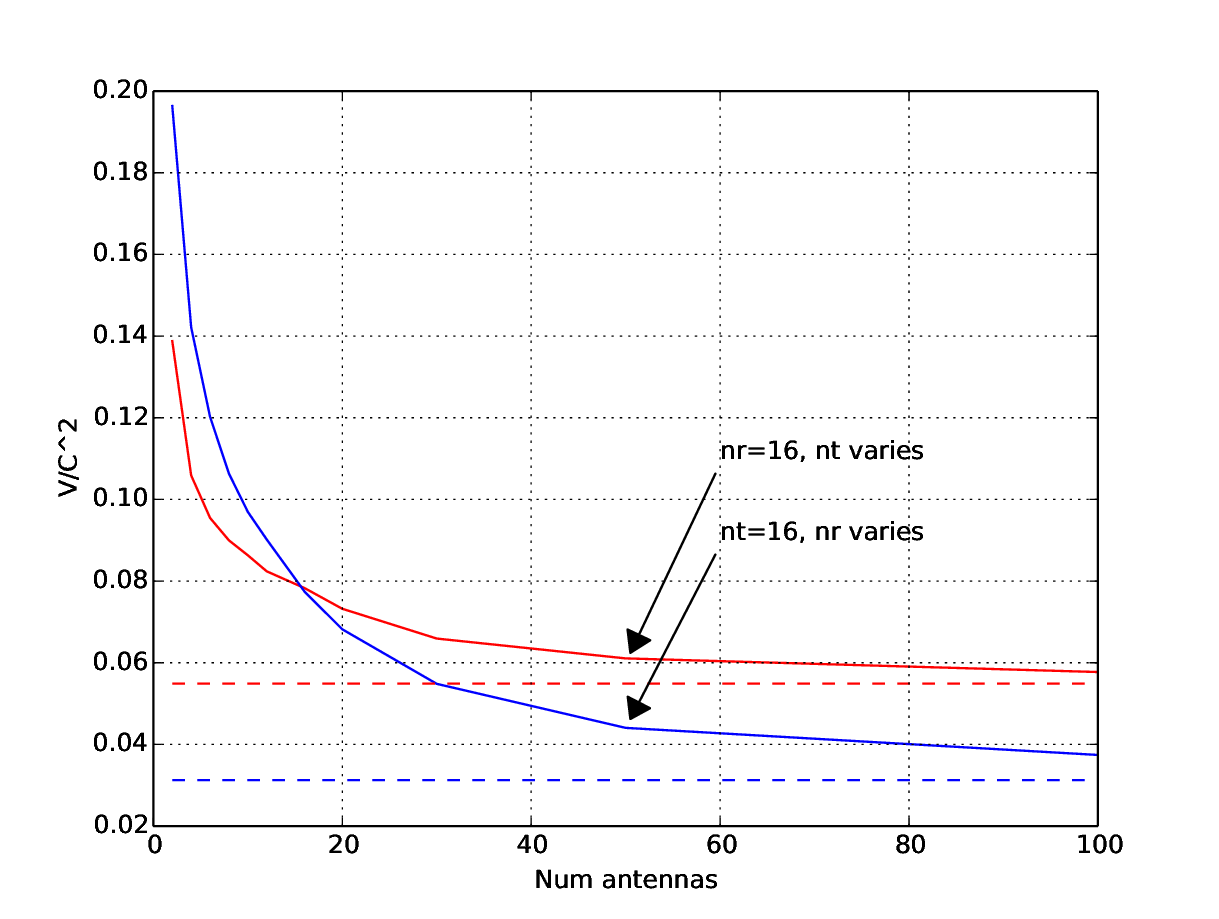}
\caption{Normalized dispersion $V \over C^2$ as a function of $n_r$ and $n_t$. The \underline{received} power is
$P_r=20dB$ and $T=16$. Dashed lines are asymptotic values from~\eqref{eq:pr_cap_nr}-\eqref{eq:pr_disp_nt}.}\label{fig:pr_norm_disp_antennas}
\end{figure}
\begin{figure}[ht]
\centering
\includegraphics[width=.8\textwidth]{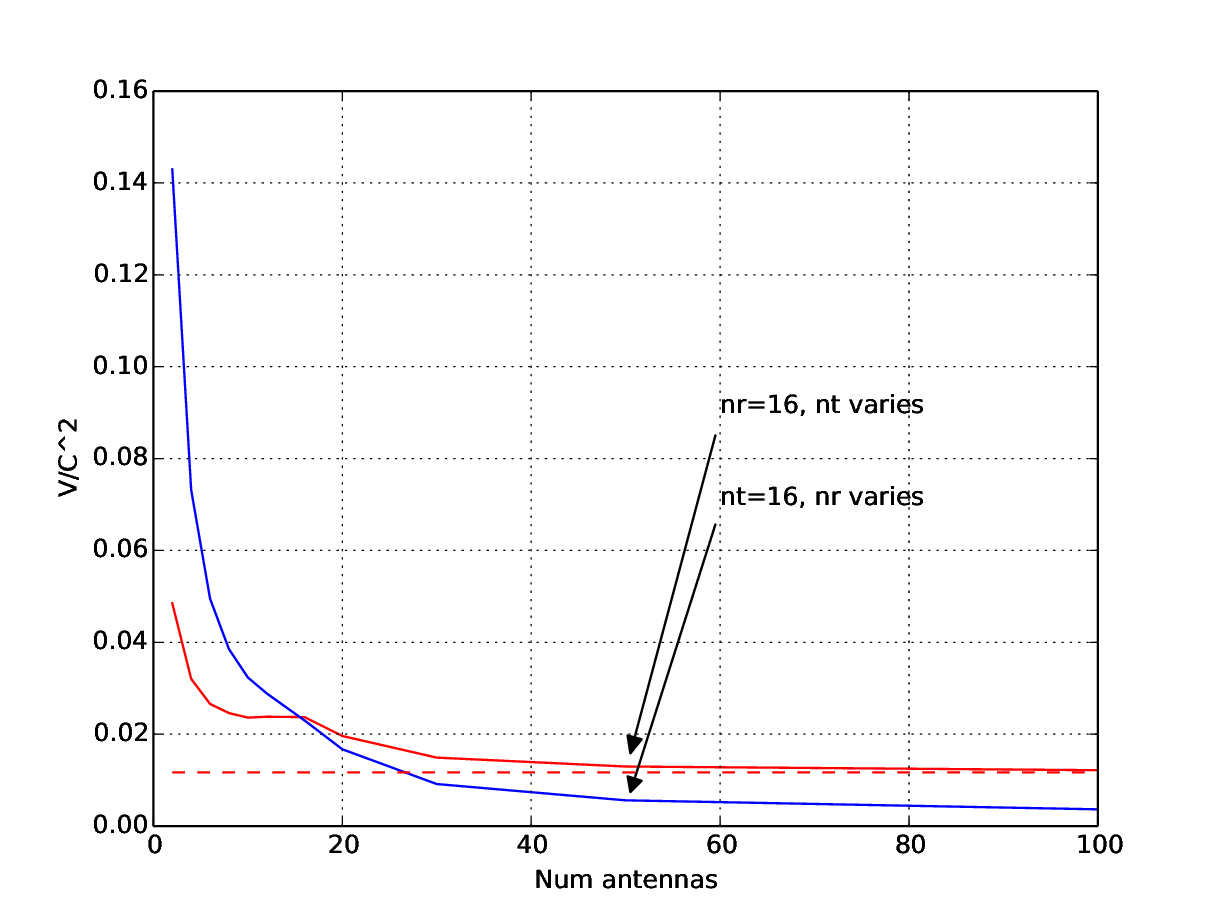}
\caption{Normalized dispersion $V \over C^2$ as a function of $n_r$ and $n_t$. The \underline{transmit} power is
$P=20dB$ and $T=16$. Dashed lines are asymptotic values from~\eqref{eq:p_cap_nr}-\eqref{eq:p_disp_nt}.}\label{fig:p_norm_disp_antennas}
\end{figure}

We now ask: how does the dispersion depend on the number of transmit and receive antennas?
Figures~\ref{fig:pr_norm_disp_antennas} and~\ref{fig:p_norm_disp_antennas} depict the normalized dispersion $V/C^2$,
cf.~\eqref{eq:minblock}, as a function of the number of antennas.  The fading process is chosen to be i.i.d. $\mathcal{N}(0,1)$.  Each plot has two
curves: one curve with $n_r$ fixed and $n_t$ growing, and the other curve with $n_t$ fixed and $n_r$ growing.  In both
plots, coherence time is $T=16$.  The difference is that on Fig.~\ref{fig:pr_norm_disp_antennas} the
\textit{received power} $P_r$ is held fixed (at 20 dB, i.e. $P$ is chosen so that $P_r=100$), whereas 
on Fig~\ref{fig:p_norm_disp_antennas} it is the transmit power $P$ that is held fixed (also at $20~dB$, i.e. $P=100$). 
The relation between $P_r$ and $P$ is as follows:
\begin{align}
P_r = \frac{P}{n_t}\E[\|H\|_F^2]\,,
\end{align}
These figures also display the asymptotic limiting values of $V\over C^2$ computed via random-matrix theory:
\begin{enumerate}
\item When $n_r$ is fixed and $n_t\to\infty$ under fixed \textit{received power $P_r$} we have
\begin{align}
C(P_r) &= \frac{n_r}{2} \log\left(1 + {P_r\over n_r}\right) + o(1) \label{eq:pr_cap_nr}\\
V(P_r) &= \log^2(e)\frac{P_r}{1 + {P_r\over n_r}} + o(1) \label{eq:pr_disp_nr}\ .
\end{align}
\item When $n_t$ is fixed and $n_r\to\infty$ under fixed \textit{received power $P_r$} we have
\begin{align}
C(P_r) &= \frac{n_t}{2} \log\left(1 + {P_r\over n_t}\right) + o(1) \label{eq:pr_cap_nt}\\
V(P_r) &= \log^2(e)\frac{P_r(2 + {P_r\over n_t})}{2(1 + {P_r\over n_t})^2} + o(1) \label{eq:pr_disp_nt}\,.
\end{align}
\item When $n_r$ is fixed and $n_t\to\infty$ under fixed \textit{transmitted power $P$} we have
\begin{align}
C(P) &= \frac{n_r}{2} \log\left(1 + P\right) + o(1) \label{eq:p_cap_nr}\\
V(P) &= \log^2(e)\frac{n_r P}{1 + P} + o(1) \label{eq:p_disp_nr}\,.
\end{align}
\item When $n_t$ is fixed and $n_r\to\infty$ under fixed \textit{transmit power $P$} we have
\begin{align}
C(P) &= \frac{n_t}{2} \log\left( 1 + {n_rP\over n_t}\right) + o(1) \label{eq:p_cap_nt}\\
V(P) &= \log^2(e)\frac{n_t}{2} + o(1)\label{eq:p_disp_nt}\,.
\end{align}
\end{enumerate}

Note that when the received power is fixed, \emph{reciprocity} holds: the capacity of the $n_t \times n_r$ channel
is the same as the capacity of the $n_r \times n_t$ one. Having information about dispersion, we may ask the more refined
question: although capacities of the channels are the same, which one has better dispersion (i.e. causes smaller coding
latency)?

From approximations~\eqref{eq:pr_disp_nr} and \eqref{eq:pr_disp_nt}, we can see that the channel dispersion is not symmetric in $n_t, n_r$. For example, in the setting of Fig.~\ref{fig:pr_norm_disp_antennas} we
see that the delay penalty in the $n_t \ll n_r$ regime is $58\%$ of the penalty in the $n_r \ll n_t$ regime.  Hence, in a two user channel, if user 1 has $n_1$ antennas and user 2 has $n_2 > n_1$ antennas, then the asymptotic analysis suggest that channel from user 1 to user 2 can support higher rates than the channel from user 2 to user 1 at finite blocklength.

Figure~\ref{fig:p_norm_disp_antennas} shows the scenario where the transmit power is fixed.  In this case, the capacity approaches a finite limit when $n_r$ is held fixed and $n_t \to \infty$, but grows logarithmically when $n_t$ is fixed and $n_r \to \infty$, as shown in equations~\eqref{eq:p_cap_nr} and~\eqref{eq:p_cap_nt}.  In this setting, the normalized dispersion approaches a finite limit when $n_r$ is fixed and $n_t \to \infty$, yet it vanishes when $n_t$ is fixed and $n_r\to\infty$.  Consequently in this regime, we can always choose the number of receive antennas $n_r$ large enough so that our system can achieve a given fraction of capacity $\eta$ using blocklength $n$.  The normalized dispersion in this case is proportional to $1/\log^2(n_r)$.

\appendices
\section{Existence of non-Gaussian caids}\label{apx:nongauss}

\begin{proposition}\label{prop:nongauss} Let $S\subset \mreals^n$ be such that a) $0\in S$ and b) there exists a non-zero 
polynomial in $n$ variables with real coefficients vanishing on $S$. Then there exists a random variable $X$ taking values in $\mreals^n$ with the property that its
characteristic function $\Psi(t)\eqdef \EE[e^{i\sum_{k=1}^n t_j X_j}], t\in\mreals^n$ satisfies
	$$ \Psi(t) = e^{-{\|t\|_2^2\over 2}}\qquad \forall t\in S $$
	but there exist a $t_0 \in \mathbb{R}^n$ such that $\Psi(t_0) \neq e^{-{\|t_0\|_2^2\over 2}}$ (i.e. $X\not\sim\matn(0,I_n)$).
\end{proposition}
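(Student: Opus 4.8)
The plan is to realize $X$ as a small real perturbation of a Gaussian. Write $\phi$ for the $\matn(0,I_n)$ density and look for a fixed real Schwartz function $q$ with $\int_{\mreals^n} q = 0$, $q\not\equiv 0$, and whose Fourier transform $\hat q(t)\eqdef\int_{\mreals^n}e^{i\langle t,x\rangle}q(x)\,dx$ vanishes on $S$. Then for small $\epsilon>0$ the function $f_\epsilon\eqdef\phi+\epsilon q$ is a probability density, and $X\sim f_\epsilon$ has characteristic function $e^{-\|t\|_2^2/2}+\epsilon\,\hat q(t)$, which equals $e^{-\|t\|_2^2/2}$ exactly on $S$ and differs from it wherever $\hat q\neq 0$.

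To build such a $q$, I would first replace the given polynomial $P_0$ (vanishing on $S$) by $R(t)\eqdef P_0(t)P_0(-t)$. Since $\mreals[t_1,\ldots,t_n]$ is an integral domain, $R\not\equiv 0$; it still vanishes on $S$; it is \emph{even}; and $R(0)=P_0(0)^2=0$ because $0\in S$. I would then fix $a\in(0,1/2)$ and let $q$ be the (inverse) Fourier transform of $R(t)e^{-a\|t\|_2^2}$. Evenness of the transformed function makes $q$ real; being a polynomial times a Gaussian, $q$ is Schwartz, explicitly of the form $\tilde R(x)e^{-\|x\|_2^2/(4a)}$ for some polynomial $\tilde R\not\equiv 0$; and $\hat q(t)=R(t)e^{-a\|t\|_2^2}$, which is not identically zero, vanishes on $S$, and has $\hat q(0)=0$, so $\int q=\hat q(0)=0$.

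Finally I would check that $f_\epsilon=\phi+\epsilon q$ is a genuine density: it integrates to $1$ since $\int q=0$, and because $a<1/2$ the perturbation decays like $e^{-\|x\|_2^2/(4a)}$, strictly faster than $\phi(x)\propto e^{-\|x\|_2^2/2}$, so $M\eqdef\sup_x|q(x)|/\phi(x)<\infty$ and $f_\epsilon\ge\phi\,(1-\epsilon M)\ge 0$ once $\epsilon\le 1/M$. The characteristic function of $X\sim f_\epsilon$ is then $\Psi(t)=e^{-\|t\|_2^2/2}+\epsilon R(t)e^{-a\|t\|_2^2}$: it coincides with $e^{-\|t\|_2^2/2}$ on $S$ since $R|_S\equiv 0$, but since $R\,e^{-a\|\cdot\|_2^2}\not\equiv 0$ it differs from $e^{-\|t\|_2^2/2}$ somewhere, i.e. $X\not\sim\matn(0,I_n)$. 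The one genuinely delicate point is global non-negativity of $f_\epsilon$, and it is exactly this that forces the two design choices: using the even polynomial $P_0(t)P_0(-t)$ (so that $q$ is real) and attaching a Gaussian factor of width $a<1/2$ (so that $q$ is dominated by $\phi$ in the tails). The remaining ingredient, that the Fourier transform of a polynomial times a Gaussian is again a polynomial times a Gaussian, is routine (differentiate under the integral sign).
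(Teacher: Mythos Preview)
Your argument is correct and follows the same high-level strategy as the paper: perturb the Gaussian characteristic function by $\epsilon$ times a real polynomial (vanishing on $S$ and at $0$) multiplied by a narrower Gaussian, then check that the inverse Fourier transform is a bona fide density for small $\epsilon$. The difference lies in how reality of the perturbed density is arranged. The paper insists on a polynomial $q\in\mreals[t_1^2,\ldots,t_n^2]$ (every exponent even in each variable separately), which allows a coordinate-wise Fourier computation but forces a nontrivial commutative-algebra step: one has to show the contraction of $(q_0)$ under the finite inclusion $\mreals[t_1^2,\ldots,t_n^2]\hookrightarrow\mreals[t_1,\ldots,t_n]$ is nonzero. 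You instead observe that for $q$ (the inverse transform) to be real it suffices that the transformed function be \emph{globally} even, $R(-t)=R(t)$, and this is obtained for free by setting $R(t)=P_0(t)P_0(-t)$. This is a genuine simplification: it replaces the Atiyah--Macdonald argument by a one-line construction, while the rest of the proof (boundedness of $q/\phi$ via the decay rate $e^{-\|x\|^2/(4a)}$ with $a<1/2$, $\int q=0$ from $R(0)=0$, non-triviality from $R\not\equiv 0$) is identical in spirit to the paper's, only with the paper's specific choice $a=\tfrac14$ relaxed to any $a\in(0,\tfrac12)$.
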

\begin{remark} The simplest application of this proposition is the following. Suppose that three random vectors in
$\mreals^3$ have the property that projection onto any (2-dimensional) plane has the joint distribution
$\matn(0,I_2)\times\matn(0,I_2)\times \matn(0,I_2)$. Does it imply that the joint distribution of them is
$\matn(0,I_3)\times\matn(0,I_3)\times \matn(0,I_3)$? Note that it is easy to argue that joint distribution of any pair
of them is indeed $\matn(0,I_3)\times\matn(0,I_3)$ and thus the only \textit{jointly Gaussian} distribution that
satisfies the requirements is indeed the i.i.d. triplet. However, the above proposition shows that the general answer is
still negative. Here $S$ is a subset of all $\mreals^{3\times 3}$ with determinant zero.
\end{remark}
\begin{proof} We will slightly extend the argument of~\cite{hamedani1975determination}. We will assume familiarity with
basic commutatitive algebra on the level of~\cite{atiyah1969introduction}. Consider an identity expressing
the well-known computation of the Gaussian characteristic function:
$$ {1\over \sqrt{2\pi \alpha^2}} \int_{\mreals} e^{itx - {x^2\over 2\alpha^2}} = e^{-\alpha^2 {t^2\over 2}}\,.$$
Setting $\beta={1\over \alpha^2}$, changing sign of $t$ we get
$$ \int_{\mreals} e^{-itx - {\beta x^2\over 2}} \, dx = \sqrt{2\pi\over \beta} e^{-{t^2\over 2\beta }}\,.$$
Differentiating this in $\beta$ and setting $\beta={1\over 2}$ we get
$$ \int_{\mreals} x^{2k} e^{-itx - {x^2\over 4}} \, dx = p_{2k}(t) e^{-{t^2}} \,,$$
where $p_{2k}(t)$ is some polynomial of degree $2k$ with real coefficients (and involving only even powers of $t$). For
later convenience, we also interchange $t$ and $x$ to get
\begin{equation}\label{eq:pcx_1}
	\int_{\mreals} t^{2k} e^{-itx - {t^2\over 4}}\, dt = p_{2k}(x) e^{-{x^2}} \,.
\end{equation}
(Identity~\eqref{eq:pcx_1} also follows from the fact that Hermite polynomials times Gaussian density are
eigenfunctions of the Fourier transform.)

Next, suppose that there is a polynomial $q(t_1,\ldots,t_n)$ such that $q$ vanishes on $S$ and 
each monomial $t_1^{k_1}\cdots t_n^{k_n}$ in $q$ has all $k_1,\ldots,k_n$ even.
Then, define the characteristic function
\begin{equation}\label{eq:pcx_2}
	\Psi(t_1,\ldots,t_n) \eqdef e^{-{\sum_{k=1}^n t_k^2\over 2}}  + \epsilon e^{-{\sum_{k=1}^n t_k^2 \over 4}}
q(t_1,\ldots,t_n)\,.
\end{equation}
We will argue that for $\epsilon$ sufficiently small, $\Psi$ is a characteristic function of some (obviously
non-Gaussian) probability density function $f$ on $\mreals^n$. By taking the inverse Fourier transform we get that 
$$ f(x) = {1\over (2\pi)^{n\over 2}} e^{-{\|x\|_2^2\over 2}} (1 + \epsilon g(x))\,.$$
where $e^{-{\|x\|_2^2\over 2}}g(x)$ is the inverse Fourier transform of the second term in~\eqref{eq:pcx_2}. Since
$\Psi(t)$ is even in each
$t_j$, we conclude that $f(x)$ is real. Since $q(0)=0$ (recall that $0\in S$) we have $\Psi(0)=1$, and thus $\int_{\mreals^n} f = 1$. So to prove
that $f$ is a valid density function for small $\epsilon$ we only need to show that
\begin{equation}\label{eq:pcx_3}
	\sup_{x\in\mreals^n} |g(x)| < \infty\,.
\end{equation}
To that end, notice that applying~\eqref{eq:pcx_1} to each monomial $\prod t_j^{2k_j}$ we get
\begin{equation}\label{eq:pcx_1a}
	\int_{\mreals^n} \left(\prod_{j=1}^n t_j^{2k_j}\right) e^{-i \sum_j t_j x_j  - {\|t\|_2^2\over 4}}  \, dt_1
	\cdots dt_n= \left(\prod_j
	p_{2k_j}(x_j)\right) e^{-{\|x\|_2^2}} \,.
\end{equation}
Multiplying the right-hand side by $e^{{\|x\|_2^2\over 2}}$ we conclude that contribution of each monomial of $q$ to $\sup_x
|g(x)|$ is
bounded by 
	$$\sup_{x \in \mreals^n} \left|\left(\prod_j
	p_{2k_j}(x_j)\right) e^{-{\|x\|_2^2\over 2}}\right| < \infty\,. $$
Since there are finitely many monomials in $q$, the proof of~\eqref{eq:pcx_3} and of validity of $\Psi(t)$ is done.

We are left to argue that there must necessarily exist polynomial $q$ with required properties. By assumption there
exist some other polynomial $q_0$ vanishing on $S$. Consider an inclusion of rings
$$T \eqdef \mreals[x_1^2, x_2^2, \ldots, x_n^2] \hookrightarrow \mreals[x_1,\ldots,x_n]$$
where $\mathbb{R}[x_1,\hdots,x_n]$ denotes the ring of polynomials with variables $x_1,\hdots,x_n$ and coefficients in $\mathbb{R}$, and $\hookrightarrow$ denotes an inclusion map. This morphism of rings is obviously finite. Consider ideal $(q_0)$ of $\mreals[x_1,\ldots,x_n]$ and denote  as usual by
$(q_0)^c \eqdef (q_0)\cap T$ its contraction. We argue that $(q_0)^c \neq (0)$. Assume
otherwise, then we have $(q_0)^c = (0)$ and $\sqrt{(q_0)}^c = (0)$ (since $\sqrt{(0)}=(0)$ as $T$ is an integral domain). Take all minimal primes of $(q_0)$, call these $\{\mathfrak{p}_j\}$, then the radical of $(q_0)$ is the intersection of all prime ideals that contain it, i.e.
$\sqrt{(q_0)} = \cap_j \mathfrak{p}_j$. Then, denoting $\mathfrak{q}_j \eqdef \mathfrak{p}_j^c$ we get that $\cap_j
\mathfrak{q}_j = (0)$ in $T$. By ``prime-avoidance'', cf.~\cite[Prop. 1.11]{atiyah1969introduction}, we know $(0) \subset \cap_j
\mathfrak{q}_j$ implies that $\mathfrak{q}_j \subset (0)$ for some $j$, hence $\mathfrak{q}_j$ is the zero ideal for
some $j$. This contradicts the ``going-up theorem'', cf. ~\cite[Corollary 5.9]{atiyah1969introduction}, so we must have $(q_0)^c \neq (0)$, and hence we may take $q$ as an
arbitrary non-zero element of $(q_0)^c$.
\end{proof}

\section{Analysis of the Berry-Esseen constant}\label{apx:be}

\begin{proof}[Proof of Lemma~\ref{lem:thirdmom}]
We begin with upper bounding the numerator in~\eqref{eq:bndef}, i.e.
\begin{align}
\sum_{j=1}^n \EE[|W_j - \EE[W_j]|^3]\ .
\end{align}
The information density is given by
\begin{align}\label{eq:iq_x1}
i(x;y,h) = \frac{1}{2}\log\det\left(\Sigma\right) -                        \frac{1}{2}\sum_{j=1}^T \|y_j - hx_j\|^2 + \frac{1}{2}\text{tr}\left(y^T   \Sigma^{-1}y\right)
\end{align}
where
\begin{align}
\Sigma = I_{n_r} + \frac{P}{n_t}\mat{H}\mat{H}^T\,.
\end{align}
Define $W = i(x;Y,H)$ under the distribution $\mat{Y} = \mat{H}x + \mat{Z}$.  ~\eqref{eq:iq_x1}     reduces to
\begin{align}
W &=                                                    \frac{T}{2}\log\det\left(\Sigma\right) -                                   \frac{1}{2}\|\mat{Z}\|_F^2 + \frac{1}{2}\text{tr}\left(x^T\mat{H}^T        \Sigma^{-1}\mat{H}x + 2x^T\mat{H}^T\Sigma^{-1}\mat{Z} + \mat{Z}^T\Sigma^{- 1}\mat{Z}\right)\\
&= c(\mat{H},\mat{Z}) + \frac{1}{2}\text{tr}\left(x^T\mat{H}^T \Sigma^{-   1}\mat{H}x\right) + \text{tr}\left(x^T\mat{H}^T\Sigma^{-1}\mat{Z}\right)
\end{align}
where the scalar random variable 
\begin{align}
c(\mat{H},\mat{Z}) = \frac{T}{2}\log\det(\Sigma) - \frac{1}{2}\|Z\|_F^2 + \frac{1}{2}\text{tr}(Z^T \Sigma^{-1} Z)
\end{align}
is the sum of all    the terms that do not depend on $x$.  Note that
\begin{align}
&\E \text{tr}\left(x^T\mat{H}^T \Sigma^{-1}\mat{H}x\right) =               \text{tr}(x^T\E[\mat{H}^T\Sigma^{-1}\mat{H}]x)\\
&\E \text{tr}\left(x^T\mat{H}^T\Sigma^{-1}\mat{Z}\right) = 0\,.
\end{align}
Therefore, the ``centered'' information density is
\begin{align}
W - \E[W] &= c_0(\mat{H},\mat{Z}) - \E[c(\mat{H}, \mat{Z})] +
\frac{1}{2}\text{tr}\left(x^T\left(H^T\Sigma^{-1}\mat{H} -
\E[\mat{H}^T\Sigma^{-1}\mat{H}]\right)x\right) +                           \text{tr}\left(x^T\mat{H}^T\Sigma^{-1}\mat{Z}\right)\\
&= c_0(\mat{H}, \mat{Z}) + \text{tr}(x^T \mat{A}    x) + \text{tr}(x^T\mat{B})
\end{align}
where 
\begin{align}
A &= \frac{1}{2}(H^T\Sigma^{-1}\mat{H} - \E[\mat{H}^T\Sigma^{-1}\mat{H}])\\
B &= \mat{H}^T\Sigma^{-1}\mat{Z}\\
c_0(\mat{H},\mat{Z}) &= c(H,Z) - \E[c(H,Z)]\,.
\end{align}
Hence we can upper bound the centered third moment as
\begin{align}\label{eq:be100}
\E[|W - \E[W]|^3] \leq 
3 \underbrace{\E[|c_0(H,Z)|^3}_{S_1} + 3\underbrace{\E[|\text{tr}(x^T \mat{A} x)|^3]}_{S_2} + 3\underbrace{\E[|\text{tr}(x^T\mat{B})|^3]}_{S_3}\,.
\end{align}
We now proceed to upper bound each term individually.  First $S_2$,
\begin{align}
S_2 &= \E[|\text{tr}(x^T \mat{A} x)|^3]\\
&= \frac{1}{8}\E[ | x^T H^T \Sigma^{-1} H x - x^T \E[H^T \Sigma^{-1} H] x|^3]\\
&\leq \frac{1}{8}\E[ | x^T H^T \Sigma^{-1} H x + x^T \E[H^T \Sigma^{-1} H]    x|^3]\label{eq:be111}\\
&\leq \frac{1}{8} \E\left[\left| \frac{2n_t}P \|x\|_F^2 \right|^3\right]\label{eq:be112}\\
&= \left(\frac{n_t}{P}\right)^3 \|x\|_F^6
\end{align}
where
\begin{itemize}
\item \eqref{eq:be111} follows since $H^T \Sigma^{-1} H$ is PSD, and $\E[H^T \Sigma^{-1} H]$ is also PSD as a non-negative combination of PSD matrices, so that both $x^T H^T \Sigma^{-1} H x$ and $x^T \E[H^T \Sigma^{-1}H]x$ are non-negative
\item \eqref{eq:be112} follows since $H^T \Sigma^{-1} H = V D V^T$ where
\begin{align}
D = \text{diag}\left(c(\Lambda_1^2),\hdots, c(\Lambda_{n_{\min}}^2), 0, \hdots, 0\right)
\end{align}
and $D \leq \frac{n_t}{P} I_{n_t}$ in the PSD ordering, so
\begin{align}
x^T H^T \Sigma^{-1} H x &\leq \frac{n_t}{P} x^T V V^T x = \frac{n_t}{P} \|x\|_F^2
\end{align}
and
\begin{align}
x^T \E[H^T \Sigma^{-1} H] x &\leq \frac{n_t}{P} x^T \E[V V^T] x = \frac{n_t}{P}    \|x\|_F^2\,.
\end{align}

Now we bound $S_3$ from~\eqref{eq:be100},
\end{itemize}
\begin{align}
S_3 &= \E[|\text{tr}(x^T B)|^3]\\
&= \E[\text{tr}(x^T H^T \Sigma^{-1} Z)|^3]\\
&= \E\left[ \left| \sum_{i=1}^{n_t} \sum_{j=1}^T \tilde{x}_{ij} Z_{ij} \frac{\Lambda_i}{1 + \frac{P}{n_t}\Lambda_i^2}\right|^3\right]  \label{eq:be121}\\
&\leq n_t^2 T^2 \sum_{i=1}^{n_t} \sum_{j=1}^T \E\left[ |\tilde{x}_{ij}|^3 |Z_{ij}|^3   \left|\frac{\Lambda_i}{1 + \frac{P}{n_t}\Lambda_i^2}\right|^3 \right]  \label{eq:be122}\\
&\leq \frac{n_t^2T^2}{4}\left(\frac{n_t}{P}\right)^{3/2}\|x\|_F^3\label{eq:be123}
\end{align}
where
\begin{itemize}
\item In \eqref{eq:be121}, define $\tilde{x} = V^T x$ and expand the trace.
\item \eqref{eq:be122} follows from the triangle inequality, along with $|\sum_{i=1}^n a_i|^3 \leq n^2 \sum_{i=1}^n |a_i|^3$.
\item \eqref{eq:be123} we have used $\E[|Z|^3] \leq 2$ for $Z \sim \mathcal{N}(0,1)$ along with the bound
\begin{align}
\left| \frac{x}{1 + ax^2}\right| \leq \frac{1}{2\sqrt{a}}\,.
\end{align}
Now notice that
\begin{align}
\sum_{i=1}^{n_t} \sum_{j=1}^T |\tilde{x}_{ij}|^3 \leq \left( \sum_{i=1}^{n_t} \sum_{i=1}^T \tilde{x}_{ij}^2\right)^{3/2}
\end{align}
which can be viewed as the norm inequality $\|a\|_3 \leq \|a\|_2$ for $a \in \mathbb{R}^{d}$.  Finally, we use $\|V^Tx\|_F^2 = \|x\|_F^2$ for any orthogonal matrix $V$.
\end{itemize}

For the denominator in~\eqref{eq:bndef}, the expression for $\frac{1}{T} \Var(W_j)$ is given in~\eqref{eq:cvx1}-\eqref{eq:cvx4}.  Note that the final term~\eqref{eq:cvx4} is non-negative, so we have the lower bound
\begin{align}
\sum_{j=1}^n \Var(W_j) &\geq K_1' n + K_2' \sum_{j=1}^n \left(\|x_j\|_F^2 - TP\right)^2\\
&\geq \max\left( nK_1', K_2' \sum_{j=1}^n \left(\|x_j\|_F^2 - TP\right)^2\right)\label{eq:varlbbe}
\end{align}
where
\begin{align}
K_1' &= T^2\Var\left(C_{r}(H,P)\right) + T\sum_{i=1}^{n_{\min}}\E\left[V_{AWGN}\left(\frac{P}{n_t}\Lambda_i^2\right)\right]\\
K_2' &= T\left(\frac{\|x\|_F^2}{n_t} - \frac{TP}{n_t}\right)^2\ .
\end{align}
Hence $K_1' > 0$ whenever $P > 0$.  Note that we use the assumption $\|x^n\|_F^2 = nTP$ freely here, as stated before.  The lower bound on the variance \eqref{eq:varlbbe}, we obtain the upper bound
\begin{align}\label{eq:bemaxden}
B_n(x^n) \leq \sqrt{n} \frac{\sum_{j=1}^n K_1 \|x_j\|_F^6 + K_2 \|x_j\|_F^3 + K_3}{\left(\max\left(nK_1', K_2' \sum_{j=1}^n \left(\|x_j\|_F^2 - TP\right)^2 \right) \right)^{3/2}}
\end{align}
where all constants are non-negative.  There are two cases based on which term achieves the max in the dominator.  First, suppose
\begin{align}
nK_1' \geq K_2' \sum_{j=1}^n \left(\|x_j\|_F^2 - TP\right)^2\ .
\end{align}
Expanding the square yields 
\begin{align}
K_2' \sum_{j=1}^n \|x_j\|_F^4 \leq nK_1' + nT^2P^2K_2'\ .
\end{align}
Thus the terms in the numerator are bounded by
\begin{align}
&\sum_{j=1}^n \|x_i\|_F^6 \leq \left(\max_{i=1}^n \|x_i\|_F^2\right) \sum_{j=1}^n \|x_i\|_F^4 \leq n^{3/2} \delta^2 (K_1' + T^2P^2K_2')\label{eq:bom1} \\
&\sum_{j=1}^n \|x_i\|_F^3 \leq n^{1/4} \sum_{j=1}^n \|x_i\|_F^4 \leq n^{5/4}(K_1' + T^2P^2K_2')
\end{align}
where \eqref{eq:bom1} uses the assumption $\|x_j\|_F \leq \delta n^{\frac{1}{4}}$. Applying this to $B_n$ in~\eqref{eq:bemaxden}, we see that in this case,
\begin{align}\label{eq:be_case_1}
B_n(x^n) \leq \sqrt{n} \delta^2 C_1 + n^{1/4}C_2 + \frac{C_3}{n^{1/2}}
\end{align}
where the constant $C_1,C_2,C_3$ are non-negative constants. 

Now take the case when
\begin{align}
K_2' \sum_{j=1}^n \left(\|x_j\|_F^2 - TP\right)^2 \geq nK_1'\label{eq:bec201}\ .
\end{align}
Note that since $K_1' > 0$, in the case we must also have $K_2' > 0$ for the above inequality to hold. Let $a$ be defined as follows
\begin{align}
a = \frac{T^2P^2}{T^2P^2 + \frac{K_1'}{K_2'}}\ .
\end{align}
Here $a < 1$ since $K_1'/K_2' > 0$. Applying~\eqref{eq:bec201} yields
\begin{align}
a \sum_{j=1}^n\|x_j\|_F^4 &\geq a\left(n\frac{K_1'}{K_2'} + nT^2P^2\right)\\
&\geq nT^2P^2\ .\label{eq:bec202}
\end{align}
With this, from~\eqref{eq:bemaxden} we obtain the following upper bound
\begin{align}
B_n(x^n) &\leq \sqrt{n} \frac{\sum_{j=1}^n K_1 \|x_j\|_F^6 + K_2        \|x_j\|_F^3 + K_3}{K_2'^{3/2} \left( (1-a)\sum_{j=1}^n \|x_j\|_F^4 + a\sum_{j=1}^n \|x_j\|_F^4 - nT^2P^2  \right)^{3/2}}\\
&\leq \sqrt{n} \frac{\sum_{j=1}^n K_1 \|x_j\|_F^6 + K_2            \|x_j\|_F^3 + K_3}{K_2'^{3/2} \left( (1-a)\sum_{j=1}^n \|x_j\|_F^4 \right)^{3/2}}\label{eq:be200}\ .
\end{align}
where~\eqref{eq:be200} uses~\eqref{eq:bec202}.  Now, we can upper bound each term in \eqref{eq:be200} as
\begin{align}
\frac{K_1 \sum_{j=1}^n \|x_j\|_F^6}{K_2'^{3/2}\left( (1-a)\sum_{j=1}^n \|x_j\|_F^4 \right)^{3/2}} 
&\leq \frac{K_1 \max_{i=1,\hdots,n} \|x_i\|_F^2}{K_2'^{3/2}(1-a)^{3/2} \left( \sum_{j=1}^n \|x_j\|_F^4 \right)^{1/2} } \label{eq:be211}\\
&\leq \frac{K_1 \delta^{2} n^{1/2}}{n^{1/2} K_2'^{3/2} (1-a)^{3/2} (T^2P^2 + nK_1')^{1/2}}\\
\frac{K_2 \sum_{j=1}^n\|x_j\|_F^3}{K_2'^{3/2} \left( (1-a)\sum_{j=1}^n \|x_j\|_F^4\right)^{3/2}} &\leq \frac{K_2 n^{1/4}}{K_2'^{3/2} (1-a)^{3/2} \left( \sum_{j=1}^n \|x_j\|_F^4\right)^{3/4}} \label{eq:be210}\\
&\leq \frac{K_2 n^{1/4}}{n^{1/2} K_2'^{3/2}(1-a)^{3/2} (T^2P^2 +  nK_1')^{3/4}}\\
\frac{K_3}{K_2'^{3/2} \left( (1-a) \sum_{j=1}^n \|x_j\|_F^4 \right)^{3/2} } 
&\leq \frac{K_3}{n^{3/2} \left(K_2'(1-a)(T^2P^2 +               nK_1')\right)^{3/2}}\label{eq:be212}
\end{align}
where in~\eqref{eq:be210} we have used $\sum_{i=1}^n a_i^3 \leq n^{1/4} \left(\sum_{i=1}^n a_i^4\right)^{3/4}$ (easily obtained from p-norm inequalities), and both \eqref{eq:be211} and \eqref{eq:be212} use the assumption $\|x_j\|_F \leq \delta n^{\frac{1}{4}}$.  Using these bounds in~\eqref{eq:be200}, we obtain
\begin{align}\label{eq:be_case_2}
B_n(x^n) \leq \sqrt{n}\delta^2 C_1' + n^{1/4}C_2' + \frac{C_3'}{n^{1/2}}
\end{align}
where $C_1',C_2',C_3'$ are non-negative constants.

From~\eqref{eq:be_case_1} and \eqref{eq:be_case_2}, we conclude that
\begin{align}
B_n(x^n) \leq \sqrt{n}\delta^2 C_1'' + n^{1/4}C_2'' + \frac{C_3''}{n^{1/2}}\,.
\end{align}
\end{proof}


\bibliographystyle{IEEEtran}
\bibliography{IEEEabrv,../../reports}

\begin{thebibliography}{10}
\providecommand{\url}[1]{#1}
\csname url@samestyle\endcsname
\providecommand{\newblock}{\relax}
\providecommand{\bibinfo}[2]{#2}
\providecommand{\BIBentrySTDinterwordspacing}{\spaceskip=0pt\relax}
\providecommand{\BIBentryALTinterwordstretchfactor}{4}
\providecommand{\BIBentryALTinterwordspacing}{\spaceskip=\fontdimen2\font plus
\BIBentryALTinterwordstretchfactor\fontdimen3\font minus
  \fontdimen4\font\relax}
\providecommand{\BIBforeignlanguage}[2]{{%
\expandafter\ifx\csname l@#1\endcsname\relax
\typeout{** WARNING: IEEEtran.bst: No hyphenation pattern has been}%
\typeout{** loaded for the language `#1'. Using the pattern for}%
\typeout{** the default language instead.}%
\else
\language=\csname l@#1\endcsname
\fi
#2}}
\providecommand{\BIBdecl}{\relax}
\BIBdecl

\bibitem{PPV08}
Y.~Polyanskiy, H.~V. Poor, and S.~Verd\'u, ``Channel coding rate in the finite
  blocklength regime,'' \emph{{IEEE} Trans. Inf. Theory}, vol.~56, no.~5, pp.
  2307--2359, May 2010.

\bibitem{RD61}
R.~L. Dobrushin, ``Mathematical problems in the {S}hannon theory of optimal
  coding of information,'' in \emph{Proc. 4th Berkeley Symp. Mathematics,
  Statistics, and Probability}, vol.~1, Berkeley, CA, USA, 1961, pp. 211--252.

\bibitem{VS64}
V.~Strassen, ``Asymptotische {A}bsch\" atzungen in {S}hannon's
  {I}nformationstheorie,'' in \emph{Trans. 3d Prague Conf. Inf. Theory},
  Prague, 1962, pp. 689--723.

\bibitem{PV13-isit}
\BIBentryALTinterwordspacing
Y.~Polyanskiy and S.~Verd\'u, ``Finite blocklength methods in information
  theory (tutorial),'' in \emph{2013 IEEE Int. Symp. Inf. Theory (ISIT)},
  Istanbul, Turkey, Jul. 2013. [Online]. Available:
  \url{http://people.lids.mit.edu/yp/homepage/data/ISIT13_tutorial.pdf}
\BIBentrySTDinterwordspacing

\bibitem{tan2014asymptotic}
V.~Y. Tan \emph{et~al.}, ``Asymptotic estimates in information theory with
  non-vanishing error probabilities,'' \emph{Foundations and
  Trends{\textregistered} in Communications and Information Theory}, vol.~11,
  no. 1-2, pp. 1--184, 2014.

\bibitem{ET99}
E.~Telatar, ``Capacity of multi-antenna {G}aussian channels,'' \emph{European
  Trans. Telecom.}, vol.~10, no.~6, pp. 585--595, 1999.

\bibitem{foschini1998limits}
G.~J. Foschini and M.~J. Gans, ``On limits of wireless communications in a
  fading environment when using multiple antennas,'' \emph{Wireless personal
  communications}, vol.~6, no.~3, pp. 311--335, 1998.

\bibitem{alamouti1998simple}
S.~M. Alamouti, ``A simple transmit diversity technique for wireless
  communications,'' \emph{Selected Areas in Communications, IEEE Journal on},
  vol.~16, no.~8, pp. 1451--1458, 1998.

\bibitem{tarokh1999space}
V.~Tarokh, H.~Jafarkhani, and A.~R. Calderbank, ``Space-time block codes from
  orthogonal designs,'' \emph{{IEEE} Trans. Inf. Theory}, vol.~45, no.~5, pp.
  1456--1467, 1999.

\bibitem{polyanskiy11-08a}
Y.~Polyanskiy and S.~Verd\'{u}, ``Scalar coherent fading channel: dispersion
  analysis,'' in \emph{Proc. IEEE Int. Symp. Inf. Theory (ISIT)}, Saint
  Petersburg, Russia, Aug. 2011, pp. 2959--2963.

\bibitem{yang12-09}
W.~Yang, G.~Durisi, T.~Koch, and Y.~Polyanskiy, ``Diversity versus channel
  knowledge at finite block-length,'' in \emph{IEEE Inf. Theory Workshop
  (ITW)}, Lausanne, Switzerland, Sep. 2012, pp. 577--581.

\bibitem{YDKP13-qsmimo}
------, ``Quasi-static multiple-antenna fading channels at finite
  blocklength,'' \emph{{IEEE} Trans. Inf. Theory}, vol.~60, no.~7, pp.
  4232--4265, July 2014.

\bibitem{molavianJazi13-10}
E.~Molavian{J}azi and J.~N. Laneman, ``On the second-order coding rate of
  non-ergodic fading channels,'' in \emph{Proc. Allerton Conf. Commun., Contr.,
  Comput.}, Monticello, IL, USA, Oct. 2013.

\bibitem{hoydis13}
\BIBentryALTinterwordspacing
J.~Hoydis, R.~Couillet, and P.~Piantanida, ``The second-order coding rate of
  the {MIMO} {R}ayleigh block-fading channel,'' \emph{{IEEE} Trans. Inf.
  Theory}, 2013, submitted. [Online]. Available:
  \url{http://arxiv.org/abs/1303.3400}
\BIBentrySTDinterwordspacing

\bibitem{vituri2013dispersion}
S.~Vituri, ``Dispersion analysis of infinite constellations in ergodic fading
  channels,'' \emph{arXiv preprint arXiv:1309.4638}, vol. M.S. Thesis, Tel Aviv
  University, 2013.

\bibitem{AL03}
A.~Lapidoth, ``On the high {SNR} capacity of stationary {G}aussian fading
  channels,'' in \emph{Proc. 2003 41st Allerton Conference}, vol.~41, no.~1,
  Allerton Retreat Center, Monticello, IL, USA, 2003, pp. 410--419.

\bibitem{yang2015minimum}
\BIBentryALTinterwordspacing
W.~Yang, G.~Durisi, and Y.~Polyanskiy, ``Minimum energy to send $ k $ bits over
  multiple-antenna fading channels,'' \emph{arXiv preprint}, July 2015.
  [Online]. Available: \url{http://arxiv.org/pdf/1507.03843v1.pdf}
\BIBentrySTDinterwordspacing

\bibitem{yang2014optimum}
W.~Yang, G.~Caire, G.~Durisi, and Y.~Polyanskiy, ``Optimum power control at
  finite blocklength,'' \emph{{IEEE} Trans. Inf. Theory}, vol.~61, no.~9, pp.
  4598 -- 4615, Sept. 2015.

\bibitem{YAGP16-bb}
W.~Yang, A.~Collins, G.~Durisi, Y.~Polyanskiy, and H.~V. Poor, ``A beta-beta
  achievability bound with applications,'' in \emph{Proc. 2016 IEEE Int. Symp.
  Inf. Theory (ISIT)}, Barcelona, Spain, Jul. 2016.

\bibitem{collins2014miso}
A.~Collins and Y.~Polyanskiy, ``Orthogonal designs optimize achievable
  dispersion for coherent miso channels,'' in \emph{Proc. IEEE Int. Symp. Inf.
  Theory (ISIT)}, Honolulu, Hawaii, July 2014, pp. 2524--2582.

\bibitem{CP16-mimobf-isit}
------, ``Dispersion of the coherent {MIMO} block-fading channel,'' in
  \emph{Proc. 2016 IEEE Int. Symp. Inf. Theory (ISIT)}, Barcelona, Spain, Jul.
  2016, pp. 1068--1072.

\bibitem{Spectre-github}
\BIBentryALTinterwordspacing
Spectre, ``{SPECTRE}: Short packet communication toolbox,'' 2015, {GitHub}
  repository. [Online]. Available: \url{https://github.com/yp-mit/spectre}
\BIBentrySTDinterwordspacing

\bibitem{BPS98}
E.~Biglieri, J.~Proakis, and S.~Shamai, ``Fading channels:
  Information-theoretic and communication aspects,'' \emph{{IEEE} Trans. Inf.
  Theory}, vol.~44, no.~6, pp. 2619--2692, Oct. 1998.

\bibitem{JR22}
J.~Radon, ``Lineare scharen orthogonaler matrizen,'' in \emph{Abh. Sem.
  Hamburg}, vol.~1.\hskip 1em plus 0.5em minus 0.4em\relax Springer, 1922, pp.
  1--14.

\bibitem{PW2016notes}
\BIBentryALTinterwordspacing
Y.~Polyanskiy and Y.~Wu, \emph{Lecture notes on Information Theory}, February
  2016. [Online]. Available:
  \url{http://people.lids.mit.edu/yp/homepage/data/itlectures_v4.pdf}
\BIBentrySTDinterwordspacing

\bibitem{SSPA00}
S.~Sandhu and A.~Paulraj, ``Space-time block codes: A capacity perspective,''
  \emph{IEEE Communications Letters}, vol.~4, no.~12, pp. 384--386, 2000.

\bibitem{CKv2}
I.~Csisz\'{a}r and J.~K\"{o}rner, \emph{Information Theory: Coding Theorems for
  Discrete Memoryless Systems}.\hskip 1em plus 0.5em minus 0.4em\relax
  Cambridge, UK: Cambridge University Press, 2011.

\bibitem{YP10}
Y.~Polyanskiy, ``Channel coding: non-asymptotic fundamental limits,'' Ph.D.
  dissertation, Princeton Univ., Princeton, NJ, USA, 2010, available:
  http://www.princeton.edu/\textasciitilde{}ypolyans.

\bibitem{molavianjazi2015second}
E.~MolavianJazi and J.~N. Laneman, ``A second-order achievable rate region for
  {G}aussian multi-access channels via a central limit theorem for functions,''
  \emph{{IEEE} Trans. Inf. Theory}, vol.~61, no.~12, pp. 6719--6733, 2015.

\bibitem{billingsley2013convergence}
P.~Billingsley, \emph{Convergence of probability measures}.\hskip 1em plus
  0.5em minus 0.4em\relax John Wiley \& Sons, 2013.

\bibitem{liang2003orthogonal}
X.-B. Liang, ``Orthogonal designs with maximal rates,'' \emph{{IEEE} Trans.
  Inf. Theory}, vol.~49, no.~10, pp. 2468--2503, 2003.

\bibitem{hamedani1975determination}
G.~Hamedani and M.~Tata, ``On the determination of the bivariate normal
  distribution from distributions of linear combinations of the variables,''
  \emph{American Mathematical Monthly}, pp. 913--915, 1975.

\bibitem{atiyah1969introduction}
M.~F. Atiyah and I.~G. Macdonald, \emph{Introduction to commutative
  algebra}.\hskip 1em plus 0.5em minus 0.4em\relax Addison-Wesley Reading,
  1969, vol.~2.

\end{thebibliography}

\end{document}